\title{Contracting to a Longest Path in $H$-Free Graphs}
\author{Walter Kern\inst{1} \and Dani\"el Paulusma\inst{2}\thanks{Supported by Research Project Grant RPG-2016-258 from the Leverhulme Trust.}}
\institute{Department of Applied Mathematics, University of Twente, The Netherlands
\email{w.kern@twente.nl}
\and
Department of Computer Science, Durham University, Durham, UK
\email{daniel.paulusma@durham.ac.uk}}
\newtheorem{oproblem}{Open Problem}
\DeclareMathOperator{\dist}{dist}
\newcommand{\problemdef}[3]{
	\begin{center}
		\begin{boxedminipage}{.99\textwidth}
			\textsc{{#1}}\\[1pt]  
			\begin{tabular}{ r p{0.8\textwidth}}
				\textit{~~~~Instance:} & {#2}\\
				\textit{Question:} & {#3}
			\end{tabular}
		\end{boxedminipage}
	\end{center}
}
\newcommand{\NP}{{\sf NP}}
\newcommand\displaycase[1]{\textcolor{darkgray}{\sffamily\bfseries\mathversion{bold}#1}}
\newcounter{ctrclaim}[theorem]
\newcounter{ctrcase}[theorem]
\newcommand{\clm}[1]{\setcounter{ctrcase}{0}\medskip\phantomsection\refstepcounter{ctrclaim}\noindent\displaycase{Claim \thectrclaim. }{\em #1}\\}
\newcommand \dia{\hfill{$\diamond$}}
\begin{document}
\maketitle
\setcounter{footnote}{0}

\begin{abstract}
We prove two dichotomy results for detecting long paths as patterns in a given graph.
The \NP-hard problem {\sc Longest Induced Path} is to determine the longest induced path in a graph.
The \NP-hard problem {\sc Longest Path Contractibility} is to determine the longest path to which a graph can be contracted to.
By combining known results with new results we completely classify the computational complexity of both problems for $H$-free graphs. Our main focus is on the second problem, for which we design a general contractibility technique that enables us to reduce the problem to a matching problem.
\end{abstract}

\section{Introduction}\label{s-intro}
The {\sc Hamiltonian Path} problem, which is to decide if a graph has a hamiltonian path, is one of the best-known problems in Computer Science and Mathematics. A more general variant of this problem is that of determining the length of a longest path in a graph. Its decision version {\sc Longest Path} is equivalent to deciding if a graph can be modified into the $k$-vertex path $P_k$ for some given integer~$k$ by using vertex and edge deletions. Note that an alternative formulation of {\sc Hamilton Path} is that of deciding if a graph can be modified into a path (which must be $P_n$) by using only edge deletions.
As such, these problems belong to a wide range of graph modification problems where we seek to modify a given graph~$G$ into some graph~$F$ from some specified family of graphs ${\cal F}$ by using some prescribed set of graph operations.
As {\sc Hamiltonian Path} is \NP-complete (see~\cite{GJ79}), {\sc Longest Path} is \NP-complete as well. The same holds for the problem {\sc Longest Induced Path}~\cite{GJ79}, which is to decide if a graph~$G$ contains an induced path of length at least~$k$, that is, if $G$ can be modified into a path~$P_k$ for some given integer~$k$ by using only vertex deletions.

Here we mainly focus on the variant of the above two problems corresponding to another central graph operation, namely  
edge contraction. This variant plays a role in many graph-theoretic problems, in particular {\sc Hamilton Path}~\cite{HV78,HV81}. 
The \emph{contraction} of an edge~$uv$ of a graph~$G$ deletes the vertices $u$ and $v$ and replaces them by a new vertex made adjacent to precisely those vertices that were adjacent to~$u$ or~$v$ in~$G$ (without introducing self-loops or multiple edges).
A graph~$G$ contains a graph~$G'$ as a {\it contraction} if~$G$ can be modified into~$G'$ 
by a sequence of edge contractions. 

\problemdef{Longest Path Contractibility}{a connected graph $G$ and a positive integer~$k$.}{does $G$ contain $P_k$ as a contraction?}
The {\sc Longest Path Contractibility} problem is \NP-complete as well~\cite{BV87}. Due to the computational hardness of {\sc Longest Path}, {\sc Longest Induced Path} and {\sc Longest Path Contractibility} it is natural to restrict the input to special graph classes. We briefly discuss some known complexity results for the three problems under input restrictions.

A common property of most of the studied graph classes is that they are  {\it hereditary}, that is, they are closed under vertex deletion. As such, they can be characterized by a family of forbidden induced subgraphs. In particular, a graph is {\it $H$-free} if it does not contain a graph~$H$ as an induced subgraph, and a graph class is {\it monogenic} if it consists of all $H$-free graphs for some graph $H$.
Hereditary graph classes defined by a small number of forbidden induced subgraphs, such as monogenic graph classes, are well studied, as evidenced by studies on (algorithmic and structural) decomposition theorems (e.g. for bull-free graphs~\cite{Ch12} or claw-free graphs~\cite{CS05,HMLW11}) and surveys for specific graph problems (e.g. for {\sc Colouring}~\cite{GJPS17,RS04}).

All the known \NP-hardness results for {\sc Hamiltonian Path} carry over to {\sc Longest Path}. For instance, it is known that
{\sc Hamiltonian Path} is \NP-complete for chordal bipartite graphs and strongly chordal split graphs~\cite{Mu96}, line graphs~\cite{Be81} and planar graphs~\cite{GJT76}. Unlike for {\sc Hamiltonian Path}, there are only a few hereditary graph classes for which the {\sc Longest Path} problem is known to be polynomial-time solvable; see, for example~\cite{UU07}. 
In particular, 
{\sc Longest Path} is polynomial-time solvable for circular-arc graphs~\cite{MB14}, distance-hereditary graphs~\cite{GHK13},
 and cocomparability graphs~\cite{IN13,MC12}.
The latter result generalized the corresponding results for bipartite permutation graphs~\cite{UV07} and interval graphs~\cite{IMN11}.
The few graph classes for which the {\sc Longest Induced Path} problem is known to be polynomial-time solvable include the classes of $k$-chordal graphs~\cite{Ga02,IOY08}, AT-free graphs~\cite{KMT03}, graphs of bounded clique-width~\cite{CMR00} (see also~\cite{KMT03}) and graphs of bounded mim-width~\cite{JKT17}. Finding a longest induced path in an $n$-dimensional hypercube is known as the {\sc Snake-in-the-Box} problem~\cite{Ka58}, which has been well studied.\footnote{The complexity status of {\sc Snake-in-the Box}  is still open. A table of world records for small values of $n$ can be found at http://ai1.ai.uga.edu/sib/sibwiki/doku.php/records.}

Unlike the {\sc Longest Path} and {\sc Longest Induced Path} problems, {\sc Longest Path Contractibility} is \NP-complete even for {\it fixed}~$k$ (that is,~$k$ is not part of the input). In order to explain this,  
let $F$-{\sc Contractibility} be the problem of deciding if a graph~$G$ contains some fixed graph~$F$ as a contraction. 
The complexity classification of $F$-{\sc Contractibility} is still open (see~\cite{BV87,LPW08,LPW08b,HKPST12}), but
Brouwer and Veldman~\cite{BV87} showed that already $P_4$-{\sc Contractibility} and $C_4$-{\sc Contractibility} are  \NP-complete (where $C_k$ denotes the $k$-vertex cycle). In fact, $P_4$-{\sc Contractibility} problem is \NP-complete even for $P_6$-free graphs~\cite{HPW09}, whereas
Heggernes et al.~\cite{HHLP14} showed that $P_6$-{\sc Contractibility} is \NP-complete for bipartite graphs.\footnote{In~\cite{HHLP14}, the problem is equivalently formulated as a graph modification problem: can a graph $G$ be modified into a graph~$F$ from some specified family~${\cal F}$ by using at most $\ell$ edge contractions for some given integer~$\ell\geq 0$? We refer to for instance~\cite{ALSZ17,AST17,AH83,BGHP14,CG13,Ep09,GHP13,GM13,HHLLP14,HHLP13,LMS13,MOR13,WAN81,WAN83}, for both classical and fixed-parameter tractibility results for various 
families~${\cal F}$ including the family of paths, which form the focus in this paper.}
The latter result was improved to $k=5$ in~\cite{DP17}. Moreover, $P_7$-{\sc Contractibility} is \NP-complete for line graphs~\cite{FKP13}. Hence, {\sc Longest Path Contractibility} is \NP-complete for all these graph classes as well. 
On the positive side, {\sc Longest Path Contractibility} is polynomial-time solvable for $P_5$-free graphs~\cite{HPW09}.

Our interest in the {\sc Longest Induced Path} problem also stems from a close relationship to a vertex partition problem, which played  a central role in the graph minor project of Robertson and Seymour~\cite{RS95}, as we will explain.

\subsection*{Our Results}

We first give a dichotomy for {\sc Longest Induced Path} using known results for {\sc Hamiltonian Path} and some straightforward observations (see Section~\ref{s-pre} for a proof). Our main result is a dichotomy for {\sc Longest Path Contractibility}. We use `+' to denote the disjoint union of two graphs, and a {\em linear forest} is the disjoint union of one or more paths.

\begin{theorem}\label{t-main0}
Let $H$ be a graph. If $H$ is a linear forest, then {\sc Longest Induced Path} restricted to $H$-free graphs is polynomial-time solvable; otherwise it is \NP-complete.
\end{theorem}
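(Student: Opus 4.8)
The plan is to prove both directions of the dichotomy, treating the polynomial-time side by a boundedness argument and the \NP-completeness side by two reductions from {\sc Hamiltonian Path}, which is \NP-complete. First I would handle the easy direction. Suppose $H$ is a linear forest with components $P_{a_1},\dots,P_{a_c}$. Laying these component paths one after another along a single path, separated by one omitted vertex between consecutive blocks, exhibits $H$ as an induced subgraph of $P_t$ with $t=|V(H)|+c-1$ (the omitted vertices guarantee that the last vertex of one block and the first of the next are non-adjacent). Hence every $H$-free graph is $P_t$-free, and a $P_t$-free graph has no induced path on $t$ or more vertices. So on $H$-free graphs the longest induced path has fewer than $t$ vertices, a constant. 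The algorithm answers ``no'' whenever $k\ge t$, and otherwise tests all $\binom{n}{k}$ vertex subsets of size $k\le t-1$ for inducing a $P_k$, running in time $O(n^{t-1})$.

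For the hardness direction, suppose $H$ is not a linear forest. Then either $H$ is a forest with a vertex of degree at least $3$, or $H$ contains a cycle. In the first case, that vertex together with three of its neighbours induces a claw (no two neighbours are adjacent, since $H$ has no cycle), so $K_{1,3}$ is an induced subgraph of $H$ and consequently every claw-free graph is $H$-free. It therefore suffices to prove {\sc Longest Induced Path} \NP-complete on claw-free graphs, and here line graphs give a clean reduction: line graphs are claw-free, and an induced path on $m$ vertices in the line graph $L(G)$ corresponds exactly to a path on $m+1$ vertices in $G$ (consecutive edges of the path share an endpoint, non-consecutive ones do not). Thus $G$ has a Hamiltonian path if and only if $L(G)$ contains an induced $P_{n-1}$, which settles this case.

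In the remaining case $H$ contains a cycle, whose length $g$ satisfies $g\le|V(H)|$, so every graph of girth greater than $|V(H)|$ is $H$-free, as it cannot contain $H$ even as a subgraph. It then suffices to show {\sc Longest Induced Path} \NP-complete on graphs of arbitrarily large girth. Given a connected graph $G$ on $n$ vertices, I would subdivide each edge $s$ times to form $G_s$, whose girth is at least $3(s+1)$; choosing $s$ so that $3(s+1)>|V(H)|$ makes $G_s$ $H$-free. A path on $p$ vertices in $G$ lifts to an induced path on $p+(p-1)s$ vertices in $G_s$, and conversely every induced path of $G_s$ projects to a path of $G$. The main obstacle is the exact length bookkeeping: an induced path in $G_s$ can dangle partway into unused subdivided edges at its two ends, so the longest induced path may exceed the lift of the longest path of $G$ by a bounded additive amount. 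I would eliminate this ambiguity by reducing from the (equally \NP-complete) variant of {\sc Hamiltonian Path} with two prescribed endpoints and attaching to each prescribed endpoint, before subdividing, a pendant path long enough to dominate any rival dangling extension. This forces an optimal induced path to terminate at the two pendant tips and to traverse $G$ from one prescribed endpoint to the other, so that a Hamiltonian path corresponds to an induced path of a single determined length $k$, completing the reduction and hence the theorem.
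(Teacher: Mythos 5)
Your proposal is correct, and two of its three parts coincide with the paper's own proof: the polynomial-time side (every linear forest $H$ is an induced subgraph of some $P_t$, so $H$-free graphs have no induced path on $t$ vertices and brute force in $O(n^{t-1})$ time suffices) and the degree-three case (reduce {\sc Hamiltonian Path} to {\sc Longest Induced Path} on line graphs via the exact correspondence between paths on $m+1$ vertices in $G$ and induced paths on $m$ vertices in $L(G)$) are the paper's arguments almost verbatim. The genuine difference is the case where $H$ has a cycle. The paper subdivides each edge exactly \emph{once} and asks for an induced path of length $2n-2$: with a single subdivision the dangling phenomenon you worry about cannot arise, because a dangling subdivision vertex at the end of an induced path either has its second endpoint on the path (contradicting inducedness) or off the path, in which case that endpoint is the unique uncovered vertex and a Hamiltonian path is recovered; large girth is then obtained by repeating the subdivision trick. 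Your construction instead subdivides $s$ times in one shot, and your worry is well founded: for $s\geq 3$ the naive target count genuinely fails, since an induced path can gain up to $s-1$ extra vertices at each end by running into a subdivided chord whose far end lies on the path, so graphs with no Hamiltonian path can reach the target. Your repair --- reducing from {\sc Hamiltonian Path} with two prescribed endpoints and hanging pendant paths on them --- is sound, and in fact works for pendant paths of any positive length, for a crisper reason than ``length domination'': after subdivision the pendant tips still have degree one, so they can never support a dangling extension and cannot be internal vertices of the projected path, hence each must be an endpoint of that path or the single uncovered vertex (and both tips cannot be the latter); a short count then forces the projection to be a Hamiltonian path of the augmented graph, which necessarily runs tip-to-tip and so restricts to a Hamiltonian path of $G$ between the prescribed endpoints. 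As for what each approach buys: the paper's single-subdivision equivalence is essentially case-free, but its girth amplification by iterated subdivision is delicate, because induced paths in an iterated subdivision project only to \emph{non-induced} paths one level down; your one-shot construction takes more bookkeeping to verify but reaches any prescribed girth directly and is immune to exactly that trap, at the modest cost of invoking the standard NP-completeness of {\sc Hamiltonian Path} with prescribed endpoints.
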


\begin{theorem}\label{t-main}
Let $H$ be a graph. If $H$ is an induced subgraph of  $P_2+P_4$, $P_1+P_2+P_3$, $P_1+P_5$ or $sP_1+P_4$ for some $s\geq 0$, then {\sc Longest Path Contractibility} restricted to $H$-free graphs is polynomial-time solvable; otherwise it is \NP-complete.
\end{theorem}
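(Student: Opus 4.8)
The plan is to prove the dichotomy by separately establishing the polynomial-time algorithms for the four families of ``easy'' graphs $H$ and the \NP-completeness for every other $H$, working throughout with the standard reformulation of contraction in terms of \emph{witness structures}: $G$ contracts to $P_k$ if and only if $V(G)$ can be partitioned into an ordered sequence of $k$ nonempty \emph{bags} $B_1,\dots,B_k$, each inducing a connected subgraph, with an edge between $B_i$ and $B_j$ precisely when $|i-j|=1$. Thus {\sc Longest Path Contractibility} asks for the largest $k$ admitting such a path-shaped partition, and I will reason about this quotient description in both directions.

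For the hardness direction I would first dispose of all $H$ that are not linear forests. If some vertex $v$ of $H$ has degree at least three, then either two neighbours of $v$ are adjacent, giving a triangle, or the neighbourhood of $v$ is independent, giving an induced claw $K_{1,3}$; if instead $H$ contains a cycle, that cycle is either odd, so that $H$ is non-bipartite, or an induced even cycle. In the claw case the class of line graphs is contained in the class of $H$-free graphs, and in the non-bipartite case the class of all bipartite graphs is contained in the class of $H$-free graphs, so the known \NP-completeness of $P_7$-{\sc Contractibility} on line graphs and of $P_5$-{\sc Contractibility} on bipartite graphs at once yields \NP-completeness of {\sc Longest Path Contractibility} on $H$-free graphs. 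The only leftover non-forest obstructions are the short induced even cycles $C_4$ and $C_6$ (any longer even cycle contains an induced $P_6$ and is handled below), which I would treat by sharpening the bipartite reduction so that its output has girth larger than six and is therefore $\{C_4,C_6\}$-free.

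It remains to handle $H$ that is a linear forest yet is not an induced subgraph of any of $P_2+P_4$, $P_1+P_2+P_3$, $P_1+P_5$ or $sP_1+P_4$. Every such $H$ contains, as an induced subgraph, one of a finite list of minimal forbidden linear forests, such as $P_6$, $P_2+P_5$, $P_3+P_3$ and $P_1+P_2+P_4$, so it suffices to prove hardness for each of these. The case $P_6$ is already covered, since $P_6$-free graphs form a hard subclass through $P_4$-{\sc Contractibility}. For the genuinely new cases I would design, for each minimal forbidden linear forest $F$, a polynomial reduction from a suitable \NP-complete contractibility problem whose output is guaranteed to be $F$-free; the natural device is to attach the gadgets to a small dominating clique, or to blow up the construction, so that no induced copy of $F$ — which needs several pairwise far-apart components — can arise. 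I expect this bookkeeping of forbidden induced subgraphs to be the more delicate part of the hardness side.

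For the polynomial-time side, which is the heart of the result, I would exploit that each of the four classes is $H$-free for a short linear forest and therefore admits no long induced path accompanied by extra independent structure. Fixing a target class, the key structural lemma I would aim to prove is that a longest witness structure can be chosen so that its non-singleton bags, together with the first and last few bags, are confined to a region of bounded size that can be enumerated in polynomial time. Having guessed this bounded ``core'', the task reduces to distributing the remaining vertices over the singleton bags of the spine so as to keep every bag connected, create no edge between non-consecutive bags, and maximise the number of bags; I would encode the admissible placements as a bipartite graph and read off an optimal solution from a \textbf{maximum matching} computation, exactly as announced in the abstract. The main obstacle, and the place where the structure of each specific $H$ must be invoked, is the bounded-core lemma: one has to show that $H$-freeness rules out the long, loosely attached configurations that would otherwise force unboundedly many non-singleton bags, and then verify that what is left is genuinely a matching problem rather than something harder.
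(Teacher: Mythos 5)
Your proposal has genuine gaps on both sides, and the more serious one is on the polynomial-time side: the structural premise underlying your algorithm is false for these classes. If $H$ is any of the four listed linear forests, then every $H$-free graph $G$ is $P_r$-free for some $r\leq 8$, and a graph that contracts to $P_k$ contains an induced $P_k$ (a shortest path from the first to the last witness bag passes through all bags in order). Hence the target path here has at most $7$ vertices: there is no ``long spine of mostly singleton bags'' over which leftover vertices can be distributed, and a witness structure whose non-singleton bags occupy a bounded region would force $|V(G)|$ itself to be bounded. Even under the charitable reading that each bag should contain a bounded connected ``skeleton'' that dominates it, the claim fails: for $(P_2+P_4)$-free and $(P_1+P_2+P_3)$-free graphs there are yes-instances whose only solutions are \emph{independent} ones, in which both middle bags are independent sets covering $N(u)$ (resp.\ $N(v)$) via private neighbours, so that any connected subset making the bag connected must have size $\Omega(n)$. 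This is exactly why the paper's algorithm first tests $\alpha$-constant solutions by brute force and, when none exist, performs extensive branching (using suitable pairs, the distance-reduction lemma, and the Contraction Rule) to reach a star-cover problem, which is then solved by a bipartite matching between the neighbourhoods of $u,v$ and star centers in $T$. The matching you invoke ``as announced in the abstract'' thus enters in a completely different way than in your plan, and the true crux---$P_4$-{\sc Suitability}, i.e.\ a $2$-{\sc Disjoint Connected Subgraphs} instance with arbitrarily large terminal sets---is absent from your outline. (Your bounded-core intuition is correct only for the $P_1+P_5$ and $sP_1+P_4$ cases, where the paper indeed proves that constant solutions suffice.)

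On the hardness side your architecture (degree-$3$ vertex $\Rightarrow$ claw $\Rightarrow$ line graphs; cycles $\Rightarrow$ high-girth bipartite construction; remaining linear forests $\Rightarrow$ finite list of minimal obstructions) parallels the paper, but two concrete problems remain. First, your list of minimal forbidden linear forests is wrong: $P_2+P_5$ and $P_1+P_2+P_4$ are not minimal (the former contains an induced $3P_2$, the latter an induced $2P_1+2P_2$), and the genuine minimal obstructions $3P_2$ and $2P_1+2P_2$ are missing from your list. This matters because proving hardness only for your four graphs would not cover, e.g., $H=3P_2$, which contains none of them as an induced subgraph; the correct minimal set is $\{3P_2,\, 2P_1+2P_2,\, 2P_3,\, P_6\}$. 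Second, for these cases you only announce that suitable constructions ``should'' exist, whereas the paper's key observation is that no new gadget is needed at all: the classical Brouwer--Veldman reduction graph $G_{\cal H}$ for $P_4$-{\sc Contractibility} is already simultaneously $(2P_1+2P_2,3P_2,2P_3,P_6)$-free, which is verified by a careful but elementary case analysis. Likewise, the high-girth statement you defer (needed for $C_4$ and $C_6$, and in the paper for all $H$ containing a cycle) requires a nontrivial subdivision of $G_{\cal H}$ together with a full correctness proof that the subdivided graph contracts to $P_{2p}$ if and only if the hypergraph is $2$-colourable; this is a substantial piece of the paper, not a routine sharpening.
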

Theorem~\ref{t-main} shows that {\sc Longest Path Contractibility} is polynomial-time solvable for $H$-free graphs only for some specific linear forests~$H$. This is in contrast to the situation for {\sc Longest Induced Path}, as shown by Theorem~\ref{t-main0}.
 To extend the aforementioned results from~\cite{DP17,FKP13,HHLP14,HPW09} for {\sc Longest Path Contractibility} to the full classification given in Theorem~\ref{t-main} we do as follows.

First, in Section~\ref{s-poly}, we prove the four new polynomial-time solvable cases of Theorem~\ref{t-main}. In each of these cases
$H$ is a linear forest, and proving these cases requires the most of our analysis.\footnote{This is in line with research for other graph problems restricted to $H$-free graphs. In fact, classes of $H$-free graphs, where $H$ is a linear forest are still poorly understood. There is a whole range of graph problems, e.g. {\sc Independent Set}, {\sc $3$-Colouring}, {\sc Feedback Vertex Set}, {\sc Odd Cycle Transversal}, and {\sc Dominating Induced Matching},  for which it is not known if they are \NP-complete on
$P_k$-free graphs for some integer~$k$, such that they are \NP-complete on 
$P_k$-free graphs (see~\cite{BDFJP}).}
Every linear forest~$H$ is $P_r$-free for some suitable value of $r$ and $P_r$-free graphs do not contain $P_r$ as a contraction. Hence, it suffices to prove that for each $1\leq k\leq r-1$, the $P_k$-{\sc Contractibility} problem is polynomial-time solvable for $H$-free graphs for each of the four linear forests listed in Theorem~\ref{t-main}. 
In fact, as $P_3$-{\sc Contractibility} is trivial, we only have to consider the cases where $4\leq k\leq r-1$.
Our general technique for doing this is based on transforming an instance of $P_k$-{\sc Contractibility} for $k\geq 5$ into a polynomial number of instances of $P_{k-1}$-{\sc Contractibility} until $k=4$. 

For $k=4$ we cannot apply this transformation, as this case - as we outline below - is closely related to 
the $2$-{\sc Disjoint Connected Subgraphs} problem. This problem takes as input a triple $(G,Z_1,Z_2)$, where $G$ is a graph with two disjoint subsets~$Z_1$ and~$Z_2$ of~$V(G)$.
It asks if~$V(G)\setminus (Z_1\cup Z_2)$ has a partition into sets~$S_1$ and~$S_2$,  such that $Z_1\cup S_1$ and $Z_2\cup S_2$ induce connected subgraphs of~$G$. Robertson and Seymour~\cite{RS95} proved that the more general problem $k$-{\sc Disjoint Connected Subgraphs}  (for $k$ subsets $Z_i$) is polynomial-time solvable as long as the union of the sets $Z_i$ has constant 
size.\footnote{If every $Z_i$ has size~2, then we obtain the well-known {\sc $k$-Disjoint Paths} problem.}  
However, in our context, $Z_1$ and $Z_2$ may have arbitrarily large size. In that case, $2$-{\sc Disjoint Connected Subgraphs} is \NP-complete even if $|Z_1|=2$ (and only $Z_2$ is large)~\cite{HPW09}.

To work around this obstacle, we use the fact~\cite{HPW09} that the two outer vertices of the~$P_4$, to which the input graph~$G$ must be contracted, may correspond to single vertices $u$~and~$v$~of~$G$. We  then ``guess''  $u$ and $v$ to obtain an instance $(G-\{u,v\},N(u),N(v))$ of $2$-{\sc Disjoint Subgraphs}. That is, we seek for a partition of $(V(G)\setminus \{u,v\})\setminus ((N_u)\cup N(v))$ into sets $S_u$ and $S_v$, such that $N(u)\cup S_u$ and $N(v)\cup S_v$ are connected. The latter implies that we can contract these two sets to single vertices corresponding to the two middle vertices of the~$P_4$.

After guessing $u$ and $v$ we exploit their presence, together with the $H$-freeness of~$G$, for an extensive analysis of the structure of $S_u$ and $S_v$ of a potential solution $(S_u,S_v)$. To this end we introduce 
in Section~\ref{s-p4} some general terminology and first show how to check in general for solutions in which
the part of $S_u$ or $S_v$  that ensures connectivity of $N(u)\cup S_u$ or $N(v)\cup S_v$, respectively, has bounded size. 
We call such solutions constant. If we do not find 
a constant solution, then we exploit their absence. For the more involved cases we show that in this way we can branch to a polynomial number of instances of a standard matching problem.

In Section~\ref{s-hard} we prove the new \NP-completeness results. In particular, we prove that
$P_k$-{\sc Contractibility}, for some suitable value of $k$, is \NP-complete for bipartite graphs of large girth, strengthening the known result for bipartite graphs of~\cite{HHLP14}. 

In Section~\ref{s-classification} we show how to combine our new polynomial-time and \NP-hardness results with the known \NP-completeness results for $K_{1,3}$-free graphs~\cite{FKP13} and  $P_6$-free graphs~\cite{HPW09} in order to obtain Theorem~\ref{t-main}.

In Section~\ref{s-cycle}, we briefly discuss the cycle variant of our problem, called the
{\sc Longest Cycle Contractibility} problem~\cite{Bl82,Ha99,Ha02}.
Its complexity classification for $H$-free graphs is still incomplete, but we show that it differs from the classification of {\sc Longest Path Contractibility} for $H$-free graphs.

In Section~\ref{s-con} we pose some open problems. In particular, the complexity classification of {\sc Longest Path} is still open for $H$-free graphs, and we describe the state-of-art for this problem.

\section{Preliminaries}\label{s-pre}

In Section~\ref{s-gt} we give some general graph-theoretic terminology and a helpful lemma for $P_4$-free graphs.
In Section~\ref{s-t1} we give a short proof of Theorem~\ref{t-main0}.
In Section~\ref{l-et} we give some terminology related to edge contractions.

\subsection{General Terminology and a Lemma for $P_4$-Free Graphs}\label{s-gt}

We consider finite undirected graphs with no self-loops.
Let $G=(V,E$ be a graph. Let $S\subseteq V$. Then $G[S]=(S,\{uv\in E\; |\; u,v\in S\})$ denotes the subgraph of $G$ {\it induced} by $S$. We say that $S$ is {\it connected} if $G[S]$ is connected. We may write $G-S=G[V\setminus S]$.
The \emph{neighbourhood} of $v\in V$ is the set $N(v)=\{u\; |\; uv\in E\}$ and the {\it closed neighbourhood} is
$N[v]=N(v)\cup \{v\}$.
The {\it length} of a  path~$P$ is its number of edges.
The {\it distance} $\dist_G(u,v)$ between vertices $u$ and $v$ is the length of a shortest path between them.
Two disjoint sets $S, T\subset V$ are {\it adjacent} if there is at least one edge between them; $S$ and~$T$ are
{\it (anti)complete} to each other if every vertex of $S$ is (non)adjacent to every vertex of~$T$. 
The set~$S$ {\it covers} $T$ if every vertex of $T$ has a neighbour in $S$. 
The {\it subdivision} of an edge~$e=uv$ in $G$ replaces $e$ by a new vertex $w$ and two new edges $uw$ and $wv$.

A graph $G$ is {\it $H$-free} for some other graph~$H$ if $G$ does not contain $H$ as an induced subgraph.
For a set $H_1,\ldots,H_p$ of graphs, $G$ is {\it $(H_1,\ldots,H_p)$-free} if $G$ is $H_i$-free for $i=1,\ldots,p$.
A graph is {\it complete bipartite} if it consists of a single vertex or its vertex set can be partitioned into two independent sets $A$ and $B$  that are complete to each other. The {\it claw} $K_{1,3}$ is the complete bipartite graph with $|A|=1$ and 
$|B|=3$.
The graph~$K_n$ is the complete graph on $n$ vertices.

The {\em disjoint union} $G_1+\nobreak G_2$ of two vertex-disjoint graphs~$G_1$ and~$G_2$ is the graph $(V(G_1)\cup V(G_2), E(G_1)\cup E(G_2))$; the disjoint union of~$r$ copies of a graph~$G$ is denoted~$rG$.
A {\it forest} is a graph with no cycles. 
A {\it linear forest} is a forest of maximum degree at most~2, that is, a disjoint union of one or more paths. 
The {\it join} operation~$\times$  adds an edge between every vertex of $G_1$ and every vertex of $G_2$.  
A graph $G$ is a {\it cograph} if $G$ can be generated from $K_1$ by a sequence of join and disjoint union operations.
A graph is a cograph if and only if it is $P_4$-free (see, e.g.,~\cite{BLS99}).
The following well-known lemma follows from this fact and the definition of a cograph. In particular, to prove that a connected $P_4$-free graph~$G$ has a spanning complete bipartite graph with partition classes $A$ and $B$, we can do as follows:
take the complement $\overline{G}=(V,\{uv\; |\; uv\not \in E\; \mbox{and}\; u\neq v\}$ of $G$ and put the vertex set of one connected component of~$\overline{G}$ in~$A$ and all the other vertices of $\overline{G}$ in~$B$.

\begin{lemma}\label{l-p4}
Every connected $P_4$-free graph on at least two vertices has a spanning complete bipartite subgraph, 
which can be found in polynomial time.
\end{lemma}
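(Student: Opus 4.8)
The plan is to prove Lemma~\ref{l-p4} directly from the characterization that a graph is a cograph if and only if it is $P_4$-free, together with the constructive recipe already outlined in the paragraph preceding the statement. Let $G=(V,E)$ be a connected $P_4$-free graph on at least two vertices. First I would pass to the complement $\overline{G}$ and argue that $\overline{G}$ is \emph{disconnected}. Indeed, if $\overline{G}$ were connected, then both $G$ and $\overline{G}$ would be connected cographs on at least two vertices; but a standard property of cographs (equivalently, of the cotree / the join--disjoint-union generation) is that for any cograph on at least two vertices, at least one of $G$ and $\overline{G}$ is disconnected. Since $G$ is assumed connected, this forces $\overline{G}$ to be disconnected.

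Given that $\overline{G}$ is disconnected, I would take any one connected component of $\overline{G}$ and let $A$ be its vertex set, and let $B=V\setminus A$ be the union of the vertex sets of all remaining components (which is nonempty precisely because $\overline{G}$ is disconnected). By the definition of the complement, vertices lying in different components of $\overline{G}$ are nonadjacent in $\overline{G}$, hence \emph{adjacent} in $G$. Therefore every vertex of $A$ is adjacent in $G$ to every vertex of $B$, i.e.\ $A$ and $B$ are complete to each other in $G$. This means $G$ contains a spanning complete bipartite subgraph with partition classes $A$ and $B$ (note $A$ and $B$ need not be independent sets in $G$, but the complete bipartite graph on classes $A,B$ is a subgraph of $G$ spanning all of $V$, which is exactly what the statement asks for).

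For the algorithmic claim, I would observe that each step is polynomial: forming $\overline{G}$ takes $O(|V|^2)$ time, and computing the connected components of $\overline{G}$ (e.g.\ by breadth-first or depth-first search) is linear in the size of $\overline{G}$, hence polynomial in $|V|$. Selecting one component as $A$ and taking the rest as $B$ is immediate. Thus a suitable partition $(A,B)$ can be found in polynomial time.

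The main obstacle, such as it is, is purely a matter of justifying the fact that a connected cograph on at least two vertices has a disconnected complement; everything else is routine. This is precisely the content encoded in the cotree representation of a cograph, where the root is either a join node (making $G$ itself decompose as a join, so $\overline{G}$ is disconnected) or a union node (making $G$ itself disconnected, contradicting our hypothesis). Since we are permitted to use the cograph characterization of $P_4$-free graphs stated earlier, I would cite this standard structural dichotomy rather than reprove it, and the lemma follows.
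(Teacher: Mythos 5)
Your proof is correct and follows essentially the same route as the paper, which sketches exactly this construction (complement the graph, take one connected component of $\overline{G}$ as $A$ and the rest as $B$) in the paragraph preceding the lemma. Your only addition is to spell out the justification that $\overline{G}$ is disconnected via the cotree/join dichotomy for cographs, which is precisely the ``well-known fact'' the paper invokes implicitly.
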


We remind the reader of the following notions.The {\it girth} of a graph $G$ that is not a forest is the number of vertices in a shortest induced cycle of $G$.
The {\it line graph} $L(G)$ of a graph $G=(V,E)$ has $E$ as vertex set and there is an edge between two vertices $e_1$ and $e_2$ of $L(G)$ if and only if $e_1$ and $e_2$ have a common end-vertex in $G$. Every line graph is readily seen to be $K_{1,3}$-free.

\subsection{The Proof of Theorem~\ref{t-main0}}\label{s-t1}

We now present a short proof for Theorem~\ref{t-main0}. We start with the following lemma.

\begin{lemma}\label{l-girthpath}
Let $p\geq 3$ be some constant. Then {\sc Longest Induced Path} is \NP-complete for graphs of girth at least~$p$.
\end{lemma}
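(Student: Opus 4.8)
The plan is to prove Lemma~\ref{l-girthpath} by a reduction from the \NP-complete {\sc Longest Induced Path} problem on general graphs, using the standard trick of subdividing edges to increase the girth while controlling how the length of a longest induced path changes. Fix the constant $p\geq 3$. Given an arbitrary graph~$G$ and integer~$k$, I would subdivide every edge of~$G$ the same number of times, say $t$ times, to obtain a graph~$G_t$; that is, each edge $uv$ is replaced by a path on $t$ internal vertices between~$u$ and~$v$. The key point is that subdivision strictly increases girth: if~$G$ has a cycle of length~$\ell$, then after subdividing each edge $t$ times the corresponding cycle in~$G_t$ has length $\ell(t+1)$, so by choosing $t$ large enough (depending only on the constant~$p$) we can guarantee that $G_t$ has girth at least~$p$. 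Since $t$ is a constant, $G_t$ has size polynomial in the size of~$G$, so the reduction runs in polynomial time.

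The main technical step is to understand the relationship between longest induced paths in~$G$ and in~$G_t$. I would argue that an induced path in~$G$ with $m$ edges corresponds to an induced path in~$G_t$ with $m(t+1)$ edges (walking along the subdivided edges), and conversely, that a longest induced path in~$G_t$ essentially traces out an induced path in~$G$: its endpoints can be taken to be original vertices of~$G$, and between consecutive original vertices it must follow an entire subdivided edge. The care here is that an induced path in~$G_t$ could in principle end partway along a subdivided edge, or could use a subdivision vertex as an endpoint; I would handle these boundary effects by observing that extending or truncating to land on original vertices changes the length by at most a constant amount, and by noting that consecutive original vertices visited by the induced path must be adjacent in~$G$ with no chords (precisely because the path is induced in~$G_t$, forbidding shortcuts through other subdivided edges). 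This lets me translate the threshold~$k$ for~$G$ into a threshold~$k'$ for~$G_t$, so that $G$ has an induced path of length at least~$k$ if and only if $G_t$ has an induced path of length at least~$k'$.

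I expect the main obstacle to be exactly this correspondence between induced paths under subdivision, rather than the girth computation, which is immediate. The subtlety is that ``induced'' is a global condition: one must verify that subdividing does not create new induced paths that shortcut across the original graph's structure, and that the inducedness of a path in~$G_t$ faithfully reflects inducedness in~$G$ (and vice versa). Establishing the clean length formula $m \mapsto m(t+1)$ in both directions, together with a tight control of the additive constant loss at the endpoints, is where the argument needs to be precise. Once that correspondence is pinned down, the reduction and its correctness follow directly, and since {\sc Longest Induced Path} is \NP-complete on general graphs~\cite{GJ79}, the lemma follows.
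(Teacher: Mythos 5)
Your reduction has a genuine gap, and it sits exactly where you anticipated the difficulty: the correspondence between induced paths in $G$ and in $G_t$ fails in the backward direction. You claim that a longest induced path in $G_t$ traces out an \emph{induced} path in $G$ because inducedness in $G_t$ forbids ``shortcuts through other subdivided edges''. This is not so: if a path $u_1u_2\cdots u_m$ of $G$ has a chord $u_iu_j$, that chord is replaced in $G_t$ by a path of length $t+1\geq 2$ whose internal vertices do \emph{not} lie on the lifted path, so the lifted path is still induced in $G_t$. Inducedness only constrains adjacencies among vertices \emph{on} the path, and after subdivision the chord contributes no such adjacency. Concretely, take $G=K_3$ and $t=1$: then $G_t$ is a $6$-cycle, the longest induced path in $G$ has one edge, yet $G_t$ has an induced path with four edges, contradicting your formula $m\mapsto m(t+1)$ even up to additive constants. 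More dramatically, for $G=K_n$ the longest induced path in $G$ has one edge while $G_t$ has an induced path with $(n-1)(t+1)$ edges. So no threshold translation $k\mapsto k'$ can make the reduction correct: subdivision relates induced paths of $G_t$ to \emph{arbitrary} paths of $G$, not to induced ones.

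This very phenomenon is what the paper exploits, and it points to the fix: reduce from {\sc Hamiltonian Path} (or {\sc Longest Path}) rather than from {\sc Longest Induced Path}. In the paper's proof, each edge of $G$ is subdivided once; since any path of $G$ --- chords or not --- lifts to an induced path of the subdivision $G'$, the graph $G$ has a Hamiltonian path if and only if $G'$ has an induced path of length $2n-2$ (the backward direction requires a short argument, using that $G'$ is bipartite, about where an induced path on $2n-1$ vertices can start and end). Repeating the subdivision a constant number of times doubles the girth each time and yields girth at least~$p$. Your girth computation, the polynomiality of the reduction, and the forward lifting are all fine; only the source problem of the reduction must change.
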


\begin{proof}
We reduce from {\sc Hamiltonian Path}. Let $G$ be a graph on $n$ vertices. We subdivide each 
edge~$e$ of $G$ exactly once and denote the set of new vertices $v_e$ by $V'$.
We denote the resulting graph by $G'$ and note that $G'$ is bipartite with partition classes $V$ and $V'$.
We claim that $G$ has a Hamiltonian path if and only if $G'$ has an induced path of length $2n-2$.

First suppose that $G$ has a Hamiltonian path $u_1u_2\cdots u_n$. Then the path on vertices $u_1, v_{u_1u_2}, u_2,
\ldots, v_{u_{n-1}u_n}, u_n$ is an induced path of length~$2n-2$ in $G'$.
Now suppose that $G'$ has an induced path $P'$ of length $2n-2$. Then either $P'$ starts and finished with a vertex of $V$, or $P'$
starts and finishes with a vertex of $V'$. In the first case $P'$ contains $n$ vertices of $G$, so $P$ contains all vertices $u_1,\ldots,u_n$ of $G$, say in this order. Then $u_1u_2\cdots u_n$ is a Hamiltonian path of $G$.
In the second case $P'$ contains $n-1$ vertices of $V$, say vertices $u_1,\ldots,u_{n-1}$ in that order. As $P'$ is an induced path and vertices of $V'$ are only adjacent to vertices of $V$, this means that the end-vertices of $P'$ are both adjacent to $u_n$.
Hence, we find that $u_1u_2\cdots u_n$ is a Hamiltonian path of $G$  (and the same holds for $u_nu_1\cdots u_{n-1}$).

We note that the girth of $G'$ is twice the girth of $G$. Hence, we obtain the result by applying this trick sufficiently many times. \qed
\end{proof}

We also need the following lemma.

\begin{lemma}\label{l-line}
The {\sc Longest Induced Path} problem is \NP-complete for line graphs.
\end{lemma}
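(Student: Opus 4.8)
The plan is to reduce from \textsc{Hamiltonian Path} using the line graph operation itself. Given a graph $G$ on $n$ vertices, I would simply output $L(G)$, which is computable in polynomial time, together with the target of finding an induced path on at least $n-1$ vertices (equivalently, of length at least $n-2$). Membership in \NP\ is immediate, so the crux is to show that $G$ has a Hamiltonian path if and only if $L(G)$ has an induced path on at least $n-1$ vertices.

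The central observation I would establish is a correspondence between induced paths of $L(G)$ and \emph{simple} paths of $G$. If $v_0 v_1 \cdots v_m$ is a path in $G$ with $m\geq 1$ edges, then the edges $v_0v_1, v_1v_2, \ldots, v_{m-1}v_m$ are pairwise distinct vertices of $L(G)$; consecutive ones share an endpoint and are therefore adjacent in $L(G)$, while any two edges $v_{i-1}v_i$ and $v_{j-1}v_j$ with $j\geq i+2$ involve four distinct vertices of $G$ and hence are non-adjacent in $L(G)$. Thus these $m$ edges induce a path on $m$ vertices in $L(G)$.

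For the converse I would take an induced path $e_1 e_2 \cdots e_m$ of $L(G)$ and write it as a walk $v_0 v_1 \cdots v_m$ in $G$, where $e_i = v_{i-1}v_i$. I would then argue that the vertices $v_0, \ldots, v_m$ are pairwise distinct: a repetition would force either two of the $e_i$ to coincide or two non-consecutive edges of the walk to share an endpoint, in both cases contradicting that $e_1\cdots e_m$ is an induced path. Hence the walk is a simple path on $m+1$ vertices using $m$ edges. The point worth stressing---and the step I expect to require the most care---is that we only need the walk to be a \emph{simple} path in $G$, not an \emph{induced} one: a chord $v_av_b$ of $G$ with $b\geq a+2$ is an edge of $G$ that is not among $e_1,\ldots,e_m$, so it places no constraint on the induced path living in $L(G)$. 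This is exactly why induced paths in $L(G)$ track simple paths, rather than induced paths, of $G$.

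Combining the two directions, $L(G)$ has an induced path on $m$ vertices if and only if $G$ has a simple path with $m$ edges. Since a simple path of $G$ has at most $n-1$ edges, with equality precisely when it is a Hamiltonian path, it follows that $L(G)$ has an induced path on $n-1$ vertices if and only if $G$ has a Hamiltonian path. As \textsc{Hamiltonian Path} is \NP-complete, this proves the lemma.
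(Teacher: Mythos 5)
Your proof is correct and takes essentially the same approach as the paper: both reduce from \textsc{Hamiltonian Path} by constructing $L(G)$ and exploiting the fact that induced paths in $L(G)$ correspond exactly to simple (not induced) paths in $G$. Your write-up merely spells out this correspondence in more detail than the paper's terse version, which phrases it as ``no two edges $e_i$ and $e_j$ with $i<j$ share a vertex unless $j=i+1$.''
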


\begin{proof}
We reduce from {\sc Hamiltonian Path}. Let $G=(V,E)$ be a graph on $n$ vertices. We construct the line graph $L(G)$ of $G$. We claim that $G$ has a Hamiltonian path if and only if $L(G)$ has an induced path on $n-1$ vertices.
First suppose that $P$ is a Hamiltonian path in $G$. Then the edges of $P$ form an induced path of length
$n-1$ in  $L(G)$. Now suppose that $L(G)$ has an induced path~$\tilde{P}$ 
on $n-1$ vertices.
 Let $e_1, \dots, e_{n-1}$ be the $n-1$ edges of $\tilde{P}$ in that order. 
 As $\tilde{P}$ is induced in 
 $L(G)$, no two edges $e_i$ and $e_j$
with $i<j$ have a vertex $v \in V$ in common unless $j=i+1$. 
Hence, $P=\{e_1, \dots, e_{n-1}\}$ must be a Hamiltonian path in $G$. \qed
\end{proof}

We are now ready to prove Theorem~\ref{t-main0}.

\medskip
\noindent
{\bf Theorem~\ref{t-main0}. (restated)}
{\it Let $H$ be a graph. If $H$ is a linear forest, then
{\sc Longest Induced Path} restricted to $H$-free graphs is polynomial-time solvable; otherwise it is \NP-complete.}

\begin{proof}
Let $G$ be an $H$-free graph.
First suppose that $H$ is a linear forest. Then there exists a constant~$k$ such that $H$ is an induced subgraph of $P_k$.
This means that the length of a longest induced path of $G$ is at most $k-1$. Hence, we can determine a longest path in $G$ in 
$O(n^{k-1})$ time by brute force.

Now suppose that $H$ is not a linear forest. First assume that $H$ contains a cycle.  
Let $g$ be the girth of $H$. We set $p=g+1$. Then the class of $H$-free graphs contains the class
of graphs of girth at least~$p$. Hence, we can use Lemma~\ref{l-girthpath} to find that {\sc Longest Induced Path} is \NP-complete for $H$-free graphs. 
Now assume that $H$ contains no cycle. As $H$ is not a linear forest, $H$ must be a forest with at least one vertex of degree at least~3. Then the class of $H$-free graphs contains the class of $K_{1,3}$-free graphs. Recall that every line graph is $K_{1,3}$-free.
Hence, the class of line graphs is contained in the class of $H$-free graphs. Then we can use Lemma~\ref{l-line} to find that {\sc Longest Induced Path} is \NP-complete for $H$-free graphs. \qed
\end{proof}

\subsection{Terminology Related to Edge Contractions}\label{l-et}
Recall that the contraction of an edge~$uv$ of a graph~$G$ is the operation that deletes $u$ and $v$ from $G$ and replaces them by a new vertex made adjacent to precisely those vertices that were adjacent to~$u$ or~$v$ in~$G$ (without introducing self-loops or multiple edges).
We denote the graph obtained from a graph $G$ by contracting $e=uv$ by $G/e$. We may denote the resulting vertex by $u$ (or $v$) again and say that we {\it contracted $e$} on~$u$ (or~$e$ on~$v$). 

Recall also that a graph $G$ contains a graph $H$ as a contraction if $G$ can be modified into $H$ via a sequence of edge contractions. Alternatively, a graph~$G$ contains a graph~$H$ as a contraction if and only if for every vertex~$x\in V(H)$ there exists a nonempty subset $W(x)\subseteq V(G)$ of vertices in~$G$ such that:
\begin{itemize}
\item [(i)] $W(x)$ is connected;  
\item [(ii)] the set ${\cal W}=\{W(x)\; |\; x\in V_H\}$ is a partition of~$V(G)$; and
\item [(iii)] for every $x_i,x_j\in V(H)$, $W(x_i)$ and~$W(x_j)$ are adjacent in $G$ if and only if~$x_i$ and~$x_j$ are adjacent in~$H$.
\end{itemize}
By contracting the vertices in each~$W(x)$ to a single vertex we obtain the graph~$H$.
The set~$W(x)$ is called an 
$H$-{\it witness bag} of~$G$ for~$x$. The set~${\cal W}$ is called an {\it $H$-witness structure} of~$G$ (which does not have to be unique).
A pair of 
(non-adjacent) vertices $(u,v)$ of a graph~$G$ is $P_k$-\emph{suitable} for some integer $k\geq 3$ if and only if~$G$ has a $P_k$-witness structure~${\cal W}$ with $W(p_1)=\{u\}$ and 
$W(p_k)=\{v\}$, where $P_k=p_1\dots p_k$; see Figure~\ref{f-p4witness} for an example.

\begin{figure}
  \centering
  \includegraphics[scale=1]{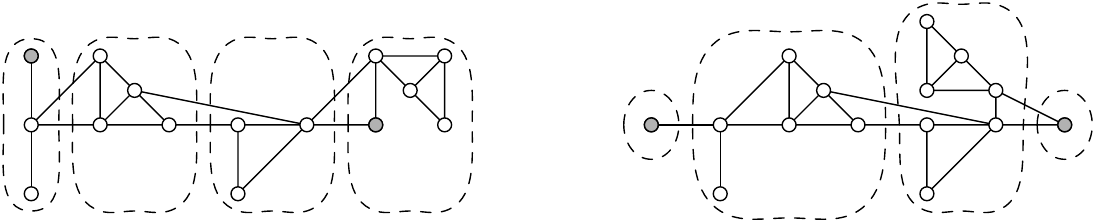}
  \caption{Two $P_4$-witness structures of a graph; the grey vertices form a $P_4$-suitable pair~\cite{HPW09}.}\label{f-p4witness}
\end{figure}

The following known lemma shows why $P_k$-suitable pairs are of importance.

\begin{lemma}[\cite{HPW09}]\label{l-outer}
For $k\geq 3$, a graph~$G$ contains $P_k$ as a contraction if and only if~$G$ has a $P_k$-suitable pair.
\end{lemma}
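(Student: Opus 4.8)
The plan is to prove both directions of the equivalence in Lemma~\ref{l-outer}, using the witness-structure characterization of contraction stated just above. Throughout, write $P_k=p_1\cdots p_k$.

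\medskip
\noindent
\emph{The easy direction.} First I would handle the ``if'' direction, which is essentially immediate. Suppose $G$ has a $P_k$-suitable pair $(u,v)$. By definition this means $G$ has a $P_k$-witness structure, so in particular $G$ contains $P_k$ as a contraction. Nothing more is needed here; the suitability condition is simply a witness structure with extra constraints on two of its bags, so the existence of a suitable pair trivially implies contractibility.

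\medskip
\noindent
\emph{The hard direction.} The substantive direction is ``only if'': assuming $G$ contains $P_k$ as a contraction, I must produce a $P_k$-suitable pair, i.e.\ a $P_k$-witness structure in which the two \emph{end} bags $W(p_1)$ and $W(p_k)$ are singletons. So I would start from an arbitrary $P_k$-witness structure $\mathcal{W}=\{W(p_1),\dots,W(p_k)\}$ and transform it into one whose first and last bags are single vertices, without disturbing the adjacency pattern (iii) that encodes $P_k$. The key observation is that the end bag $W(p_1)$ is adjacent only to $W(p_2)$ (since $p_1$ is adjacent only to $p_2$ in $P_k$), so I have freedom to reshape it. The natural move is to pick any single vertex $u\in W(p_1)$ that has a neighbour in $W(p_2)$, keep $u$ as the new bag for $p_1$, and push the remaining vertices $W(p_1)\setminus\{u\}$ into the adjacent bag $W(p_2)$. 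I would check that such a $u$ exists (it must, since $W(p_1)$ is adjacent to $W(p_2)$), that the enlarged bag $W(p_2)\cup(W(p_1)\setminus\{u\})$ remains connected (it is the union of the connected $W(p_2)$, the connected $W(p_1)$ minus one vertex, joined along the edge from $u$'s former neighbours\,---\,here I should argue connectivity carefully, perhaps by instead moving $W(p_1)\setminus\{u\}$ in a way that preserves connectivity, choosing $u$ to be a non-cut vertex of $G[W(p_1)]$), and that all adjacencies in (iii) are preserved, since shrinking $W(p_1)$ to $\{u\}$ can only remove edges to $W(p_2)$ (and $u$ was chosen to retain one) and cannot create edges to any $W(p_j)$ with $j\geq 3$, as $W(p_1)$ was anticomplete to those. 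Applying the symmetric argument to the last bag $W(p_k)$ (moving its surplus into $W(p_{k-1})$) then yields a witness structure with $W(p_1)=\{u\}$ and $W(p_k)=\{v\}$, which is exactly a $P_k$-suitable pair.

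\medskip
\noindent
\emph{Main obstacle.} The delicate point is maintaining connectivity (condition (i)) of the receiving bag when I shrink an end bag to a single vertex. Simply choosing any vertex of $W(p_1)$ may disconnect the leftover set $W(p_1)\setminus\{u\}$, and even if that set stays connected, I must ensure it attaches to $W(p_2)$. The clean way around this is to choose $u$ to be a vertex of $W(p_1)$ that is \emph{not} a cut-vertex of the connected graph $G[W(p_1)]$ (such a vertex always exists, e.g.\ a leaf of a spanning tree), so that $G[W(p_1)\setminus\{u\}]$ remains connected; I then verify that this leftover set is adjacent to $W(p_2)$ so that $G[W(p_2)\cup(W(p_1)\setminus\{u\})]$ is connected, using the fact that $W(p_1)$ was connected and adjacent to $W(p_2)$. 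This is the only place requiring care; everything else is bookkeeping on the adjacency conditions.
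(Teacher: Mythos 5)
Your ``if'' direction is correct and immediate, and your overall plan for the ``only if'' direction---start from an arbitrary $P_k$-witness structure, shrink $W(p_1)$ to a single vertex $u$ and absorb $W(p_1)\setminus\{u\}$ into $W(p_2)$, then do the symmetric step at $W(p_k)$---is the standard argument (the paper itself does not prove Lemma~\ref{l-outer} but quotes it from~\cite{HPW09}). However, your handling of the one delicate point, which you correctly identify, has a genuine gap: \emph{both} selection rules you propose for $u$ are insufficient, and the first is the wrong instinct outright. Consider $G=P_4$ with vertices $a,u,c,v$ in order, and the $P_3$-witness structure $W(p_1)=\{a,u\}$, $W(p_2)=\{c\}$, $W(p_3)=\{v\}$. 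The vertex $u$ has a neighbour in $W(p_2)$ (your first rule) and is not a cut-vertex of $G[W(p_1)]$, which is a $K_2$ (your refined rule), yet keeping $u$ as the singleton bag leaves $W(p_2)\cup\{a\}=\{a,c\}$ disconnected. The ``verification'' you defer to---that the leftover set is adjacent to $W(p_2)$, ``using the fact that $W(p_1)$ was connected and adjacent to $W(p_2)$''---simply does not follow from those facts: here the only correct choice is $a$, a vertex with \emph{no} neighbour in $W(p_2)$, i.e.\ exactly the opposite of your first rule. In general the singleton must be a vertex whose removal does not destroy the attachment of $W(p_1)$ to $W(p_2)$, and nothing in your proposal guarantees your chosen $u$ has this property.

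The gap is easy to close, but it needs one extra idea. For instance: a connected graph on at least two vertices has at least \emph{two} non-cut vertices (two leaves of a spanning tree), say $x\neq y$ in $G[W(p_1)]$; since some vertex $z\in W(p_1)$ is adjacent to $W(p_2)$, at least one of $x,y$ differs from $z$, and choosing that one as $u$ makes $W(p_1)\setminus\{u\}$ connected \emph{and} still adjacent to $W(p_2)$, so $W(p_2)\cup(W(p_1)\setminus\{u\})$ is connected. Alternatively, choose $u\in W(p_1)$ at maximum distance from $W(p_2)$ in the connected graph $G[W(p_1)\cup W(p_2)]$; then no shortest path from any other vertex of this graph to $W(p_2)$ passes through $u$, so deleting $u$ keeps it connected. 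With either choice the remaining checks are exactly the bookkeeping you describe: $\{u\}$ is adjacent to the enlarged $W(p_2)$ (if $|W(p_1)|\geq 2$ then $u$ has a neighbour in $W(p_1)\setminus\{u\}$ by connectivity, and otherwise no move was needed), the anticompleteness conditions are inherited, and the symmetric step at $W(p_k)$ goes through, also when $k=3$ and both ends feed into the same middle bag. With that repair your proof is complete and coincides with the intended argument.
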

Lemma~\ref{l-outer} leads to the following auxiliary problem, 
where $k\geq 3$ is a fixed integer, that is, $k$ is not part of the input.

\problemdef{$P_k$-Suitability}{a connected graph $G$ and two non-adjacent vertices $u,v$.}{is $(u,v)$ a $P_k$-suitable pair?}
The next, known observation follows from the fact
 that $P_k$-{\sc Contractibility} is trivial for 
 $k\leq 2$, 
 whereas for $k=3$ we can use Lemma~\ref{l-outer} combined with the observation that $P_3$-{\sc Suitability} is polynomial-time solvable (two non-adjacent vertices $u$, $v$ form a $P_3$-suitable pair in a connected graph $G$ if and only if $G-\{u,v\}$ is connected).

\begin{lemma}\label{l-trivial}
For $k\leq 3$, {\sc $P_k$-Contractibility}  can be solved in polynomial time.
\end{lemma}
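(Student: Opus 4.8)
The plan is to deal with the three values of $k$ separately, each time reducing to a trivially checkable condition. For $k=1$, the graph $P_1$ is a single vertex, and since $G$ is connected it can always be contracted to a single vertex (contract the edges of any spanning tree one by one); hence the answer is always \emph{yes}. For $k=2$, the graph $P_2$ is a single edge, and a connected graph can be contracted to $P_2$ if and only if it has at least two vertices, which is immediate to test. The only case requiring a genuine (if short) argument is $k=3$.

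For $k=3$ I would invoke Lemma~\ref{l-outer}, which tells us that $G$ contains $P_3$ as a contraction if and only if $G$ has a $P_3$-suitable pair $(u,v)$. Writing $P_3=p_1p_2p_3$, a $P_3$-suitable pair is a pair of non-adjacent vertices $u,v$ admitting a $P_3$-witness structure with $W(p_1)=\{u\}$ and $W(p_3)=\{v\}$. Since such a witness structure partitions $V(G)$ into its three bags, the middle bag is forced to be $W(p_2)=V(G)\setminus\{u,v\}$, and the witness conditions reduce to: $W(p_2)$ is connected, each of $u$ and $v$ has a neighbour in $W(p_2)$, and $u,v$ are non-adjacent. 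The statement I would then verify is that non-adjacent vertices $u,v$ form a $P_3$-suitable pair precisely when $G-\{u,v\}$ is connected and nonempty. The forward direction is immediate, as $G[W(p_2)]=G-\{u,v\}$ must be connected. For the converse, taking $W(p_1)=\{u\}$, $W(p_3)=\{v\}$, $W(p_2)=V(G)\setminus\{u,v\}$ gives three pairwise-disjoint nonempty connected bags covering $V(G)$; because $G$ is connected and $u$ is non-adjacent to $v$, the vertex $u$ must have a neighbour outside $\{u,v\}$, that is, in $W(p_2)$, and likewise for $v$. Hence $W(p_1),W(p_2)$ and $W(p_2),W(p_3)$ are adjacent while $W(p_1),W(p_3)$ are not, matching the edges of $P_3$.

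Given this characterization, the algorithm for $k=3$ is to enumerate all $O(n^2)$ non-adjacent pairs $(u,v)$ and, for each, test in polynomial time whether $G-\{u,v\}$ is connected, returning \emph{yes} as soon as one such pair is found. This runs in polynomial time overall. I do not expect any serious obstacle here: the entire content is the routine verification of the $P_3$-suitability characterization, the only mild subtlety being the use of the connectivity of $G$ together with the non-adjacency of $u$ and $v$ to guarantee both that the middle bag is nonempty and that each end-bag attaches to it.
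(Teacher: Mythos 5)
Your proposal is correct and follows exactly the paper's route: the cases $k\leq 2$ are dismissed as trivial, and $k=3$ is handled via Lemma~\ref{l-outer} together with the characterization that non-adjacent $u,v$ form a $P_3$-suitable pair if and only if $G-\{u,v\}$ is connected, which is precisely the observation the paper cites (you merely spell out its routine verification, including the automatic non-emptiness of the middle bag). No gaps.
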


We now show the following lemma, which will be helpful for proving our results.

\begin{lemma}\label{l-reduce}
Let $k\geq 4$ and let $(G,u,v)$ be an instance of $P_k$-{\sc Suitability} with $u$ and $v$ at distance $d > k$. Let $P$ be a shortest path from $u$ to $v$. Then $(G,u,v)$ can be reduced in polynomial time to $d-2$ instances $(G/e,u,v)$, one for each edge~$e\in E(P)$ that is not incident to $u$ and $v$, with $\dist(u,v) =d-1$, such that $(G,u,v)$ is a yes-instance if and only if at least one of the new instances $(G/e,u,v)$ is a yes-instance of $P_k$-{\sc Suitability}.
\end{lemma}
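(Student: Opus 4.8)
The plan is to prove the claimed equivalence in two directions and, separately, to verify the distance bound for every admissible edge. Write $P=w_0w_1\cdots w_d$ with $w_0=u$ and $w_d=v$, so that the edges of $P$ not incident to $u$ or $v$ are exactly the $w_iw_{i+1}$ with $1\le i\le d-2$; I will call these the \emph{interior} edges. Throughout I will use Lemma~\ref{l-outer}, i.e.\ I treat a yes-instance of $P_k$-{\sc Suitability} as the existence of a $P_k$-witness structure whose two outer bags are $\{u\}$ and $\{v\}$.

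First I would settle the routine parts. Contracting an edge never increases distances, so the image of $P$ in $G/e$ is a $u$--$v$ walk of length $d-1$, giving $\dist_{G/e}(u,v)\le d-1$; conversely any $u$--$v$ path in $G/e$ lifts to a $u$--$v$ walk in $G$ that is at most one edge longer (expand the contracted vertex of $e=w_iw_{i+1}$ using the edge $w_iw_{i+1}$ itself), so a $u$--$v$ path of length $\le d-2$ in $G/e$ would contradict $\dist_G(u,v)=d$. Hence $\dist_{G/e}(u,v)=d-1$ for every interior $e$. For the easy implication, that a yes-instance $(G/e,u,v)$ yields a yes-instance $(G,u,v)$, I would take a $P_k$-witness structure of $G/e$ and replace the vertex obtained by contracting $e=w_iw_{i+1}$ by the two vertices $w_i,w_{i+1}$ inside whatever bag contains it: that bag stays connected because $w_iw_{i+1}\in E(G)$, no adjacencies between bags are created or destroyed, and the singleton outer bags $\{u\}$ and $\{v\}$ are untouched since $e$ is interior.

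The harder direction is that a yes-instance $(G,u,v)$ produces a yes-instance $(G/e,u,v)$ for at least one interior edge $e$. Fix a $P_k$-witness structure $\mathcal W=\{W(p_1),\dots,W(p_k)\}$ with $W(p_1)=\{u\}$ and $W(p_k)=\{v\}$, and let $b_i$ denote the index of the bag containing $w_i$. Consecutive bags of a witness structure are equal or adjacent in $P_k$, so $|b_{i+1}-b_i|\le 1$, with $b_0=1$ and $b_d=k$; moreover $b_1=2$ and $b_{d-1}=k-1$ because the outer bags are singletons. My target is an interior index $i$ with $b_i=b_{i+1}$: contracting such a \emph{stay edge} keeps the corresponding bag connected and leaves $\mathcal W$ otherwise intact, so $G/e$ remains $P_k$-suitable. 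If the sequence $b_0,\dots,b_d$ is monotone, then since $P$ has $d+1\ge k+2$ vertices while there are only $k$ bags, and monotone unit steps cannot skip a value, at least $(d+1)-k=d-k+1\ge 2$ stay steps occur; none can have value $1$ or $k$ (that would force a second vertex into $\{u\}$ or $\{v\}$), so all of them sit at interior positions and immediately supply the desired edge.

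The main obstacle is that for an arbitrary $\mathcal W$ the sequence $b_0,\dots,b_d$ need not be monotone: a shortest path may leave a bag and re-enter it later through a long detour inside an adjacent bag. I would remove this by reconfiguring $\mathcal W$. If no interior stay edge is present then every interior step is $\pm1$, so the sequence has an interior local maximum $b_i=j$ with $b_{i-1}=b_{i+1}=j-1$; that is, $w_i\in W(p_j)$ while $w_{i-1},w_{i+1}\in W(p_{j-1})$. Moving $w_i$ from $W(p_j)$ into $W(p_{j-1})$, where it is adjacent to $w_{i-1}$, creates the stay edge $w_iw_{i+1}$, provided $W(p_j)\setminus\{w_i\}$ stays nonempty and connected and provided $w_i$ has no neighbour in $W(p_{j+1})$ (which would otherwise introduce a forbidden $p_{j-1}$--$p_{j+1}$ adjacency). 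The delicate step is to show that a repairable peak always exists, or to reorganise the witness structure globally when a peak vertex is a cut vertex of its bag or is adjacent to the next bag; this is exactly where the hypothesis that $P$ is a \emph{shortest} path is essential, since it limits how $P$ can backtrack between two consecutive bags and rules out the irreparable configurations. Finally, polynomiality is immediate: one computes a single shortest $u$--$v$ path by breadth-first search and outputs the $d-2$ graphs $G/e$, each in linear time.
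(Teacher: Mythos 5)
Your computation of $\dist_{G/e}(u,v)=d-1$ and your ``easy'' implication (un-contracting $e$ inside the bag that contains the merged vertex) are correct and coincide with the corresponding half of the paper's proof. The hard direction, however, is not a proof. After correctly reducing to the case where the bag sequence $b_0,\dots,b_d$ has no interior stay (so all steps are $\pm 1$ and an interior peak exists), you propose to move a peak vertex into the bag below it, and then concede that ``the delicate step is to show that a repairable peak always exists, or to reorganise the witness structure globally,'' asserting that shortest-path-ness of $P$ ``rules out the irreparable configurations.'' That assertion is the entire content of the lemma, it is unproven in your write-up, and it is in fact false. For comparison: the paper's whole forward direction is the one-sentence claim that, as $d\geq k$, some bag of ${\cal W}$ contains both end-vertices of an edge of $P$ --- i.e.\ the paper silently assumes exactly the step you recognised as the crux. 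The example below refutes that sentence, and with it the lemma as stated; so the gap you left is not a defect of your approach alone, but it also cannot be closed.

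Take $k=4$ and let $G$ consist of: an induced path $uw_1w_2w_3w_4v$; a vertex $a$ adjacent exactly to $w_1,w_3$; a vertex $b$ adjacent exactly to $w_2,w_4$; a path $u\,x\,y_1\,y_2\,w_3$ whose internal vertices $x,y_1,y_2$ are new and have no further neighbours; and a path $v\,z\,s_1\,s_2\,w_2$ whose internal vertices $z,s_1,s_2$ are new and have no further neighbours. One checks that $G$ is connected, $\dist_G(u,v)=5>4=k$, and $P=uw_1w_2w_3w_4v$ is a shortest path, so all hypotheses of the lemma hold. The pair $(u,v)$ is $P_4$-suitable, witnessed by $W(p_2)=\{x,y_1,y_2,w_1,a,w_3\}$ and $W(p_3)=\{z,s_1,s_2,w_2,b,w_4\}$; along $P$ this witness structure has bag sequence $1,2,3,2,3,4$, a peak at $w_2$ and a valley at $w_3$. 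Now consider \emph{any} witness structure with $W(p_1)=\{u\}$ and $W(p_4)=\{v\}$. It must satisfy $N(u)=\{w_1,x\}\subseteq W(p_2)$, and since in $G-u$ every path from $x$ to $w_1$ begins $x,y_1,y_2,w_3$, connectivity of $W(p_2)$ forces $w_3\in W(p_2)$; symmetrically $N(v)=\{w_4,z\}\subseteq W(p_3)$ forces $w_2\in W(p_3)$. Hence in every witness structure $w_1,w_3\in W(p_2)$ and $w_2,w_4\in W(p_3)$, so each of the three interior edges $w_1w_2$, $w_2w_3$, $w_3w_4$ always has its ends in different bags. By your own easy direction, a yes-answer for any $(G/e,u,v)$ would lift to a witness structure of $G$ with both ends of $e$ in one bag; consequently all $d-2=3$ reduced instances are no-instances although $(G,u,v)$ is a yes-instance. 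Note that $w_2$ is exactly one of your ``irreparable'' peaks ($W(p_3)\setminus\{w_2\}$ loses $z,s_1,s_2$) and $w_3$ an irreparable valley, and $P$ is shortest, so shortest-path-ness does not exclude these configurations. (The failure does depend on the chosen shortest path: the other shortest paths $uw_1aw_3w_4v$ and $uw_1w_2bw_4v$ of this graph do contain interior edges lying inside bags, and $G$ contains an induced $P_2+P_4$, so the lemma's later applications to the $H$-free classes of the paper may be salvageable with extra arguments; but neither your argument nor the paper's establishes the lemma in the generality in which it is stated.)
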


\begin{proof}
First suppose that $(G,u,v)$ is a yes-instance of $P_k$-{\sc Suitability}.
Then $G$ has a $P_k$-witness structure ${\cal W}$ with $W(p_1)=\{u\}$ and $W(p_k)=\{v\}$. As $d\geq k$, at least one bag of ${\cal W}$ will contain both end-vertices of an edge~$e$ of $P$.   
Then contracting $e$ yields a $P_{k}$-witness structure~${\cal W}'$ for $(G/e,u,v)$ with $W'(p_1)=\{u\}$ and $W'(p_k)=\{v\}$. 
As $W(p_1)$ and $W(p_k)$ only contain $u$ and $v$, respectively, the end-vertices of $e$ belong to some bag $W(p_i)$ with 
$2\leq i \leq k-1$. Hence, $e$ is not incident to $u$ and $v$.

Now suppose that $P$ contains an edge $e$ not incident to $u$ and $v$ such that $(G/e,u,v)$ is a yes-instance of $P_k$-{\sc Suitability}. Then $G/e$ has a $P_k$-witness structure ${\cal W}'$ with $W'(p_1)=\{u\}$ and $W'(p_k)=\{v\}$. Let $e=st$ and say that we contracted $e$ on $s$. As $e$ is not incident to $u$ and $v$, we find that $\{s,t\}\cap \{u,v\}=\emptyset$.
Hence, $s$ belongs to some bag $W'(p_i)$ with $2\leq i \leq k-1$. Then in $W'(p_i)$ we uncontract $e$ (so the new bag will contain both $s$ and $t$). This yields a $P_k$-witness structure ${\cal W}$ of $G$ with $W(p_1)=\{u\}$ and $W(p_k)=\{v\}$. \qed
\end{proof}

\noindent
In our polynomial-time algorithms for constructing $P_k$-witness structures (to prove Theorem~\ref{t-main}) we put vertices in certain sets that we  then try to extend to $P_k$-witness bags (possibly via branching) and we will often apply the following rule:

\medskip
\noindent
{\tt Contraction Rule.} If two adjacent vertices 
$s$ and $t$ end up in the same bag of some potential $P_k$-witness structure, then contract the edge~$st$.

\medskip
\noindent
For a graph $G=(V,E)$, 
we say that 
we {\it apply the {\tt Contraction Rule} on some set} $U\subseteq V$ if we contract every edge
in $G[U]$. 
The advantage of applying this rule is that we obtain a smaller instance and that we can exploit the fact that the resulting set $G[U]$ has
become independent.

It is easy to construct examples that show that a class of $H$-free graphs is not closed under contraction if $H$ contains a vertex of degree at least~3 or a cycle. 
However, all polynomial-time solvable cases of Theorem~\ref{t-main} involve forbidding a linear forest~$H$. 
The following known lemma, which is readily seen, shows that the {\tt Contraction Rule} does preserve $H$-freeness as long as $H$ is a linear forest. Hence, we can safely apply the rule in our proofs of the polynomial-time solvable cases of Theorem~\ref{t-main}.

\begin{lemma}\label{l-contract}
Let $H$ be a linear forest and let $G$ be an $H$-free graph. Then the graph obtained from $G$ after contracting an edge is also $H$-free.
\end{lemma}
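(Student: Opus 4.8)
The plan is to prove Lemma~\ref{l-contract} directly via the contrapositive together with the structural observation that contracting an edge of a linear forest $H$ would create a vertex of degree at least~$3$. Let $G$ be an $H$-free graph for a linear forest $H$, and let $G'=G/e$ be obtained by contracting an edge $e=st$ on a vertex, say $s$. First I would suppose, for contradiction, that $G'$ is \emph{not} $H$-free, so $G'$ contains an induced copy of $H$ on some vertex set $S'$. My aim is to ``lift'' this copy back to an induced copy of $H$ in $G$, contradicting the $H$-freeness of $G$.

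The case analysis splits on whether the contracted vertex $s$ lies in $S'$. If $s\notin S'$, then $S'\subseteq V(G)\setminus\{s,t\}$, and for such vertices the adjacencies in $G$ and $G'$ coincide exactly (contraction only changes the neighbourhood of the merged vertex). Hence $G[S']$ is already an induced copy of $H$ in $G$, a contradiction. The interesting case is $s\in S'$. Here I would replace $s$ by one of its two preimages $s$ or $t$. The key point is that every neighbour of the merged vertex $s$ in $G'$ is a neighbour in $G$ of $s$ or $t$ (or both). Since $H$ is a linear forest, the vertex of $H$ occupied by $s$ has degree at most~$2$, so it has at most two neighbours $a,b$ in $S'$. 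If both $a$ and $b$ were adjacent in $G$ only to $t$ but not to $s$ (and symmetrically), we could not simply substitute a single preimage; so the crux is to choose the correct preimage.

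The main obstacle, and the step deserving the most care, is precisely this choice of preimage. The danger is a configuration where the merged vertex $s$ has two neighbours $a,b$ in the induced $H$ such that in $G$ the vertex $a$ is adjacent to $s$ but not $t$, while $b$ is adjacent to $t$ but not $s$. Then neither $s$ nor $t$ alone reproduces both edges $sa$ and $sb$. However, I claim this configuration forces a vertex of degree at least~$3$ in $G$, contradicting that $H$ (and hence the relevant subgraph of $G$) is a linear forest: indeed, $s$ and $t$ are adjacent in $G$ (they form the contracted edge), so if $s\sim a$ and $t\sim b$ with the stated non-adjacencies, then in $G$ the four vertices $a,s,t,b$ induce a $P_4$, and together with a second neighbour of $s$ inside $H$ one produces a degree-$3$ vertex; more directly, in a linear forest the occupied vertex has at most two neighbours, and tracing these neighbours through $s,t$ yields at most degree~$2$ that can always be realized by a single preimage.

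Concretely, I would argue: let $w$ be the vertex of $H$ realized by $s$ in $G'$, with $\deg_H(w)\le 2$ and neighbours among $S'$ equal to $N'\subseteq\{a,b\}$. Each vertex of $N'$ is adjacent in $G$ to $s$, to $t$, or to both. If some single preimage $x\in\{s,t\}$ is adjacent in $G$ to \emph{all} of $N'$ and non-adjacent to all of $S'\setminus(N'\cup\{s\})$, replacing $s$ by $x$ in $S'$ gives an induced $H$ in $G$. Checking that such an $x$ always exists is the heart of the argument: since $|N'|\le 2$ and $s,t$ are themselves adjacent, the only way both preimages fail is the split-adjacency case above, which I would rule out by observing it would make $G$ contain an induced subgraph that is not a linear forest on the relevant vertices, contradicting $H$-freeness (as $H$ is, up to the presence of this configuration, the unique obstruction). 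This completes the contradiction and establishes that $G'=G/e$ is $H$-free.
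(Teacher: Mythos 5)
Your reduction to the critical case is sound: if the merged vertex $v_e$ lies outside the induced copy $S'$ of $H$, nothing changes; and if it lies inside, then because the copy is induced in $G/e$, \emph{both} preimages $s,t$ are automatically non-adjacent in $G$ to every non-neighbour of the occupied vertex $w$ inside the copy, so the only question is whether a single preimage is adjacent to all of the at most two neighbours $a,b$ of $w$. You also correctly isolate the unique configuration where this fails: $a\sim s$, $a\not\sim t$, $b\sim t$, $b\not\sim s$ in $G$. But your resolution of that configuration is wrong. The split configuration does not force a degree-$3$ vertex and cannot be ``ruled out'': take $G$ to be the path $a\,s\,t\,b$, let $e=st$, and let $H=P_3$. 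Then $G/e$ is an induced $P_3$ with $v_e$ in the middle, the split configuration occurs, and indeed neither single preimage works ($\{a,s,b\}$ and $\{a,t,b\}$ each induce $P_2+P_1$, not $P_3$). No contradiction is lurking here---$G=P_4$ does contain an induced $P_3$, namely on $a,s,t$---so any argument that the split case is impossible must be fallacious. (Note also that even if $G$ contained an induced subgraph that is not a linear forest, this would not contradict $H$-freeness; $H$-freeness forbids only copies of $H$ itself.)

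The correct way to finish, and the place where the linear-forest hypothesis is genuinely used, is to keep \emph{both} preimages in the split case: the set $S=(S'\setminus\{v_e\})\cup\{s,t\}$ induces in $G$ exactly the graph $H'$ obtained from $H$ by replacing the degree-$2$ vertex $w$ by the edge $st$, i.e., a linear forest in which the path component of $w$ is one vertex longer. Since $P_{m+1}$ contains $P_m$ as an induced subgraph, $H'$ contains $H$ as an induced subgraph, so $G$ contains an induced copy of $H$ after all, the desired contradiction. (This step is precisely what fails for general $H$, which is why the lemma needs $H$ to be a linear forest.) For completeness: the paper states this lemma without proof, calling it ``readily seen'', so there is no authorial argument to compare against; but as written, your proof rests on a false key claim and needs the repair just described.
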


\section{The Polynomial-Time Solvable Cases of Theorem~\ref{t-main}}\label{s-poly}

In this section we prove that {\sc Longest Path Contractibility} is polynomial-time solvable for 
$H$-free graphs if $H=P_2+P_4$ (Section~\ref{s-p2p4}), $H=P_1+P_2+P_3$ (Section~\ref{s-p1p2p3}), $H=P_1+P_5$ (Section~\ref{s-p1p5}) and $H=sP_1+P_4$ for every integer~$s\geq 0$ (Section~\ref{s-sp1p4}).
To solve {\sc Longest Path Contractibility} in each of these cases  we will eventually check if the input graph can be contracted to $P_4$. This turns out to be the hardest situation to deal with in our proofs.
 Due to Lemma~\ref{l-outer}, we can solve it by checking 
for each pair of distinct vertices $u$, $v$ with $N(u)\cap N(v)=\emptyset$
 if $(G,u,v)$ is a yes-instance of $P_4$-{\sc Suitability} (note that for any other pair $u$, $v$, we have that $(G,u,v)$ is a no-instance 
 of $P_4$-{\sc Suitability}).
 In Section~\ref{s-p4} we first provide a general framework by introducing some additional terminology and one general result for
solving $P_4$-{\sc Suitability}.

\subsection{On Contracting a Graph to $\mathbf{P_4}$}\label{s-p4}

Let $(G,u,v)$ be an instance of $P_4$-{\sc Suitability}.  
For every $P_4$-witness structure of~$G$ with $W(p_1)=\{u\}$ and $W(p_4)=\{v\}$ (if it exists),
every neighbour of $u$ belongs to $W(p_2)$ and every neighbour of $v$ belongs to $W(p_3)$.
Throughout our proofs we let $T(u,v)=V(G)\setminus (N[u]\cup N[v])$ denote the set of remaining vertices of $G$, which still need to be placed in either $W(p_2)$ or $W(p_3)$. We write $T=T(u,v)$ if no confusion is possible.
We say that a partition $(S_u,S_v)$ of $T$ is a {\it solution} for $(G,u,v)$ if
$N(u)\cup S_u$ and $N(v)\cup S_v$ are both connected. Hence, a solution $(S_u,S_v)$ for $(G,u,v)$ corresponds to a 
$P_4$-witness structure ${\cal W}$ of $G$, where $W(p_1)=\{u\}$, $W(p_2)=N(u)\cup S_u$, $W(p_3)=N(v)\cup S_v$ and
$W(p_4)=\{p_4\}$. 
A solution $(S_u,S_v)$ for $(G,u,v)$ is {\it $\alpha$-constant} for some constant~$\alpha\geq 0$ if the following 
holds: either $S_u$ contains a set $S_u'$ of size $|S_u'|\leq\alpha$ such that $N(u)\cup S_u'$ is connected, or $S_v$ contains a 
set $S_v'$ of size $|S_v'|\leq\alpha$ such that $N(v)\cup S_v'$ is connected. 
We prove the following lemma.

\begin{lemma}\label{l-constant}
Let $(G,u,v)$ be an instance of $P_4$-{\sc Suitability}. For every constant $\alpha\geq 0$, it is possible to check in $O(n^{\alpha+2})$ time if $(G,u,v)$ has an $\alpha$-constant solution.
\end{lemma}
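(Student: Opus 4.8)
The plan is to guess the bounded part directly by brute force. By definition, an $\alpha$-constant solution $(S_u,S_v)$ exists if and only if either $S_u$ contains a set $S_u'$ with $|S_u'|\leq\alpha$ such that $N(u)\cup S_u'$ is connected, \emph{and} the remaining vertices of $T$ can be legitimately assigned to complete a solution, or the symmetric condition holds with the roles of $u$ and $v$ swapped. So first I would iterate over all subsets $S'\subseteq T$ with $|S'|\leq\alpha$; there are $O(n^{\alpha})$ such subsets, and for each we spend polynomial time, which gives the claimed bound provided the per-subset check is $O(n^2)$.

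For each candidate set $S'$ (say we are testing the $u$-side), the key steps are as follows. First, check that $N(u)\cup S'$ is connected; this is a single connectivity test. If it is, then $W(p_2):=N(u)\cup S'$ already forms a connected bag containing all of $N(u)$, so the only remaining requirement is that the leftover vertices $R:=T\setminus S'$ can be distributed so as to yield a genuine solution. The crucial observation is that, since $W(p_2)$ is already connected and contains $N(u)$, we are free to place \emph{every} remaining vertex of $R$ into $S_v$: setting $S_u:=S'$ and $S_v:=T\setminus S'$ is the most generous way to help make $N(v)\cup S_v$ connected, and it cannot hurt the already-satisfied $u$-side. Hence it suffices to check whether $N(v)\cup(T\setminus S')$ is connected — a second connectivity test. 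If both tests pass for some $S'$, we output that $(S',T\setminus S')$ is an $\alpha$-constant solution; the symmetric sweep handles the $v$-side by testing connectivity of $N(v)\cup S'$ and then of $N(u)\cup(T\setminus S')$.

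The main thing to argue carefully is the \emph{correctness} of this ``dump everything into the other side'' step, i.e.\ that restricting attention to the extremal assignment $S_v=T\setminus S'$ loses no solutions. The point is monotonicity of connectivity under adding vertices together with the fact that the $u$-side's connectivity has already been certified by the small set $S'$ alone; any $\alpha$-constant solution whose $u$-side certificate is $S'$ assigns some subset of $R$ to $S_u$ and the rest to $S_v$, and moving all of $R$ from $S_u$ into $S_v$ preserves the (already established) connectivity of $N(u)\cup S_u$ while only enlarging $N(v)\cup S_v$, which can only preserve connectivity. Thus if any $\alpha$-constant solution with certificate $S'$ exists, the extremal one with $S_v=T\setminus S'$ exists as well, and conversely the extremal assignment is itself a valid solution whenever both connectivity tests pass.

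For the running time, there are $\binom{|T|}{\leq\alpha}=O(n^{\alpha})$ candidate subsets $S'$, and for each we perform a constant number of connectivity tests, each of which runs in $O(n+m)=O(n^2)$ time. We repeat this for both the $u$-side and the $v$-side, only doubling the work. Therefore the total running time is $O(n^{\alpha+2})$, as claimed, and the algorithm reports an $\alpha$-constant solution precisely when one exists. I do not anticipate a genuine obstacle here; the only subtlety is formalizing the extremal-assignment argument above, which is where I would concentrate the write-up.
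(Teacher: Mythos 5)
Your reduction to ``dump everything into the other side'' has a genuine gap: connectivity of an induced subgraph is \emph{not} monotone under adding vertices. Moving a vertex $x$ from $S_u$ into $S_v$ enlarges $N(v)\cup S_v$, but if $x$ has no neighbour in $N(v)\cup S_v$ it disconnects that set, so your second test ($N(v)\cup(T\setminus S')$ connected) is strictly stronger than what an $\alpha$-constant solution guarantees, and the algorithm produces false negatives. Concretely, let $G$ be the path $u\,a\,b\,v$ with $\alpha+1$ pendant vertices attached to $a$ and $\alpha+1$ pendant vertices attached to $b$; then $N(u)=\{a\}$, $N(v)=\{b\}$, and $T$ consists of the $2\alpha+2$ pendants. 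Sending $a$'s pendants to $S_u$ and $b$'s pendants to $S_v$ is a solution, and it is even $0$-constant (take $S_u'=\emptyset$). But any $S'$ with $|S'|\leq\alpha$ that passes your first test on the $u$-side must consist of pendants of $a$, so some pendant of $a$ survives in $T\setminus S'$ and is isolated in $G[N(v)\cup(T\setminus S')]$; the second test fails, and by symmetry the $v$-side sweep fails too. So your algorithm answers no on a yes-instance for every $\alpha$.

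The paper's proof differs exactly where your argument breaks. For each candidate $S$ it does not require all of $T\setminus S$ to hook up with $N(v)$; instead it checks that every vertex of $N(v)$ lies in one connected component $D$ of $G[(T\setminus S)\cup N(v)]$, and then assigns $T\cap V(D)$ to $S_v$ and \emph{everything else} (including the vertices of $T\setminus S$ that cannot reach $N(v)$) to $S_u$. This is sound because $G$ is connected and no vertex of $T$ is adjacent to $u$ or $v$: any component of $G[(T\setminus S)\cup N(v)]$ other than $D$ must therefore send an edge into $N(u)\cup S$, so attaching these leftover components to the already-connected set $N(u)\cup S$ keeps $N(u)\cup S_u$ connected, while $N(v)\cup S_v=V(D)$ is connected by construction. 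Your overall brute-force structure and the $O(n^{\alpha})\times O(n^2)$ accounting are fine; the missing idea is this component-based assignment of the leftover vertices, which is what makes the per-subset check both sound and complete.
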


\begin{proof}
We first do the following check for vertex~$u$. For each set~$S$ of size $|S|\leq \alpha$ we check if $N(u)\cup S$ is connected and if  every vertex of $N(v)$ is in the same connected component $D$ of the subgraph of $G$ induced by $(T\setminus S)\cup N(v)$. If so, then we put all vertices of $T \setminus V(D)$ in $S_u$ and all vertices of $T\cap V(D)$ in $S_v$. As $G$ is connected, this yields a solution $(S_u,S_v)$ for $(G,u,v)$. This takes $O(n^2)$ time for each set~$S$. As the number of sets~$S$ is $O(n^\alpha)$, the total running time is $O(n^{\alpha+2})$. We can do the same check in $O(n^{\alpha+2})$ time for vertex~$v$. This proves the lemma. \qed
\end{proof}

Let  $(S_u,S_v)$ be a solution for an instance $(G,u,v)$ of $P_4$-{\sc Suitability} that is not $7$-constant (the value $\alpha=7$ comes from our proofs).
If $G[S_u]$ and $G[S_v]$ each contain at least one edge, then $(S_u,S_v)$ is {\it double-sided}.
 If exactly one of $G[S_u]$, $G[S_v]$ contains an edge, then $(S_u,S_v)$ is {\it single-sided}. If both $S_u$ and $S_v$ are independent sets, then $(S_u,S_v)$ is {\it independent}.

\subsection{The Case $\mathbf{H=P_2+P_4}$}\label{s-p2p4}

We now show that {\sc Longest Path Contractibility} is polynomial-time solvable for $(P_2+P_4)$-free graphs. As mentioned, we will do so via the auxiliary problem $P_k$-{\sc Suitability}.
We first give, in Lemma~\ref{l-top4}, a polynomial-time algorithm for $P_4$-{\sc Suitability} for $(P_2+P_4)$-free graphs. This is the most involved part of our algorithm. As such, we start with an outline of this algorithm.

\medskip
\noindent
{\it Outline of the $P_4$-Suitability Algorithm for $(P_2+P_4)$-free graphs.}\\
We first observe that for an instance $(G,u,v)$, we may assume that~$u$ and~$v$ are of distance at least~3, and consequently, $N(u)\cap N(v)=\emptyset$, and moreover we may assume that $N(u)$ and $N(v)$ are independent.
Recall that $T=V(G)\setminus (N[u]\cup N[v])$. To get a handle on the adjacencies between $T$ and $V(G)\setminus T$ we will apply a (constant) number of branching procedures. For example, we will prove in this way that $G[T]$ may be assumed to be $P_4$-free. Each time we branch we obtain, in polynomial time, a polynomial number of new, smaller
instances of {\sc $P_4$-Suitability} satisfying additional helpful constraints, such that the original instance is a yes-instance if and only if at least one of the new instances is a yes-instance. 
We then consider each new instance separately. 
That is, we either solve, in polynomial time, the problem for each new instance or create a polynomial number of new and even smaller instances via some further branching. 

Our first goal is to check if $(G,u,v)$ has an 
$7$-constant solution. If so then we are done. Otherwise we prove that the absence of $7$-constant solutions implies that $(G,u,v)$ has no double-sided solution either. Hence, it remains to test if 
$(G,u,v)$ has a single-sided solution or an independent solution. We check single-sidedness with respect to $u$ and $v$ independently.
We show that in both cases this leads either to a solution or to a polynomial number of smaller instances, for which we only need to check if they have an independent solution. This will enable us to branch in such a way that afterwards we may assume that $T$ is an independent set and that the solution we are looking for is equivalent to finding a star cover of $N(u)$ and $N(v)$ with centers in $T$. The latter problem reduces to a matching problem, which we can solve in polynomial time.

\begin{lemma}\label{l-top4}
$P_4$-{\sc Suitability} can be solved in polynomial time for $(P_2+P_4)$-free graphs.
\end{lemma}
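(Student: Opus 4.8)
\textbf{The plan is to} follow the roadmap given in the outline preceding the lemma, turning each informal step into a concrete branching or reduction that keeps the instance $(P_2+P_4)$-free and runs in polynomial time. First I would justify the normalising assumptions: using Lemma~\ref{l-reduce} (and the triviality for small distances) I may assume $\dist_G(u,v)\geq 3$, so $N(u)\cap N(v)=\emptyset$ and $T=T(u,v)$ is well defined. Since $N(u)$ must lie entirely in $W(p_2)$ and $N(v)$ in $W(p_3)$, any edge inside $N(u)$ or inside $N(v)$ can be removed from consideration by the \texttt{Contraction Rule} (Lemma~\ref{l-contract} guarantees $(P_2+P_4)$-freeness is preserved), so I may assume $N(u)$ and $N(v)$ are independent sets. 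The whole task is now to partition $T$ into $(S_u,S_v)$ so that $N(u)\cup S_u$ and $N(v)\cup S_v$ are each connected.

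Next I would run the constant-solution test. By Lemma~\ref{l-constant} with $\alpha=7$ I can decide in $O(n^{9})$ time whether a $7$-constant solution exists; if one does, output it. The substantive content begins when no $7$-constant solution exists. Here I would exploit $(P_2+P_4)$-freeness to control the structure of $G[T]$. The key structural observation I expect to need is that, because $u,v$ and their (now independent) neighbourhoods sit outside $T$ and are nonadjacent across, any long induced path or any two disjoint edges inside $T$ would combine with a vertex of $N(u)$ or an edge near $u$ to produce an induced $P_2+P_4$. I would use such arguments first to branch so that $G[T]$ may be assumed $P_4$-free (a cograph, so Lemma~\ref{l-p4} applies to its components), and then to rule out double-sided solutions: if $G[S_u]$ and $G[S_v]$ each contained an edge, the two edges together with the forced separation from $u$ (or $v$) would exhibit an induced $P_2+P_4$, contradicting $H$-freeness, so the absence of a $7$-constant solution forces the non-existence of any double-sided solution.

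It then remains to dispose of single-sided and independent solutions. For the single-sided case, say $G[S_u]$ contains the only edges, I would branch on a bounded amount of information about the (few) nontrivial components of $G[S_u]$ that provide connectivity, using $(P_2+P_4)$-freeness to bound how these components attach to $N(u)$; each branch either yields a solution directly or reduces to an instance in which $S_u$ too may be taken independent. In the final independent case both $S_u$ and $S_v$ are independent, so a vertex of $S_u$ can only contribute to the connectivity of $N(u)\cup S_u$ by being adjacent to (and thereby joining) vertices of $N(u)$; connectivity of $N(u)\cup S_u$ then amounts to covering $N(u)$ by stars centred at vertices of $T$ placed in $S_u$, and symmetrically for $v$. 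I would phrase the existence of such a simultaneous star cover as a matching/assignment problem and invoke a standard polynomial-time matching algorithm to finish.

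\textbf{The hard part will be} the structural analysis that eliminates double-sided solutions and drives $G[T]$ to be $P_4$-free and ultimately independent: each reduction must produce only polynomially many strictly smaller $(P_2+P_4)$-free instances while provably preserving yes/no status, and the $(P_2+P_4)$-freeness has to be leveraged carefully at each branch to bound the number of cases (this is presumably where the specific value $\alpha=7$ is forced). The matching reduction at the end should be routine once the instance has been pared down to the independent case; the delicate work is entirely in the branching that gets us there, and I would expect to spend most of the proof establishing the implication ``no $7$-constant solution'' $\Rightarrow$ ``no double-sided solution'' and the accompanying case analysis for single-sided solutions.
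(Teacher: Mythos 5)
Your roadmap coincides with the paper's own strategy (normalise so that $\dist(u,v)\geq 3$ and $N(u)$, $N(v)$ are independent; test for $7$-constant solutions via Lemma~\ref{l-constant}; branch to make $G[T]$ $P_4$-free; exclude double-sided solutions; reduce single-sided to independent; finish with matching), so the approach is the right one, but what you have written defers exactly the content that constitutes the proof, and two of the deferred steps are misdescribed in ways that would not survive being made precise. First, double-sided solutions are \emph{not} killed by a direct freeness contradiction: an edge of $G[S_u]$ and an edge of $G[S_v]$, together with $u$ or $v$, need not induce a $P_2+P_4$ at all (there is no control on their adjacencies into $N(u)$ and $N(v)$). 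What the paper proves is the implication ``double-sided $\Rightarrow$ constant'': using $(P_2+P_4)$-freeness it shows that a triple $\{t,x_1,x_2\}\subseteq S_u$ built around the edge $x_1x_2$ must cover $N(v)$, and symmetrically a triple in $S_v$ covers $N(u)$, so \emph{swapping} the two triples across the partition yields a $3$-constant solution, contradicting the assumed absence of $7$-constant ones. The auxiliary fact needed here (a vertex $t$ and a disjoint edge $x_1x_2$ on the same side force every $N(u)$-neighbour of $t$ to attach to $\{x_1,x_2\}$, since otherwise a connector of at most $7$ vertices exists) is precisely where $\alpha=7$ is calibrated, and it is also reused heavily later; none of this covering mechanism appears in your sketch.

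Second, and more seriously, the endgame is not ``routine matching''. With $S_u$ independent, connectivity of $N(u)\cup S_u$ is strictly stronger than covering $N(u)$ by stars centred in $S_u$: a disjoint union of stars is disconnected. The paper must first prove a structure property for minimum covers $S_u^*\subseteq S_u$ (every centre has a private neighbour in $N(u)$, and the non-private part of $N(u)$ is nonempty and complete to $S_u^*$), then branch on $O(n^4)$ guesses per side to identify that part and contract it to a single vertex $w_u$ complete to all centres — this vertex is what restores connectivity. Even after that, covering $N(u)\setminus\{w_u\}$ by stars whose centres may serve arbitrarily many vertices is a set-cover-type problem, not a matching problem; it collapses to bipartite matching only after the ``private solution'' analysis (every vertex of $T$ adjacent to both $w_u$ and $w_v$; vertices of $N(u)$ with equal $T$-neighbourhoods collapsed to one representative; no pair of $T$-vertices covering $N(u)$ or $N(v)$, via the $2$-pair argument) forces every vertex of $N(u)\setminus\{w_u\}$ and $N(v)\setminus\{w_v\}$ to need its own private $T$-vertex, at which point Hopcroft--Karp applies. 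These reductions, each with a polynomial bound on the number of branches and a proof that yes/no status is preserved, occupy most of the paper's argument (together with the intermediate steps that make $G[T]$ first $(K_3+P_1)$-free and then a disjoint union of complete bipartite graphs via Lemma~\ref{l-p4}); your proposal names the destination but supplies none of the route.
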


\begin{proof}
\setcounter{ctrclaim}{0}
Let $(G,u,v)$ be an instance of $P_4$-{\sc Suitability}, where $G$ is a connected $(P_2+P_4)$-free graph. We may assume without loss of generality that $u$ and $v$ are of distance at least 3, that is, $u$ and $v$ are non-adjacent and $N(u)\cap N(v)=\emptyset$; otherwise $(G,u,v)$ is a no-instance. 

Recall that $T=V(G)\setminus (N[u]\cup N[v])$.
Recall also that we are looking for a partition $(S_u,S_v)$ of $T$ that is a solution for $(G,u,v)$, that is, 
$N(u)\cup S_u$ and $N(v)\cup S_v$ must both be connected. 
In order to do so we will construct partial solutions $(S_u',S_v')$, which we try 
to extend to a solution $(S_u,S_v)$ for $(G,u,v)$.
We use the {\tt Contraction Rule} from Section~\ref{s-pre} on $N(u)\cup S_u'$ and $N(v)\cup S_v'$, so that these two sets will become independent.
By Lemma~\ref{l-contract}, the resulting graph will always be $(P_2+P_4)$-free.
 For simplicity, we will denote the resulting instance by $(G,u,v)$ again. 
 After applying the {\tt Contraction Rule} the size of the set $T$ 
will be reduced if a vertex  $t\in T$ was involved in an edge contraction with a vertex from $N(u)$ or $N(v)$.
In that case we say that we {\it contracted $t$ away}.

At the beginning of our algorithm, $S_u'=S_v'=\emptyset$, and we start by applying the {\tt Contraction Rule} on $N(u)$ and $N(v)$. This leads to the following claim.

\clm{\label{c-ind} $N(u)$ and $N(v)$ are independent sets.}\\[-8pt]

\noindent
{\bf Phase 1: Exploiting the structure of $\mathbf{G[T]}$}

\medskip
\noindent
In the first phase of our algorithm, we will look into the structure of $G[T]$.
Suppose $G[T]$ contains an induced $P_4$ on vertices $a_1$, $a_2$, $a_3$, $a_4$. 
If there exists a vertex $t\in N(u)$ not adjacent to any vertex of $\{a_1,a_2,a_3,a_4\}$, then $\{u,t\}\cup \{a_1,a_2,a_3,a_4\}$ induces a $P_2+P_4$ in $G$, a contradiction. Hence, $\{a_1,a_2,a_3,a_4\}$ must cover $N(u)$. Similarly, $\{a_1,a_2,a_3,a_4\}$ must 
cover $N(v)$. Suppose $G[T]$ has another induced $P_4$ on vertices $\{b_1,b_2,b_3,b_4\}$ such that $\{a_1,a_2,a_3,a_4\}\cap \{b_1,b_2,b_3,b_4\}=\emptyset$. By the same arguments, $\{b_1,b_2,b_3,b_4\}$ also covers $N(u)$ and $N(v)$. This means that $N(u)\cup \{a_1,a_2,a_3,a_4\}$ and $N(v)\cup \{b_1,b_2,b_3,b_4\}$ are both connected. 
We put each remaining vertex of $T$ into either $S_u$ or $S_v$ (which is possible, as $G$ is connected).
This yields a ($4$-constant) solution for $(G,u,v)$.

From now on, assume that $G[T]$ contains no induced copy of $P_4$ that is vertex-disjoint from $a_1a_2a_3a_4$ (so, every other induced $P_4$ in $G[T]$ contains at least one vertex of $\{a_1,a_2,a_3,a_4\}$). 
Below we will branch into $O(n^{16})$ smaller instances in which $G[T]$ is $P_4$-free, such 
that $(G,u,v)$ has a solution if and only if at least one of these new instances has a solution.

\medskip
\noindent
{\bf Branching I} ($O(n^{16})$ branches)\\
We branch by considering every possibility for each $a_i$ $(1\leq i\leq 4$) to go into either $S_u$ or $S_v$ for some solution
$(S_u,S_v)$ of $(G,u,v)$ (if it exists). 
We do this vertex by vertex leading to a total of $2^4$ branches.
Suppose we decide to put $a_i$ in $S_u$. 
If $a_i$ is adjacent to a vertex of $N(u)$, then we apply the {\tt Contraction Rule} on $N(u)\cup \{a_i\}$ to contract $a_i$ away.  If $a_i$ is not adjacent to any vertex of $N(u)$, then we do as follows.
For each solution $(S_u,S_v)$ with $a_i\in S_u$, there must exist a shortest path $P_i$ in $G[N(u)\cup S_u]$ from $a_i$ to a 
vertex of $N(u)$ (as $N(u)\cup S_u$ is connected). As $G$ is $(P_2+P_4)$-free, $G$ is $P_7$-free. Hence, $P_i$ 
must have at most six vertices and thus 
at most four inner vertices. We consider all possibilities of choosing at most four vertices of $T$ to belong to $S_u$ as inner vertices of $P_i$. As we may need to do this for $i=1,\ldots,4$, the above leads to a total of $O(n^{16})$ additional branches.

For each branch we do as follows. For $i=1,\ldots,4$ we apply the {\tt Contraction Rule}
on $N(u)\cup \{a_i\} \cup V(P_i)$ to contract $a_i$ and the vertices of $V(P_i)$ away.
We denote the resulting instance by $(G,u,v)$ again. Note that the property (Claim~\ref{c-ind}) that $N(u)$ and $N(v)$ are independent sets is maintained. Moreover, as
every induced $P_4$ in $G[T]$ contained at least one vertex of $\{a_1,\ldots,a_4\}$, the following claim holds now as well.

\clm{\label{c-p4free} $G[T]$ is $P_4$-free.}\\[-8pt]
\noindent
We now prove the following claim.

\clm{\label{c-either} Let $(S_u,S_v)$ be a solution for $(G,u,v)$ that is not $7$-constant. Let 
$t,x_1,x_2$ be three vertices of $T$ with $tx_1\notin E(G)$, $tx_2\notin E(G)$ and $x_1x_2\in E(G)$.
If $t,x_1,x_2$ are in $S_u$, then every neighbour of $t$ in $N(u)$ is adjacent to at least one 
of $x_1,x_2$. If $t,x_1,x_2$ are in $S_v$, then every neighbour of $t$ in $N(v)$ is adjacent to at least one 
of $x_1,x_2$.}\\[-8pt]
{\it Proof of Claim~\ref{c-either}.} We assume without loss of generality that $t,x_1,x_2$ belong to $S_u$. Suppose $t$ has a neighbour $w\in N(u)$ that is not adjacent to $x_1$ and $x_2$. Suppose there exists a vertex $w' \in N(u)$ not  adjacent to any of $t,x_1,x_2$. Then, as $N(u)$ is independent by Claim~\ref{c-ind}, $\{x_1,x_2\}\cup  \{w',u,w,t\}$ is an induced $P_2+P_4$, a contradiction. Hence, $\{t,x_1,x_2\}$ covers $N(u)$. As $N(u)\cup S_u$ is connected, $G[N(u)\cup S_u]$ contains a shortest path $P$ from $t$ to~$x_1$. As $G$ is $(P_2+P_4)$-free, $G$ is $P_7$-free. Hence, $P$ has at most four inner vertices (possibly including $x_2$). As $V(P)\cup \{x_2\} \cup N(u)$ is connected and $|V(P) \cup \{x_2\}|\leq 7$, we find that $(S_u,S_v)$ is a $7$-constant solution, a contradiction. This proves the claim. \dia

\medskip
\noindent
We will use the above claim at several places in our proof, including in the next stage.

\bigskip
\noindent
{\bf Phase 2: Excluding 7-constant solutions and double-sided solutions}

\medskip
\noindent
We first show that we may exclude double-sided solutions if we have no $7$-constant solutions.
 
\clm{\label{c-double}If $(G,u,v)$ has a double-sided solution, then $(G,u,v)$ also has a $7$-constant solution.}\\[-8pt]
{\it Proof of Claim~\ref{c-double}.} For contradiction, assume that $(G,u,v)$ has a double-sided solution $(S_u,S_v)$ but no
$7$-constant solution. By definition, $G[S_u]$ and $G[S_v]$ contain some edges $x_1x_2$ and $x_1'x_2'$, respectively. 
Then $N(u)$ must contain a vertex~$w$ that is not adjacent to $x_1$ and $x_2$; otherwise the two vertices $x_1,x_2$, which belong to $S_u$, cover $N(u)$ and this would imply that $(S_u,S_v)$ is a $2$-constant solution, and thus also a $7$-constant solution.
As $N(u)\cup S_u$ is connected and $N(u)$ is an independent set by Claim~\ref{c-ind}, set~$S_u$ must contain a vertex~$t$ that is adjacent to $w$. Then, by Claim~\ref{c-either}, vertex~$t$ must be adjacent to at least one of $x_1,x_2$, say $x_1$. For the same reason,  $S_v$ contains a vertex $t'$ that is adjacent to at least one of $x_1',x_2'$, say $x_1'$, and to some vertex $w'\in N(v)$ that is not adjacent to $x_1'$ and $x_2'$. 

Let $y\in N(v)$. If no vertex of $\{t,x_1,x_2\}$ is adjacent to $y$, then $\{v,y\}\cup \{u,w,t,x_1\}$ is an induced $P_2+P_4$ in $G$, unless $wy\in E(G)$. However, in that case $\{x_1,x_2\}\cup \{u,w,y,v\}$ is an induced $P_2+P_4$, a contradiction. Hence, $\{t,x_1,x_2\}$ covers $N(v)$. For the same reason we find that $\{t',x_1',x_2'\}$ covers $N(u)$. Then $(G,u,v)$ has a $3$-constant solution $(S_u^*,S_v^*)$ (which is $7$-constant by definition) with $\{t',x_1',x_2'\}\subseteq S_u^*$ and $\{t,x_1,x_2\}\subseteq S_v^*$, a contradiction. This proves the claim. \dia

\medskip
\noindent
Recall that a solution $(S_u,S_v)$ for $(G,u,v)$ is single-sided if exactly one of $G[S_u]$, $G[S_v]$ contains an edge and independent if $S_u,S_v$ are both independent sets.
We now do as follows. First we check in polynomial time if $(G,u,v)$ has a $7$-constant solution by using Lemma~\ref{l-constant}.
If so, then we are done. From now on assume that $(G,u,v)$ has no $7$-constant solution.
Then, by Claim~\ref{c-double} it follows that $(G,u,v)$ has no double-sided solution. 

From the above, it remains to check if $(G,u,v)$ has a single-sided solution or an independent solution. If $(G,u,v)$ has a single-sided solution $(S_u,S_v)$ that is not independent, then either $S_u$ or $S_v$ is independent.
Our algorithm will first look for a solution $(S_u,S_v)$ where $S_u$ is independent. 
We say that it is doing a {\it $u$-feasibility check}.
If afterwards we have not found a solution $(S_u,S_v)$ where $S_u$ is independent, then our algorithm will repeat the same steps but now under the assumption that the set~$S_v$ is independent. That is, in that case our algorithm will perform a {\it $v$-feasibility check}.

\bigskip
\noindent
{\bf Phase 3: Doing a $\mathbf{u}$-feasibility check}

\medskip
\noindent
We start by exploring the structure of a solution $(S_u,S_v)$ that is either single-sided or independent, and where $S_u$ is an independent set. As $S_u$ and $N(u)$ are both independent sets,  $G[N(u)\cup S_u]$ is a connected bipartite graph.  Hence, $S_u$ contains a set~$S_u^*$, such that $S_u^*$ covers $N(u)$. We assume that $S_u^*$ has minimum size. Then each $s\in S_u^*$ has a nonempty set $Q(s)$ of neighbours in $N(u)$ that are not adjacent to any vertex in $S_u^*\setminus \{s\}$; otherwise we can remove $s$ from $S_u^*$, contradicting our assumption that $S_u^*$ has minimum size. We call the vertices of $Q(s)$ the {\it private} neighbours of $s$ with respect to $S_u^*$.

We note that $N(u)\cup S_u^*$ does not have to be connected. However, as $(G,u,v)$ has no $7$-constant solution, and thus no $1$-constant solution, we find that $S_u^*$ has size at least~2. We may assume that there is no vertex $t\in S_u\setminus S_u^*$, such that $N(t)\cap N(u)$ strictly contains $N(s)\cap N(u)$ for some $s\in S_u^*$ (otherwise we put $t$ in $S_u^*$ instead of~$s$).    Let $Q_u$ be the union of all private neighbour sets~$Q(s)$ ($s\in S_u^*$).  As $|S_u^*|\geq 2$, we observe that $G[Q_u\cup S_u^*]$ is the disjoint union of a set of at least two stars whose centers belong to~$S_u^*$. 

First suppose that $N(u)\setminus Q_u=\emptyset$. As $N(u)\cup S_u$ is connected and $G[Q_u\cup S_u^*]$ is the disjoint union of at least two stars, there exists a vertex~$t\in S_u\setminus S_u^*$ that is adjacent to vertices $z\in Q(s)$ and $z'\in Q(s')$ for  two distinct vertices $s,s'\in S_u^*$. As $N(u)\setminus Q_u=\emptyset$, we find that $Q(s)=N(s)\cap N(u)$. By our choice of~$S_u^*$, this means that $Q(s)$ contains at least one vertex~$w$ that is not adjacent to $t$. Similarly, $Q(s')$ contains a vertex~$w'$ that is not adjacent to $t$. By the definition of $Q(s)$ and $Q(s')$, we find that  $w$ and $z$ are not adjacent to $s'$, and $w'$ is not adjacent to $s$. Then $\{w',s'\}\cup \{w,s,z,t\}$ is an induced $P_2+P_4$ of $G$, a contradiction.

From the above we find that $N(u)\setminus Q_u\neq \emptyset$. Let $y\in N(u)\setminus Q_u$. As $S_u^*$ covers $N(u)$ and $y\notin Q_u$, we find that $y$ must be adjacent to at least two vertices~$s,s'\in S_u^*$. Suppose $y$ is not adjacent to some vertex $s^*\in S_u^*$. Let  $z\in Q(s)$ and $z^*\in Q(s^*)$. By the definition of the sets~$Q(s)$ and~$Q(s^*)$, we find that $z$ is not adjacent to $s'$ and $s^*$ and that $z^*$ is not adjacent to $s$ and $s'$. In particular it holds that $z\neq z^*$. Then $\{s^*,z^*\}\cup \{z,s,y,s'\}$ is an induced $P_2+P_4$ in $G$, a contradiction. Hence,  $y$ must be adjacent to all of $S_u^*$, that is, $N(u)\setminus Q_u$ must be complete to $S_u^*$. Note that this implies that $N(u)\cup S_u^*$ is connected.

To summarize, if $(G,u,v)$ has a solution $(S_u,S_v)$ in which $S_u$ is an independent set, then the following holds for such a solution $(S_u,S_v)$:

\begin{itemize}
\item [(P)] The set~$S_u$ contains a subset~$S_u^*$ of size at least~2 that covers $N(u)$, such that each vertex in $S_u^*$ has a nonempty set~$Q(s)$ of private neighbours with respect to $S_u^*$, and moreover, the set $N(u)\setminus Q_u$, where $Q_u=\bigcup_{s\in S_u^*}Q(s)$, is nonempty and complete to $S_u^*$.
\end{itemize}

\noindent
\emph{Remark.} We emphasize that $S_u^*$ is unknown to the algorithm, as we constructed it from the unknown $S_u$, and consequently, our algorithm does not know (yet) the sets $Q(s)$. 

\bigskip
\noindent
{\bf Phase 3a: Reducing $\mathbf{N(u)\setminus Q_u}$ to a single vertex $\mathbf{w_u}$}

\medskip
\noindent
We will now branch into a polynomial number of smaller instances,
in which $N(u)\setminus Q_u$ consists of just one single vertex~$w_u$. As we will show below, we can even identify $w_u$ and $Q_u$ for each of these new instances. Again, we will ensure that if one of these new instances has a solution, then $(G,u,v)$ has as solution. If none of these new instances has a solution, then $(G,u,v)$ may still have a solution $(S_u,S_v)$. However, in that 
case~$S_u$ is not an independent set, while $S_v$ must be an independent set. As mentioned, we will check this by doing a $v$-feasibility check as soon as we have finished the $u$-feasibility check.

\medskip
\noindent
{\bf Branching II} ($O(n^4)$ branches)\\
We will determine exactly those vertices of $N(u)$ that belong to $Q_u$ via some branching, under the assumption that 
$(G,u,v)$ has a solution $(S_u,S_v)$, where $S_u$ is independent, that satisfies (P).
By (P), $S_u^*$ consists of at least two (non-adjacent) vertices $s$ and $s'$. Let $w\in Q(s)$ and $w'\in Q(s')$. We branch by considering all possible choices of choosing these four vertices. This leads to $O(n^4)$ branches, which we each process in the way described below.

If we selected $s$ and $s'$ correctly, then $s,s'$ belong to an independent set $S_u$ that together with $S_v=T\setminus S_u$ forms a solution for $(G,u,v)$ that is not $7$-constant. This implies that $\{s,s'\}$ does not cover $N(u)$.
Hence, we can pick a vertex $w^*\in N(u)\setminus \{w,w'\}$. 
If $w^*$ is adjacent to both $s$ and $s'$, then $w^*$ must belong to $N(u)\setminus Q_u$. 
In the other case, that is, if $w^*$ is adjacent to at most one of $s,s'$, then $w^*$ must belong to $Q_u$. 
Hence, we have identified in polynomial time the (potential) sets $Q_u$ and $N(u)\setminus Q_u$. Moreover, by applying the 
{\tt Contraction Rule} on $N(u)\cup \{s,s'\}$ we can contract $s$ and $s'$ away.
This also contracts all of $N(u)\setminus Q_u$ into a single vertex which, as we mentioned above, we denote by ~$w_u$. Thus $w_u$ is complete to $S_u^*$. 

We denote the resulting instance by $(G,u,v)$ again. We also
let $T_1=N(w_u)\cap T$ and $T_2=T\setminus T_1$. Note that $S_u^*\subseteq T_1$. As $S_u^*$ covers $N(u)$ and every vertex of $S_u^*$ is adjacent to $w_u$, we find that $N(u) \cup S_u^*$ is connected. 
Due to the latter and because every vertex of $T_2$ is not in $S_u^*$, we may put without loss of generality  every vertex $t\in T_2$ with a neighbour in $N(v)$ in $S_v$. That is, we may contract such a vertex~$t$ away by applying the {\tt Contraction Rule} on $N(v)\cup \{t\}$. By the same reason, we may contract every edge between two vertices in $T_2$.
Hence, we have proven the following claim.

\clm{\label{c-t2}$T_2$ is an independent set that is anticomplete to $N(v)$.}\\[-8pt]
\noindent
Note that by definition, no vertex of $T_2$ is adjacent to $w_u$ either. In a later stage we will modify $T_2$ and this property may no longer hold. However, we will always maintain the properties that $T_2$ is independent and anticomplete to $N(v)$.

By Lemma~\ref{l-constant} we check in polynomial time if $(G,u,v)$ has a $7$-constant solution. If so, then we are done.
From now on suppose that $(G,u,v)$ has no $7$-constant solution. Recall that we are still looking for a single-sided or independent solution $(S_u,S_v)$, where $S_u$ is an independent set.
We first show that we can modify $G$ in polynomial time such that afterwards $G[T]$ is $(K_3+P_1)$-free.

Suppose $G[T]$ contains an induced $K_3+P_1$, say with vertices $x_1,x_2,x_3,y$ and edges $x_1x_2$, $x_2x_3$, $x_3x_1$. Consider a solution $(S_u,S_v)$ for $(G,u,v)$, where $S_u$ is an independent set.
Recall that we already checked on $7$-constant solutions. Hence, $(S_u,S_v)$ is not $7$-constant. 
As $S_u$ is an independent set, at least two of $x_1,x_2,x_3$, say $x_1,x_2$, must belong to $S_v$.
Then $(S_v \cap (N[x_1]\cup N[x_2]))\cup N(v)$ is connected; otherwise, as $S_v\cup N(v)$ is connected by definition, there would exist a vertex 
$t\in S_v\setminus  (N[x_1]\cup N[x_2])$  with a neighbour in $N(v)$ that is not adjacent to $x_1$ and $x_2$, contradicting Claim~\ref{c-either}.
 As $y$ does not belong to $N[x_1]\cup N[x_2]$, this means that $y$ is not needed for $S_v$.

From the above we can do as follows.
If $y$ has a neighbour in $N(u)$, then we contract~$y$ away by applying the {\tt Contraction Rule} on $N(u)\cup \{y\}$. 
Otherwise, if~$y$ has no neighbour in $N(u)$, then $y\in T_2$. 
As $N(u)\cup S_u^*$ is connected for some set $S^*_u \subseteq S_u\cap T_1$, this means
that~$y$ is not needed for $S_u$ either.
Hence, we may contract the edge between $y$ and an arbitrary neighbour of $y$ (as $G$ is connected, $y$ has at least one such neighbour).
We apply this rule, in polynomial time, for every induced copy of $K_3+P_1$ in $G[T]$.
Note that Claim~\ref{c-t2} is maintained and that in the end the following claim holds.

\clm{\label{c-t1}$G[T]$ is $(K_3+P_1)$-free.}\\[-8pt]
\noindent
We will now do some further branching to obtain $O(n)$ smaller instances in which $G[T_1]$ is $K_3$-free, such that 
the following holds. 
If one of these new instances has a solution, then $(G,u,v)$ has as solution. If none of these new instances has a solution, then $(G,u,v)$ may still have a solution $(S_u,S_v)$, but in that case $S_u$ is not an independent set while $S_v$ must be an independent set; this will be verified when we do the $v$-feasibility check.

\medskip
\noindent
{\bf Branching III} ($O(n)$ branches)\\
We consider all possibilities of putting one vertex~$t\in T_1$ in $S_u$. This leads to $O(n)$ branches.
For each branch we do as follows.
As $t$ is adjacent to $w_u$ (because $t\in T_1$), we can contract $t$ away using the {\tt Contraction Rule} on $N(u)\cup \{t\}$. As $S_u$ is independent, every neighbour~$t'$ of~$t$ in~$T_1$ must go to~$S_v$. 
If such a neighbour~$t'$ is adjacent to a vertex of $N(v)$, this means that we may 
contract $t'$ away by using the {\tt Contraction Rule} 
on $N(v)\cup \{t'\}$. 
If $t'$ has no neighbour in $N(v)$, then we put $t' $ in $T_2$. By the {\tt Contraction Rule} we may contract all edges between
$t'$ and its neighbours in $T_2$, 
such that $T_2$ is an independent set again that is anticomplete to $N(v)$, so Claim~\ref{c-t2} is still valid (but $T_2$ may now contain vertices adjacent to $w_u$). We denote the resulting instance by $(G,u,v)$ again. 
As $G[T]$, and consequently, $G[T_1]$ is $(K_3+P_1)$-free due to Claim~\ref{c-t1}, we find afterwards that the following holds for each branch.

\clm{\label{c-t1b}$G[T_1]$ is $K_3$-free.}\\[-8pt]
\noindent
By Lemma~\ref{l-constant} we check in polynomial time if $(G,u,v)$ has an $7$-constant solution. If so, then we are done.
From now on assume that $(G,u,v)$ has no $7$-constant solution. 
Note that $(G,u,v)$ has no double-sided solution either, as then the original instance has a double-sided solution, which we already ruled out (alternatively, apply Claim~\ref{c-double}).
We will focus on the following task (recall that a solution $(S_u,S_v)$ for $(G,u,v)$ is independent if both $S_u$ and $S_v$ are independent sets).

\bigskip
\noindent
{\bf Phase 3b: Looking for independent solutions after branching} 

\medskip
\noindent
We will now branch to $O(n^5)$ smaller instances for which the goal is to find an independent solution.
As before, if one of the newly created instances has a solution, then $(G,u,v)$ has as solution. If none of these new instances has a solution, then $(G,u,v)$ may still have a solution $(S_u,S_v)$. However, in that case $S_u$ is not independent and $S_v$ must be an independent set. This will be verified when we do the $v$-feasibility check.

We say that an instance $(G,u,v)$ satisfies the $(*)$-property if the following holds:

\medskip
\noindent
$(*)$ If $(G,u,v)$ has a solution $(S_u,S_v)$ where $S_u$ is an independent set, then $(G,u,v)$ has an independent solution.

\medskip
\noindent
Let $D_1,\ldots,D_q$ be the connected components of $G[T]$ for some $q\geq 1$. 
First suppose that every $D_i$ consists of a single vertex. Then $G[T]$ is an independent set. Hence, any solution for $(G,u,v)$ will be independent. We conclude that $(*)$ holds already.
Now suppose that at least one of $D_1,\ldots,D_q$, say $D_1$, has more than one vertex.

We first consider the case where another $D_i$, say $D_2$, also has more than one vertex. 
We claim that $(*)$ is again satisfied already. In order to see this, assume that $(G,u,v)$ has 
a solution $(S_u,S_v)$, where $S_u$ is an independent set, but $S_v$ contains two adjacent vertices $x_1$ and $x_2$.
We assume without loss of generality that $x_1$ and $x_2$ belong to $D_2$. Hence, $V(D_1)$ is anticomplete to $\{x_1,x_2\}$.
Suppose $D_1\cap S_v\neq \emptyset$. Let $t\in D_1\cap S_v$.
Recall that $(S_u,S_v)$ is not a $7$-constant solution, as $(G,u,v)$ does not have such solutions.
Then, by Claim~\ref{c-either}, we find that $S_v$ has a set $S_v'$ that contains $x_1,x_2$ but not $t$, such that every vertex of $N(v)$ is adjacent to a vertex of $S_v'$ and $S_v'\cup N(v)$ is connected. Hence, we may put $t$ into $S_u$. 
Similarly, we may put every other vertex of $D_1\cap S_v$ into $S_u$. As $T_2$ is an independent set by Claim~\ref{c-t2}, at least one vertex of $D_1$ belongs to $T_1$ and is thus adjacent to $w_u\in N(v)$ by definition. This means that
$(S_u\cup (V(D_1)\cap S_v),S_v\setminus V(D_1))$ is another solution for $(G,u,v)$. However, this solution is double-sided, a contradiction. So, from now on, we assume that $D_1$ contains more than one vertex and that $D_2,\ldots,D_q$ each have exactly one vertex.

Recall that $T_2$ is an independent set that is anticomplete to $N(v)$ due to Claim~\ref{c-t2}. Suppose $t\in T_2$ does not belong
to $D_1$. Then $t$ is an isolated vertex of $G[T]$ that is not adjacent to any vertex of $N(v)$. As $G$ is connected, $t$ is adjacent to at least one vertex of $N(u)$. We apply the {\tt Contraction Rule} on $N(u)\cup \{t\}$ to contract $t$ away. Afterwards, we find that every vertex of $T_2$ must belong to $D_1$.

Let $B_1,\ldots,B_p$ be the connected components of $G[T_1\cap V(D_1)]$ for some $p\geq 1$. By Claim~\ref{c-p4free}, $G[T]$, and thus $G[T_1\cap V(D_1)]$, is $P_4$-free (note that we only contracted edges during the branching and thus maintained $P_4$-freeness due to Lemma~\ref{l-contract}).
As $G[T_1]$ is also $K_3$-free by Claim~\ref{c-t1}, each $B_i$ is a complete bipartite graph on one or more vertices due to Lemma~\ref{l-p4}. 

First suppose that $p=1$. Recall that $T_2$ is an independent set by Claim~\ref{c-t2} that belongs to $D_1$.
In this case we show how to branch into $O(n^2)$ new and smaller instances, such that $(G,u,v))$ has a solution $(S_u,S_v)$, in which $S_u$ is an independent set, if and only if one of these new instances has such a solution. Moreover, each new instance will have the property that either $(*)$ has been obtained or $p\geq 2$ holds.

\medskip
\noindent
{\bf Branching IV} ($O(n^2)$ branches)\\
We consider each possibility of choosing one vertex $t \in B_1$ to be placed in $S_u$. This leads to $O(n)$ branches. In each 
branch we contract $t$ away by the {\tt Contraction Rule} on $N(u)\cup \{t\}$ (note that $tw_u\in E(G)$, as $t\in T_1$). 
Since $S_u$ is an independent set, we must place all neighbours of $t$ in $S_v$. In order to contract these neighbours away using 
the {\tt Contraction Rule} we may need to branch once more by considering every vertex that is in 
$B_1$ and that has at least one neighbour in $N(v)$. This leads to $O(n)$ additional branches. Hence, the total number of branches for this stage is $O(n^2)$.
For each branch we observe that $T_1$ has become an independent set (as the vertices in the components $D_2,\ldots, D_q$ form an independent set as well).
By applying the {\tt Contraction Rule} we ensure that $T_2$ is an independent set that 
remains anticomplete to $N(v)$.

First suppose that $T_1\cap V(D_1)$ consists of a single vertex $t^*$. If $T_2\neq \emptyset$, then we do as follows.
Recall that we are looking for a solution $(S_u,S_v)$ with $T_2\subseteq S_v$. As $N(v)\cup S_v$ must be connected but $T_2$ is anticomplete to $N(v)$ by Claim~\ref{c-t2}, vertex~$t^*$ must be placed into $S_v$. Hence, if $t^*$ is not adjacent to a vertex in $N(v)$, we discard the branch. Otherwise we contract $T_2\cup \{t^*\}$ away by applying the
{\tt Contraction Rule} on $N(v)\cup T_2\cup \{t^*\}$. Hence, we obtained $T_2=\emptyset$. As $T_1$ is an independent set, this means that $(*)$ holds.

Now suppose that $T_1\cap V(D_1)$ consists of more than one vertex. As $T_1$ is an independent set, this means that $G[T_1\cap V(D_1)]$ has $p\geq 2$ 
connected components. Hence, we have arrived in the case where $p\geq 2$. We denote the resulting instance by $(G,u,v,)$ again, and we let also $B_1,\ldots,B_p$ denote the connected components of $G[T_1\cap V(D_1)]$ again.

\medskip
\noindent
From the above we are now in the situation where $(G,u,v)$ is an instance for which $p\geq 2$ holds. By Lemma~\ref{l-p4} and because $D_1$ is connected and $P_4$-free, $D_1$ has a spanning complete bipartite graph~$B^*$. As $p\geq 2$, all vertices of $V(B_1)\cup \cdots \cup V(B_p)$ belong to the same partition class of $B^*$. By definition, these vertices are in $T_1$. Hence, as $T_2$ is an independent set in $D_1$, all vertices of $T_2$ form the other bipartition class of $B^*$.
Consequently, $T_2$ is complete to $T_1\cap V(D_1)$. We will do some branching.

\medskip
\noindent
{\bf Branching V}  ($O(n)$ branches)\\
Every vertex of $T_2$ will belong to $S_v$ in any solution $(S_u,S_v)$ where $S_u$ is an independent set, but without having any 
neighbours in $N(v)$ due to Claim~\ref{c-t2}. This means that $S_v$ contains at least one vertex~$t$ of $V(D_1)\cap T_1$. 
We branch by considering all possibilities of choosing this vertex~$t$. 
Indeed, as $T_2$ is complete to $T_1$, it suffices to check single vertices $t\in T_1$ that have a neighbour in $N(v)$.
This leads to $O(n)$ branches.
For each branch we do as follows. We contract 
the vertices of $T_2\cup \{t\}$ away using the {\tt Contraction Rule} on $N(v)\cup T_2\cup \{t\}$. 
We denote the resulting instance by $(G,u,v)$ and observe that $T_2=\emptyset$, so $T=T_1$. 

\medskip
\noindent
Note that $G[T]=G[T_1]$ now consists of connected components $B_1',\ldots,B_{p'}'$ for some $p'\geq 1$, where each $B_i'$ is a complete bipartite graph.
If every $B_i'$ consists of a single vertex, then $G[T]$ is an independent set. Hence, any solution for $(G,u,v)$ will 
be independent. We conclude that $(*)$ holds.
Now suppose that at least one of $B_1',\ldots,B_{p'}'$, say $B_1'$, has more than one vertex.
If another $B_i'$, say $B_2'$, also has more than one vertex, then  $(*)$ is also satisfied already. 
We can show this in the same way as before, namely when we proved this for the sets $D_1,\ldots,D_q$. 
From now on we may assume that $B_1'$ consists of more than one vertex and that $B_2',\ldots,B_{p'}'$ have only one vertex.
So, in particular, $B_1'$ is a complete bipartite graph on at least two vertices. We will do some branching.

\medskip
\noindent
{\bf Branching VI} ($O(n^2)$ branches)\\
We consider each possibility of choosing one vertex $t \in B_1'$ to be placed in $S_u$. This leads to $O(n)$ branches. In each 
branch we contract $t$ away by applying the {\tt Contraction Rule} on $N(u)\cup \{t\}$ (note that $tw_u\in E(G)$, as $t\in T_1$, so we can indeed do this). 
Since $S_u$ is an independent set, we must place all neighbours of $t$ in $S_v$. In order to contract these neighbours away using 
the {\tt Contraction Rule} we proceed as follows. All neighbours of $t$ that are adjacent to $N(v)$ we can contract away by applying the
{\tt Contraction Rule} on $N(v)\cup \{t\}$. If all neighbours of $t$ disappeared this way, this yields $T=T_1$, an independent
set as required. Otherwise, we need to include another vertex of $B'_1$ into $S_v$. So we branch on the $O(n)$ vertices $t' \in B_1'$
that are adjacent to $v$. Contracting such a $t'$ away makes all other neighbours of $t$ adjacent to $N(v)$ and we can contract
them away. In any case, eventually we will end up with $T=T_1$ being independent.
Consequently, $S_v$ must be an independent set for any solution $(S_u,S_v)$ where $S_u$ is an independent set. This means  that 
we achieved $(*)$.

\medskip
\noindent
If we have not yet found a solution, then by achieving $(*)$, as shown above, we have reduced the problem to $O(n^5)$ instances, for which we search for an independent solution. We consider these new instances one by one. For simplicity,
we denote the instance under consideration by $(G,u,v)$ again.

\bigskip
\noindent
{\bf Phase 3c: Searching for private solutions}

\medskip
\noindent
In this phase we introduce a new type of independent solution that we call private. In order to define this notion, we first describe our branching procedure which will get us to this new notion. 

\medskip
\noindent
{\bf Branching VII.} ($O(n^4)$ branches)\\
First we process $N(v)$ in the same way as we did for $N(u)$ in Branching II. That is, in polynomial time via $O(n^4)$ branches, we find a partition of $N(v)$ into a set~$Q_v$ of private neighbours and a vertex~$w_v$ that will be complete to $S_v$. To be more specific, if $(G,u,v)$ has a solution $(S_u,S_v)$ in which $S_u$ and $S_v$ are independent sets, then the following holds for such a solution 
$(S_u,S_v)$:

\begin{itemize}
\item [(P1)] The independent set~$S_u$ contains a subset~$S_u^*$ of size at least~2 that covers $N(u)$, such that each $s\in S_u^*$ has a nonempty set $Q_u(s)$ of private neighbours with respect to $S_u^*$, and moreover, the set $N(u)\setminus Q_u$, where
$Q_u=\bigcup Q_u(s)$, consists of a single vertex~$w_u$ that is complete to $S_u^*$.
\item [(P2)] The independent set~$S_v$ contains a subset~$S_v^*$ of size at least~2 that covers $N(v)$, such that each $s\in S_v^*$ has a nonempty set $Q_v(s)$ of private neighbours with respect to $S_v^*$, and moreover, the set $N(v)\setminus Q_v$, where $Q_v=\bigcup Q_v(s)$, consists of a single vertex~$w_v$ that is complete to $S_v^*$.
\end{itemize}

\noindent
We call an independent solution $(S_u,S_v)$ satisfying (P1) and (P2) a \emph{private solution}. We emphasize that by now all branches are guaranteed to have private solutions
or no solutions at all. Thus in what follows we will only search for private solutions. While doing this we may modify the instance $(G,u,v)$, but we will always ensure that private solutions are pertained.
In particular, if we contract a vertex $t \in S_u^*$ to $w_u$ using the {\tt Contraction Rule} on $N(u)\cup \{t\}$, then this leads to a private solution $(S_u,S_v)$ with $t \notin S_u^*$.
Then all private neighbours of $t$ become adjacent to $w_u$ and, by the 
{\tt Contraction Rule}, they get contracted to $w_u$ as well. However, if $t \notin S_u^*$, then contracting $t$ to $w_u$ will  make 
the neighbours of $t$ in $N(u)$ adjacent to $w_u$ and the {\tt Contraction Rule} contracts these to $w_u$. As a consequence, 
some vertices in $S_u^*$ may have no private neighbours in $N(u)$ and hence leave $S_u^*$. If this reduces $|S_u^*|$ to $1$, then we will notice this by checking for $1$-constant solutions, which takes polynomial time due to Lemma~\ref{l-constant}. If we find a $1$-constant solution, then we stop and conclude that our original instance is a yes-instance. Otherwise, we know that $|S_u^*| \ge 2$, and hence private solutions pertain (should such solutions exist at all). In the remainder, we will perform this test implicitly whenever we apply the {\tt Contraction Rule}.

We now prove the following two claims.
 
\clm{\label{c-both}Every vertex of $T$ is adjacent to both $w_u$ and $w_v$.}\\[-8pt]
\emph{Proof of Claim~\ref{c-both}.} 
Consider a vertex $t\in T$. 
First suppose that $t\in T$ is neither adjacent to $w_u$ nor to $w_v$.
As $G$ is connected, $t$ will be adjacent to some other vertex in $S_u$ or $S_v$ in every solution $(S_u,S_v)$.
Hence, $(G,u,v)$ has no independent solutions, and thus no private solutions, and we can discard the branch.
From now on assume that every vertex in $T$ is adjacent to at least one of $w_u$, $w_v$.
If $t\in T$ is adjacent to only $w_u$ and not to $w_v$, then by the same argument we must apply the {\tt Contraction Rule} on
$N(u)\cup \{t\}$. Similarly, if $t\in T$ is adjacent to only $w_v$ and not to $w_u$, then we must apply the {\tt Contraction Rule} on
$N(v)\cup \{t\}$.
We discard a branch whenever two adjacent vertices in $T$ were involved in an edge contraction with some neighbour in $N(u)$, or with some neighbour in $N(v)$. \dia
 
\clm{\label{c-bipartite} If $(G,u,v)$ has a private solution, then $G[T]$ must be the disjoint union of one or more complete bipartite graphs.}\\[-8pt]
\noindent
{\it Proof of Claim~\ref{c-bipartite}.} If $G[T]$ is not bipartite, then $(G,u,v)$ has no independent solution $(S_u,S_v)$, as
$T=S_u\cup S_v$. Hence, $(G,u,v)$ has no private solution. Assume that $G[T]$ is bipartite.
By Claim~\ref{c-p4free}, $G[T]$ is $P_4$-free. Then the claim follows by Lemma~\ref{l-p4}.\dia

\medskip
\noindent
By Claim~\ref{c-bipartite} we may assume that $G[T]$ is the disjoint union of one or more complete bipartite graphs; otherwise we discard the branch.

We now prove that $T$ can be changed into an independent set via some branching.
Suppose $T$ is not an independent set yet.
Let $B_1,\ldots,B_r$, for some $r\geq 1$, denote the connected components of $G[T]$ that have at least one edge (note that 
$G[T]$ may also contain some isolated vertices).
By Claim~\ref{c-bipartite}, every $B_i$ is complete bipartite. 

\clm{\label{c-4} If $(G,u,v)$ has a private solution, then $r\leq 3$.}\\[-8pt]
\noindent
\emph{Proof of Claim~\ref{c-4}.}
Assume that $r\geq 4$. We will prove that $(G,u,v)$ has no private solution.
Suppose that $T$ contains four connected components with edges, say $B_1,\ldots, B_4$, for which the following holds:
$V(B_1)$ covers some subset $A_1^u\subseteq N(u)$ and $V(B_2)$ covers some subset $A_2^u\subseteq N(u)$, such that
$A_1^u\setminus A_2^u \neq \emptyset$, whereas 
$V(B_3)$ covers some subset $A_3^v\subseteq N(v)$ and $V(B_4)$ covers some subset $A_4^v\subseteq N(v)$, such that
$A_3^v\setminus A_4^v \neq \emptyset$. 
Let $w\in A_1^u\setminus A_2^u$, say $w$ is adjacent to vertex $s$ of $B_1$ (and not to any vertex of $B_2$). Let $x_1$ and $x_2$ be two adjacent vertices of $B_2$, which exist as $B_2$ contains an edge.
Suppose $V(B_1)\cup V(B_2)$ does not cover $N(u)$. Then there exists a vertex $w'$ that has no neighbour in $V(B_1)\cup V(B_2)$. However, then $\{x_1,x_2\}\cup \{s,w,u,w'\}$ is an induced $P_2+P_4$ of $G$, a contradiction. Hence,
$V(B_1) \cup V(B_2)$ covers $N(u)$. Similarly, we find that $V(B_3)\cup V(B_4)$ covers $N(v)$. Then, as each vertex of $T$ is adjacent to both $w_u$ and $w_v$, we find that $G[N(u)\cup V(B_1)\cup V(B_2)]$ and $G[N(u)\cup V(B_3)\cup V(B_4)]$ are connected.
This is not possible, as then the original instance has a double-sided solution, which we already ruled out after Claim~\ref{c-double}.

If two sets from $V(B_1),\ldots,V(B_r)$, say $V(B_1)$ and $V(B_2)$, cover the same subset of $N(u)$ \emph{and} the same subset of $N(v)$, then we can apply the {\tt Contraction Rule} on $N(u)\cup V(B_1)$ and on $N(v)\cup V(B_2)$ to
find that the original instance has a double-sided solution if it has a solution. However, as we already ruled this out, this
is not possible either.

Now consider the sets $B_1$ and $B_2$. From the above, we deduce the following. We may assume without loss of generality that 
$V(B_1)$ and $V(B_2)$ cover different subsets of~$N(u)$. This implies that 
$V(B_3),\ldots, V(B_r)$ all cover the same subset $A$ of $N(v)$. We can also apply the above on $B_1$ and $B_3$ to find that $B_1$ and $B_3$ must either cover different subsets of $N(u)$ or different subsets of $N(v)$. Suppose $B_1$ and $B_3$ cover different subsets of~$N(v)$. Then, again from the above, $B_2, B_4,\ldots, B_r$ must cover the same subset of $N(u)$. As $B_1$ and $B_2$ cover different subsets of~$N(u)$, this means that $B_1$ and $B_4$ cover different subsets of $N(u)$. This implies that $B_2$ must cover the same 
set $A$ as $B_4,\ldots,B_r$. As $B_3$ covers $A$ as well, this means that $B_2$ and $B_3$ cover the same subset of $N(v)$. Hence, they must cover different subsets of $N(u)$. However, the latter implies that $B_1$ and $B_4$ cover the same subset of $N(v)$. As $B_4$ covers $A$, just like $B_3$, we find that $B_1$ and $B_3$ cover the same subset $A$ of $N(v)$, a contradiction. Hence, $B_1$ and $B_3$ cover the same subset of $N(v)$, namely $A$, and by symmetry the same holds for $B_2$. 
 
As $V(B_1)$ covers $A_1^u$ and $V(B_2)$ covers $A_2^u$ such that
$A_1^u\setminus A_2^u \neq \emptyset$, we can use the same arguments as before to deduce 
that $V(B_1)$ and $V(B_2)$ must cover $N(u)$. 
We put the vertices of $B_1$ and $B_2$ into $S_u$ and the vertices of
$B_i$ for $i\geq 3$ plus all other (isolated) vertices of $T$ into $S_v$. 
If $(S_u,S_v)$ is a solution for $(G,u,v)$, then the original solution has a double-sided solution, which we already ruled out. Hence, there exists a vertex $z$ of $N(v)$ that is not adjacent to any vertex
of $T\setminus (V(B_1)\cup V(B_2))$. However, as every $V(B_i)$ covers the same subset $A$ of $N(v)$, no vertex of $B_1$ and $B_2$ is adjacent to $z$ either. This implies that $(G,u,v)$ is a no-instance, meaning that we can discard this branch.\dia

\medskip
\noindent
By Claim~\ref{c-4} we may assume that $r\leq 3$, that is, $G[T]$ has at most three connected components $B_i$ with an edge;
otherwise we discard the branch.
As $r\leq 3$, we can now do some  branching to obtain $O(1)$
 smaller instances in which $T$ is an independent set, such 
that $(G,u,v)$ has a private solution if and only if at least one of these new instances has a private solution.

\medskip
\noindent
{\bf Branching VIII} ($O(1)$ branches)\\
For $i=1,\ldots, r$, let $Y_i$ and $Z_i$ be the bipartition classes of $B_i$. 
Let $1\leq i\leq r$. As $S_u$ and $S_v$ must be independent sets and every $B_i$ is complete bipartite, either $Y_i$ belongs to $S_u$ and $Z_i$ belongs to $S_v$, or the other way around. We branch by considering both possibilities. We do this for
each $i\in \{1,\ldots, r\}$. This leads to $2^r\leq 2^3$ branches, as $r\leq 3$ due to Claim~\ref{c-4}. In each branch we apply the {\tt Contraction Rule} to contract
$Y_i$ and $Z_i$ away (note that here our remark about pertaining private solutions applies).
We consider every resulting instance separately. We denote such an instance again by $(G,u,v)$, for which we have proven the following claim.

\clm{\label{c-indep} $T$ is an independent set.}\\[-8pt]
\noindent
We now continue as follows. As $T$ is an independent set by Claim~\ref{c-indep}, the sets~$S_u$ and~$S_v$ of any solution 
$(S_u,S_v)$ will  be independent (should $(G,u,v)$ have a solution). Recall also that $\{w_u,w_v\}$ is complete to $T$ by Claim~\ref{c-both}. We are looking for a 
private
solution $(S_u,S_v)$, which we recall is an independent solution for which sets $S_u^*$ and $S_v^*$ exist so that (P1) and (P2) are satisfied.
We make the following observation. Let $R=T\setminus (S_u^*\cup S_v^*)$ be the set of all other vertices of $T$. Consider a vertex $z\in R$. We note that if $z\in S_u$, then $(S_u\setminus \{z\},S_v\cup \{z\})$ is also a solution for $(G,u,v)$; this follows from (P1) and (P2) and the fact that
$w_v$ is adjacent to $z\in T$.
Similarly, if $z\in S_v$, then $(S_u\cup \{z\},S_v\setminus \{z\})$ is a solution as well.

We prove the following four claims.

\clm{\label{c-atleast} Let $w\in N(u)\cup N(v)$. Then we may assume without loss of generality that $w$ is adjacent to at least two vertices of $T$.}\\[-8pt]
\noindent
{\it Proof of Claim~\ref{c-atleast}.} Suppose $N(u)$ or $N(v)$, say $N(u)$, contains a vertex~$w$ that is adjacent to at most 
one vertex of $T$. If $w$ has no neighbours in $T$, then $(G,u,v)$ has no solution and we discard the branch. Suppose $w$ has
exactly one neighbour $z\in T$. Then $z$ belongs to 
$S_u^*$ for every (private) solution $(S_u,S_v)$ of $(G,u,v)$ (assuming $(G,u,v)$ is a yes-instance). Hence, we may apply the 
{\tt Contraction Rule} on $N(u)\cup \{z\}$. We apply this operation exhaustively, while pertaining private solutions as before.

It may happen that in this process it turns out that two vertices $z,z'$ both belong to 
$S_u^*$ for every (private) solution $(S_u,S_v)$ of $(G,u,v)$,
while they share a neighbour in $N(u)\setminus \{w_u\}$. This contradicts (P1). Hence, in this case we find that
$(G,u,v)$ does not have a private solution and we may discard the branch.
Otherwise, in the end, we have obtained in polynomial time an instance with the desired property.
As we ensure that private solutions pertain, the size of $S_u^*$ remains at least $2$. \dia 

\clm{\label{c-notall} Let $z\in T$. Then we may assume without loss of generality that $z$ is non-adjacent to at least one vertex of $N(u)$ and to at least one vertex of $N(v)$.}\\[-8pt]
\noindent
{\it Proof of Claim~\ref{c-notall}.} Suppose $z\in T$ is adjacent to all vertices of $N(u)$ or to all vertices of $N(v)$, say 
to all vertices of $N(u)$. Then we can check in polynomial time if $T\setminus \{z\}$ covers $N(v)$. If so, then 
$\{z,T\setminus \{z\}$ is a ($1$-constant) solution of $(G,u,v)$ and we can stop. Otherwise $z$ must belong to $S_v$ for any 
solution $(S_u,S_v)$ of $(G,u,v)$. 
In that case we may apply the {\tt Contraction Rule} on 
$N(v)\cup \{z\}$. We apply this operation exhaustively (we again recall that we ensure that private solutions pertain by checking for $1$-constant solutions).
Moreover, it may happen that during this process two vertices $z,z'$ will end up 
in the same set~$S_u$ or~$S_v$ for any private solution $(S_u,S_v)$, while sharing a neighbour in $N(u)\setminus \{w_u\}$. 
As the sets~$S_u$ and~$S_v$ are independent in a private solution $(S_u,S_v)$, this means 
that $(G,u,v)$ does not have a private solution and we may discard the branch.
Otherwise, in the end, we have obtained in polynomial time an instance with the desired property.\dia

\clm{\label{c-three} 
Let $s$ and $t$ be any two distinct vertices of $T$. Then we may assume without loss of generality that either $N(u)\cap N(s)\cap N(t)=\{w_u\}$; or $N(u)\cap N(s)=N(u)\cap N(t)$; or $\{s,t\}$ covers $N(u)$. Similarly, we may assume without loss of generality that either $N(v)\cap N(s)\cap N(t)=\{w_v\}$; or $N(v)\cap N(s)=N(v)\cap N(t)$; $\{s,t\}$ covers $N(v)$.}\\[-8pt]
\noindent
{\it Proof of Claim~\ref{c-three}.} By symmetry it suffices to prove only the first statement. Assume $T$ contains two vertices $s$ and $t$, for which there exist distinct vertices $w\in (N(u)\setminus \{w_u\})\cap N(s)\cap N(t)$; $w'\in (N(u)\cap N(s))\setminus N(t)$ and $w''\in N(u)\setminus (N(s)\cup N(t))$. Note that $w_u\notin \{w,w',w''\}$. 
Recall that in this stage we are looking for private solutions for $(G,u,v)$.
Consider an arbitrary private solution $(S_u,S_v)$ (if it exists). Then $w''\in Q_u(z)$ for some $z\in S_u^*$. Note that $z\notin \{s,t\}$, as neither $s$ nor $t$ is adjacent to $w''$.

The above means that $z$ must be adjacent to at least one of $w,w'$, as otherwise the set $\{w'',z\}\cup \{w',s,w,t\}$ induces a $P_2+P_4$ in $G$, which is not possible. Hence, at least one of $w$ or $w'$ will be a private neighbour of $z$, that is, will belong to $Q_u(z)$. As $s$ is adjacent to both $w$ and $w'$ and $N(u)\setminus Q_u=\{w_u\}$ (see property~(P1) of the definition of a private solution), this means that $s$ does not belong to $S_u^*$. 
We conclude that $s$ belongs to $R=T\setminus (S_u^*\cup S_v^*)$ or to $S_v^*$ for any private solution $(S_u,S_v)$ of $(G,u,v)$.
As $s$ is adjacent to $w_v\in N(v)$, we may therefore apply the {\tt Contraction Rule} on $N(v)\cup \{s\}$, 
ensuring persistence of private solutions (in case there are any) in the usual way. 
We do this exhaustively, and in the end we find that the claim holds. Note that we obtained this situation in polynomial time. \dia

\clm{\label{c-wu} Let $s$ and $t$ be any two distinct vertices of $T$ that together cover $N(u)$. 
Then 
there exists a nonempty set 
$A(v)\subseteq N(v)$ that is complete to $\{s,t\}$ and
anticomplete to $T\setminus \{s,t\}$, or  $(G,u,v)$ has a $2$-constant solution. The same holds for $u$ and $v$ interchanged.}\\[-8pt] 
{\it Proof of Claim~\ref{c-wu}.} Assume without loss of generality that $\{s,t\}$ covers $N(u)$. Then we find that $(\{s,t\}, T\setminus \{s,t\})$ is a $2$-constant solution unless 
$N(v)$ contains a nonempty set $A(v)$ that is anticomplete to $T\setminus \{s,t\}$).
By Claim~\ref{c-atleast} we find that $A(v)$ is complete to $\{s,t\}$.
\dia

\medskip
\noindent
We will use Claims~\ref{c-atleast}--\ref{c-wu} to prove the following claim.

\clm{\label{c-notboth} Let $s$ and $t$ be two distinct vertices in $T$ such that $\{s,t\}$ covers 
$N(u) \cup N(v)$.
Then $(G,u,v)$ has a $2$-constant solution.}\\[-8pt]
\noindent
{\it Proof of Claim~\ref{c-notboth}.} 
Assume that $(G,u,v)$ has no $2$-constant solution. Then by Claim~\ref{c-wu}, there is a nonempty set $A(u)\subseteq N(u)$ that is complete to $\{s,t\}$ and anticomplete to $T\setminus \{s,t\}$. 
Similarly, there exists a nonempty set $A(v)\subseteq N(v)$ that is complete to $\{s,t\}$ and anticomplete to $T\setminus \{s,t\}$.
By Claim~\ref{c-notall} we find that $N(u)$ contains a vertex~$w$ that is not adjacent to $s$. As $\{s,t\}$ covers $N(u)$, this means that $w$ is adjacent to $t$. By Claim~\ref{c-atleast} we find that $w$ is adjacent to some vertex $s'\in T\setminus \{s,t\}$.
As $s'$ is anticomplete to $A(u)$, Claim~\ref{c-three} tells us that $\{s',t\}$ covers $N(u)$.
By the same argument, there exists a vertex $t'$ such that $\{s,t'\}$ covers $N(v)$. Putting $s',t$ in $S_u$ and $s,t'$ in $S_v$ (together with all other vertices of $T$) yields a $2$-constant solution $(S_u,S_v)$ of $(G,u,v)$. This is a contradiction. \dia

\medskip
\noindent
We continue as follows. By Lemma~\ref{l-constant} we check in polynomial time if $(G,u,v)$ has a $2$-constant solution. If so, then we are done. Otherwise, we obtain the following claim, which immediately follows from Claim~\ref{c-notboth} and the fact that
if one pair of vertices of $T$ covers $N(u)$ and another pair covers $N(v)$, then we obtained a $2$-constant solution.

\clm{\label{c-onlyr} We may assume without loss of generality that every pair of (distinct) vertices $\{s,t\}$ in $T$ does not cover $N(u)$; hence, $\{s,t\}$ may only cover $N(v)$.}\\[-8pt]
We call a pair of vertices $s,t$ of $T$ a {\it $2$-pair} if $\{s,t\}$ covers $N(v)$.
Let $T_v$ be the set of vertices of $T$ involved in a 2-pair. We continue by proving the following claim.

\clm{\label{c-2pair} Every vertex of $T_v$ belongs to exactly one $2$-pair.}\\[-8pt]
\noindent
{\it Proof of Claim~\ref{c-2pair}.} Let $s\in T_v$. By definition, $s$ belongs to at least one $2$-pair. 
For contradiction, suppose that $s$ belongs to more than one $2$-pair.
Then there exist vertices $t_1$, $t_2$ in $T_v$, such that $\{s,t_1\}$ and $\{s,t_2\}$ both cover $N(v)$.
As $(G,u,v)$ has no $2$-constant solution, $N(u)$ contains a nonempty set $A_1(u)\subseteq N(u)$ that is complete to $\{s,t_1\}$ and 
anticomplete to $T\setminus \{s,t_1\}$ due to Claim~\ref{c-wu}. By the same claim, $N(u)$ contains a nonempty set $A_2(u)\subseteq N(u)$ that is 
complete to $\{s,t_2\}$ and anticomplete to $T\setminus \{s,t_2\}$.
Let $w_1\in A_1(u)$ and $w_2\in A_2(u)$; note that $w_2\neq w_u$. Then $w_1$ is adjacent to $s$ but not to $t_2$, whereas $w_2\neq w_u$ is a common neighbour of $s$ and $t_2$. As $\{s,t_2\}$ does not cover $N(u)$ due to Claim~\ref{c-onlyr}, this contradicts Claim~\ref{c-three}. \dia

\medskip
\noindent
We next prove that actually $T_v = \emptyset$.
Suppose that $T_v\neq \emptyset$. 
Let $(s,t) \in T_v$. By Claim~\ref{c-wu}, there exists a nonempty subset $A(u)$ of $N(u)$ that is complete to $\{s,t\}$
and anticomplete to $T\setminus \{s,t\}$. As $(G,u,v)$ has no $2$-constant solution, $s$ and $t$ do not cover all of $N(u)$.  
By Claim~\ref{c-notall}, we find that $s$ is not adjacent to some vertex
$w\in N(v)$. As $(s,t)$ is a 2-pair, $t$ is adjacent to $w$.
By Claim~\ref{c-atleast}, we find that $w$ is adjacent to a vertex $z\in T\setminus \{s,t\}$. From Claim~\ref{c-2pair} 
it follows that $(t,z)$ is not a 2-pair, so $t$ and $z$ do not cover all of $N(v)$. 
By Claim~\ref{c-three} and the fact that $t$ and $z$ have a common neighbour
different from $w_v$, namely $w$, this means that $t$ and $z$ are adjacent to the same neighbours in $N(v)$. However, then $(s,z)$ is 2-pair, 
contradicting Claim~\ref{c-2pair}.
This means that we have indeed proven the following claim.

\clm{\label{c-tempty} $T_v=\emptyset$.}\\[-8pt]

\noindent
{\bf Phase 3d: Translating the problem into a matching problem}

\medskip
\noindent
We are now ready to translate the instance $(G,u,v)$ into an instance of a matching problem.
Recall that $w_u$ and $w_v$ are the vertices in $N(u)$ and $N(v)$ that are complete to $T$.
By Claims~\ref{c-three} and~\ref{c-tempty} we can partition $N(u)\setminus \{w_u\}$ into sets $N_1(u)\cup \dots \cup N_q(u)$ 
for some $q\geq 1$ such that two vertices of $N(u)$ have the same set of neighbours in $T$ if and only if they both belong to $N_i(u)$ for some
$i\in \{1,\ldots,q\}$. Similarly, we can partition $N(v)\setminus \{w_v\}$ into sets $N_1(v)\cup \dots \cup N_r(v)$ for some $r\geq 1$ such that two vertices of $N(v)$ have the same set of neighbours in $T$ if and only if they both belong to $N_i(v)$ for some
$i\in \{1,\ldots,r\}$.
We may remove all but one vertex of each $N_h(u)$ and each $N_i(v)$ to obtain an equivalent instance, which we denote by $(G,u,v)$ again. 

Let $G'$ be the graph obtained from $G$ after removing the vertices $u,v,w_u,w_v$ and every edge between a vertex of $N(u)$ and a vertex of $N(v)$. Note that $G'$ is bipartite with partition classes $(N(u)\setminus \{w_u\})\cup (N(v)\setminus \{w_v\})$ and $T$. It remains to compute a maximum matching~$M$ in~$G'$. We can do this by using the
Hopcroft-Karp algorithm, which runs in $O(m\sqrt{n})$-time on bipartite graphs with $n$ vertices and $m$ edges.
If $|M|=|N(u)|+|N(v)|-2$, then each vertex in $(N(u)\setminus \{w_u\})\cup (N(v)\setminus \{w_v\})$ is incident to an edge of $M$, and hence, we found a (private) solution for $(G,u,v)$. If $|M|<|N(u)|+|N(v)|-2$, then $(G,u,v)$ has no (private) solution, and we discard the branch.

\medskip
\noindent
The above concludes the description of the $u$-feasibility check. If we found a branch with a solution, then
we translate it in polynomial time to a solution for the original instance. Otherwise we perform Phase 4.

\bigskip
\noindent
{\bf Phase 4: Doing a $\mathbf{v}$-feasibility check}

\medskip
\noindent
 As mentioned, our algorithm now does a $v$-feasibility check, that is, it checks for the existence of a solution $(S_u,S_v)$, where $S_v$ is an independent set and $G[S_u]$ may contain edges. As we can repeat exactly the same steps as in Phase~3, this phase takes polynomial time as well. This concludes the description of our algorithm.

\medskip
\noindent
The correctness of our algorithm follows from the above description. We now analyze its run-time. The branching is done in eight 
stages, namely Branching I-VIII and yields a total number of $O(n^{30})$ branches. As explained in each step above, processing
each branch created in Branching I-VI  until we start branching again takes polynomial time. 
Checking for $1$-constant solutions to ensure survival of private solutions takes constant time as well. Moreover, processing each of the branches created in Branch VII takes polynomial time as well. 
We conclude that the total running time of our algorithm is polynomial.  \qed
\end{proof}

Via Lemma~\ref{l-reduce} and a reduction to $P_4$-{\sc Suitability} we obtain:

\begin{lemma}\label{l-top5}\hspace*{-0.1cm}$P_5$-{\sc Suitability} can be solved in polynomial time for $(P_2+P_4)$-free graphs.
\end{lemma}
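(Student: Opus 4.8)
The plan is to combine three ingredients: the bound on distances forced by $(P_2+P_4)$-freeness, the distance-reduction of Lemma~\ref{l-reduce}, and the already-established polynomial-time algorithm for {\sc $P_4$-Suitability} on $(P_2+P_4)$-free graphs (Lemma~\ref{l-top4}). Throughout, contractions are safe because Lemma~\ref{l-contract} guarantees that contracting an edge keeps the graph $(P_2+P_4)$-free, so all instances produced stay inside the class.

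First I would pin down the basic structure. Let $(G,u,v)$ be an instance of {\sc $P_5$-Suitability} with $G$ connected and $(P_2+P_4)$-free. Since an induced $P_7$ contains an induced $P_2+P_4$, the graph $G$ is $P_7$-free; as every shortest path is induced, this gives $\dist_G(u,v)\le 5$. Moreover, in any $P_5$-witness structure with $W(p_1)=\{u\}$ and $W(p_5)=\{v\}$ we have $N(u)\subseteq W(p_2)$ and $N(v)\subseteq W(p_4)$, and $W(p_2)$ is anticomplete to $W(p_4)$; hence every $u$--$v$ path must pass through $W(p_3)$, so $\dist_G(u,v)\ge 4$. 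Thus a yes-instance satisfies $\dist_G(u,v)\in\{4,5\}$, and otherwise I report no.

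Next I would eliminate the distance-$5$ case. If $\dist_G(u,v)=5$, fix a shortest path $P$; it has six vertices, whose two ends are $u$ and $v$. In any $P_5$-witness the four interior vertices of $P$ lie in the three middle bags $W(p_2),W(p_3),W(p_4)$, and since consecutive vertices of $P$ lie in equal or neighbouring bags, some two consecutive interior vertices share a bag. Contracting that interior edge of $P$ therefore yields a $P_5$-witness of $G/e$ with the same single-vertex end bags and $\dist(u,v)=4$; this is exactly the mechanism of Lemma~\ref{l-reduce}, whose proof only needs $d\ge k$. Branching over the at most three interior edges of $P$ thus produces polynomially many $(P_2+P_4)$-free instances with $\dist_G(u,v)=4$, such that the original instance is a yes-instance if and only if one of them is.

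It then remains to settle the distance-$4$ case, and this is the step I expect to be the main obstacle. When $\dist_G(u,v)=4$ the sets $N(u)$ and $N(v)$ are anticomplete, and $(u,v)$ is $P_5$-suitable exactly when $V(G)\setminus\{u,v\}$ admits a partition into connected sets $W(p_2)\supseteq N(u)$, $W(p_3)$ and $W(p_4)\supseteq N(v)$ with $W(p_2)$ anticomplete to $W(p_4)$ and $W(p_3)$ adjacent to both. Merging the two leftmost middle bags shows that $P_5$-suitability implies $P_4$-suitability of $(u,v)$, but the converse fails: one must additionally split a middle bag into two connected pieces realising the separation of $W(p_2)$ from $W(p_4)$. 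My plan is to guess a bounded amount of this splitting data --- a vertex of the middle bag $W(p_3)$ together with short connecting subpaths into $W(p_2)$ and into $W(p_4)$, whose lengths are bounded because $G$ is $P_7$-free, exactly as short in-bag paths are handled in the proof of Lemma~\ref{l-top4} --- and to contract the guessed structure via the {\tt Contraction Rule}. After these polynomially many branches the residual task of simultaneously connecting and separating the two sides collapses, in each branch, to a single {\sc $P_4$-Suitability} test on a $(P_2+P_4)$-free graph with single-vertex ends, which Lemma~\ref{l-top4} solves in polynomial time. Correctness follows from the witness equivalence of Lemma~\ref{l-outer} together with Lemma~\ref{l-reduce}, and since there are only polynomially many branches, each processed in polynomial time, the whole algorithm runs in polynomial time.
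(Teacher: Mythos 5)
Your first two steps are sound and match the paper: the distance bounds $4\le \dist_G(u,v)\le 5$, and the use of (the $d\ge k$ form of) Lemma~\ref{l-reduce} to branch the distance-$5$ case down to distance-$4$ instances, are exactly how the paper begins. The gap is the distance-$4$ case, which you yourself flag as the main obstacle but then only sketch. The entire content of the paper's proof at this point is a structural consequence of $(P_2+P_4)$-freeness that you never derive: in any $P_5$-witness with $W(p_1)=\{u\}$ and $W(p_5)=\{v\}$ (after making $N(u)$ independent by the {\tt Contraction Rule}, and assuming $|N(u)|\ge 2$), every vertex $s\in W(p_2)\setminus N(u)$ having a neighbour $t\in N(u)$ must be adjacent to \emph{all} of $N(u)$ --- otherwise, taking $t'\in N(u)\setminus N(s)$ and $w\in N(v)$, the edge $vw$ together with the induced path $s\,t\,u\,t'$ forms an induced $P_2+P_4$, since $vw$ is anticomplete to $\{u\}\cup W(p_2)$ by the witness conditions. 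This is what allows the connectivity of the whole left bag to be certified by a \emph{single} guessed vertex: the paper branches over $O(n)$ choices of such an $s$, contracts $N(u)\cup\{s\}$ to one new vertex $u'$, and solves $P_4$-{\sc Suitability} with $u'$ as the new left end (the case $|N(u)|=1$ being reduced separately by deleting $u$). Mere $P_7$-freeness, which is all your sketch invokes, does not give anything of this kind.

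Moreover, the mechanism you propose would not work as described. You guess a vertex of the \emph{middle} bag $W(p_3)$ together with ``short connecting subpaths into $W(p_2)$ and into $W(p_4)$'' and contract the guessed structure. But contracting a path that crosses from $W(p_3)$ into $W(p_2)$ or into $W(p_4)$ merges vertices that lie in \emph{different} bags of every witness consistent with the guess, so this contraction does not preserve the $P_5$-question; the in-bag path guessing inside Lemma~\ref{l-top4} is witness-preserving precisely because it stays within a single bag. The assertion that afterwards everything ``collapses to a single $P_4$-Suitability test with single-vertex ends'' is exactly the equivalence that has to be proved, and no correspondence between witnesses of the old and new instances is given in either direction; it is not even specified which vertex pair forms the ends of that test (it cannot be $(u,v)$, since, as you note yourself, $P_4$-suitability of $(u,v)$ is strictly weaker than $P_5$-suitability). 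In fact the middle bag is the part that must \emph{not} be guessed: it is unbounded and becomes the second bag of the residual $P_4$-instance handled by the Lemma~\ref{l-top4} oracle, while the bounded guessing has to be done in the near-end bag $W(p_2)$, and justifying that a single-vertex guess suffices there is precisely the structural fact above. So the core of the lemma remains unproved in your proposal.
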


\begin{proof}
Let $(G,u,v)$ be an instance of $P_5$-{\sc Suitability}, where $G$ is a connected $(P_2+P_4)$-free graph. We may assume without loss of generality that $u$ and $v$ are of distance at least~4 from each other, as otherwise $(G,u,v)$ is a no-instance. By the {\tt Contraction Rule} and Lemma~\ref{l-contract} we may also assume without loss of generality that $N(u)$ and $N(v)$ are both independent sets; otherwise if, say, $G[N(u)]$ contains an edge~$e$, then we contract $e$ to obtain an equivalent but smaller instance $(G',u,v)$, where $G'$ is also $(P_2+P_4)$-free due to Lemma~\ref{l-contract}. 

First suppose $|N(u)|=1$, say $N(u)=\{u'\}$ for some $u'\in V(G)$. Then we solve $P_4$-{\sc Suitability} on instance $(G-u,u',v)$. We can do this in polynomial time due to Lemma~\ref{l-top4}.
 
Now suppose $|N(u)|\geq 2$. Note that $\dist(u,v) \leq 5$, as $G$ is $P_7$-free. By Lemma~\ref{l-reduce} we may assume that $\dist(u,v)=4$. We will explore the structure of the $P_5$-witness bags $W(p_2)$ and $W(p_3)$ should they exist. Let $Z$ be the set that consists of all vertices $z$ with $\dist(u,z)=\dist(z,v)=2$. Then $Z$ must be a subset of $W(p_3)$. As $N(u)$ is not connected, $W(p_2)$ must contain at least one other vertex~$s$ adjacent to some vertex $t \in N(u)$. Suppose $s$ is non-adjacent to some other vertex $t'\in N(u)$. Let $w$ be a neighbour of $v$. As $s \in W(p_2)$ and $w \in W(p_4)$, we find that $s$ and $w$ are not adjacent. Then the set $\{v,w\}\cup \{s,t,u,t'\}$ induces a $P_2+P_4$ in $G$, a contradiction. Hence, $s$ is adjacent to every vertex of $N(u)$. We consider all possibilities of choosing vertex $s$ from the set $V(G)\setminus (N[u]\cup N[v]\cup Z)$. This leads to $O(n)$ branches. In each branch we contract the set $N(u)\cup \{s\}$ to a single vertex $u'$. Let $G'$ be the resulting graph. Then we solve $P_4$-{\sc Suitability} on instance $(G',u',v)$. As~$G'$ is $(P_2+P_4)$-free by Lemma~\ref{l-contract}, we can do this in polynomial time due to Lemma~\ref{l-top4}. 

From the above we conclude that we can check in polynomial time if $(u,v)$ is a $P_5$-suitable pair of~$G$. \qed
\end{proof}

We use Lemma~\ref{l-top5} to prove Lemma~\ref{l-top6}.

\begin{lemma}\label{l-top6}\hspace*{-0.1cm}$P_6$-{\sc Suitability} can be solved in polynomial time for $(P_2+P_4)$-free graphs.
\end{lemma}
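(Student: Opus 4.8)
The plan is to follow the paper's stated strategy of turning a $P_k$-{\sc Suitability} instance into a polynomial number of $P_{k-1}$-{\sc Suitability} instances: here I would reduce $P_6$-{\sc Suitability} to a polynomial number of $P_5$-{\sc Suitability} instances and then invoke Lemma~\ref{l-top5}. As in the proof of Lemma~\ref{l-top5}, I would first apply the {\tt Contraction Rule} together with Lemma~\ref{l-contract} to assume that $N(u)$ and $N(v)$ are independent. Since $G$ is $(P_2+P_4)$-free, hence $P_7$-free, we have $\dist(u,v)\le 5$; and since any $P_6$-witness structure with $W(p_1)=\{u\}$ and $W(p_6)=\{v\}$ forces a shortest $u$--$v$ path to meet all six bags, we also need $\dist(u,v)\ge 5$. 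Thus we may assume $\dist(u,v)=5$, as otherwise $(G,u,v)$ is a no-instance. Note that this distance is already minimal for $P_6$, so (unlike in Lemma~\ref{l-top5}) Lemma~\ref{l-reduce} cannot be used to shrink it; instead we shrink the target path. Observe that in any solution $N(u)\subseteq W(p_2)$ and $N(v)\subseteq W(p_5)$.

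The core idea is to collapse the bag $W(p_2)$ to a single vertex $u'$, converting a $P_6$-witness structure into a $P_5$-witness structure with $u'$ as an endpoint. If $|N(u)|=1$, say $N(u)=\{u'\}$, no connector is needed. If $|N(u)|\ge 2$, then since $N(u)$ is independent and $W(p_2)$ is connected, $W(p_2)$ must contain a vertex $s\notin N(u)$ adjacent to some $t\in N(u)$; I would show, exactly as in Lemma~\ref{l-top5}, that $s$ is complete to $N(u)$, since otherwise a non-neighbour $t'\in N(u)$ of $s$ together with a neighbour $w$ of $v$ makes $\{v,w\}\cup\{s,t,u,t'\}$ an induced $P_2+P_4$. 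I would then branch over the $O(n)$ candidate vertices $s\notin N[u]\cup N[v]$ that are complete to $N(u)$. In either case I form $u'$ by contracting $N(u)\cup\{s\}$ (respectively, $N(u)$) to a single vertex; then $\dist(u',v)=4$, so $(u',v)$ is a legitimate $P_5$-{\sc Suitability} instance.

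A subtlety is that once we have contracted to $u'$, the vertex $u$ and every vertex whose entire neighbourhood lay inside $N(u)\cup\{s\}$ become pendant neighbours of $u'$. Such a pendant can never be placed in a $P_5$-witness structure in which $\{u'\}$ is a singleton endpoint bag, so it would spuriously destroy suitability; on the other hand it is dead weight inside $W(p_2)$ and irrelevant to $P_6$-suitability of $(u,v)$. I would therefore delete $u$ together with all pendant neighbours of $u'$, obtaining a graph $G''$, and solve $P_5$-{\sc Suitability} on $(G'',u',v)$ via Lemma~\ref{l-top5}. The backward direction of correctness is routine: given a $P_5$-witness structure of $G''$ with $W(q_1)=\{u'\}$, one uncontracts $u'$ to the connected set $N(u)\cup\{s\}$, reinserts the deleted pendants into the second bag (they attach to $N(u)\cup\{s\}$), and prepends $W(p_1)=\{u\}$ to obtain a $P_6$-witness structure of $G$.

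The main obstacle is the forward direction: assuming $(u,v)$ is $P_6$-suitable and that $s$ was guessed correctly, I must show that $(u',v)$ is $P_5$-suitable in $G''$. The only real work is to push the surviving vertices of $W(p_2)\setminus(N(u)\cup\{s\})$ into $W(p_3)$ while keeping that bag connected, and this is exactly where $(P_2+P_4)$-freeness is indispensable. Suppose, after deleting the pendants on $u'$, some component $C$ of what remains of $W(p_2)\setminus(N(u)\cup\{s\})$ were anticomplete to $W(p_3)$. Then $C$ is also anticomplete to $u$ and to $W(p_4)\cup W(p_5)\cup W(p_6)$, and $C$ cannot be a single vertex, since such a vertex would have all its neighbours in $N(u)\cup\{s\}$ and hence be a pendant on $u'$ that was already deleted. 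So $C$ contains an edge $c_1c_2$; taking an induced $P_4$ from the sub-path visiting $W(p_3),W(p_4),W(p_5),W(p_6)$ and ending at $v$ (which exists because $\dist(u',v)=4$ yields an induced shortest path through these bags) produces an induced $P_2+P_4$, a contradiction. Hence every such component reaches $W(p_3)$, so $(W(p_2)\setminus(N(u)\cup\{s\}))\cup W(p_3)$ is connected and serves as the new third bag, giving the required $P_5$-witness structure with $\{u'\}$ as endpoint. Finally, the algorithm makes $O(n)$ guesses for $s$, each followed by one polynomial-time call to the $P_5$-{\sc Suitability} algorithm of Lemma~\ref{l-top5}, so the whole procedure runs in polynomial time.
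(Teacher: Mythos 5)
Your proposal is correct and follows essentially the same route as the paper's own proof: assume $N(u)$ and $N(v)$ independent via the {\tt Contraction Rule}, handle $|N(u)|=1$ separately, otherwise use $(P_2+P_4)$-freeness (the edge $\{v,w\}$ against the induced path $s,t,u,t'$) to argue that a connector $s\in W(p_2)$ must be complete to $N(u)$, branch over the $O(n)$ candidates for $s$, contract $N(u)\cup\{s\}$ to a single vertex $u'$, and invoke Lemma~\ref{l-top5}.

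The one place where you go beyond the paper is your treatment of pendant neighbours of $u'$, and this is not mere pedantry: the paper's proof literally contracts $N(u)\cup\{s\}$ to $u'$ and solves $P_5$-{\sc Suitability} on $(G',u',v)$, which leaves $u$ itself (and any vertex whose entire neighbourhood lies inside $N(u)\cup\{s\}$) as a degree-one neighbour of $u'$. Such a vertex is forced into the bag $W(q_2)$ of any witness structure with $W(q_1)=\{u'\}$, yet it is isolated inside that bag, so the reduced instance is a no-instance even when $(u,v)$ is $P_6$-suitable; one can build small $(P_2+P_4)$-free examples (e.g.\ two vertices $s,z$ both complete to $N(u)$ with $N(z)\subseteq N(u)\cup\{s\}$, followed by a path to $v$) on which the reduction, read literally, fails. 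Your fix --- deleting the pendants, proving via $(P_2+P_4)$-freeness that every surviving component of $W(p_2)\setminus(N(u)\cup\{s\})$ must reach $W(p_3)$ (an edge in such a component together with the induced $P_4$ inside $W(p_3)\cup W(p_4)\cup W(p_5)\cup W(p_6)$ would give an induced $P_2+P_4$), and reinserting the pendants into the second bag in the backward direction --- supplies exactly the completeness argument the paper's one-line reduction omits. Your additional observation that the distance must equal $5$ exactly (rather than the paper's ``at least $5$'') is also correctly justified. In short, your write-up takes the paper's approach but is the more rigorous of the two.
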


\begin{proof}
Let $(G,u,v)$ be an instance of $P_6$-{\sc Suitability}, where $G$ is a connected $(P_2+P_4)$-free graph. We may assume without loss of generality that $u$ and $v$ are of distance at least~5 from each other, as otherwise $(G,u,v)$ is a no-instance. We may also assume without loss of generality that $N(u)$ and $N(v)$ are both independent sets; otherwise if, say, $G[N(u)]$ contains an edge~$e$, then we contract $e$ to obtain an equivalent but smaller instance $(G',u,v)$, where $G'$ is also $(P_2+P_4)$-free due to Lemma~\ref{l-contract}. 

First suppose $|N(u)|=1$, say $N(u)=\{u'\}$ for some $u'\in V(G)$. Then we solve $P_5$-{\sc Suitability} on instance $(G-u,u',v)$. We can do this in polynomial time due to Lemma~\ref{l-top5}. 

Now suppose $|N(u)|\geq 2$.  We assume $W(p_1)=\{u\}$ and we will explore the structure of the $P_6$-witness bag $W(p_2)$ should it exist. As $N(u)$ is not connected, $W(p_2)$ must contain at least one other vertex~$s$. Suppose that $s$ is adjacent to some vertex $t\in N(u)$ and non-adjacent to some other vertex $t'\in N(u)$. Let $w$ be a neighbour of $v$. Then the set $\{v,w\}\cup \{s,t,u,t'\}$ induces a $P_2+P_4$ in $G$, a contradiction. Hence, $s$ is adjacent to every vertex of $N(u)$. We consider all possibilities of choosing vertex $s$ from the set $V(G)\setminus (N[u]\cup N[v])$. This leads to $O(n)$ branches. In each branch we contract the set $N(u)\cup \{s\}$ to a single vertex $u'$. Let $G'$ be the resulting graph. Then we solve $P_5$-{\sc Suitability} on instance $(G',u',v)$. As~$G'$ is $(P_2+P_4)$-free by Lemma~\ref{l-contract}, we can do this in polynomial time due to Lemma~\ref{l-top5}. 

From the above we conclude that we can check in polynomial time if $(u,v)$ is a $P_6$-suitable pair of $G$. \qed
\end{proof}

We now combine Lemmas~\ref{l-outer} and~\ref{l-trivial} with Lemmas~\ref{l-top4}--\ref{l-top6} to obtain the following theorem.

\begin{theorem}\label{t-p2p4}
The {\sc Longest Path Contractibility} problem is polynomial-time solvable for $(P_2+P_4)$-free graphs.
\end{theorem}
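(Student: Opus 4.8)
The plan is to reduce {\sc Longest Path Contractibility} for $(P_2+P_4)$-free graphs to a bounded number of $P_k$-{\sc Suitability} queries, each of which we can already solve by the preceding lemmas. First I would note that since $G$ is $(P_2+P_4)$-free, it is in particular $P_7$-free, so $G$ cannot be contracted to $P_k$ for any $k\geq 7$: a $P_k$-witness structure with $k\geq 7$ would, by picking one vertex from each of the first seven bags in order, yield an induced $P_7$ (consecutive bags are adjacent, non-consecutive bags are non-adjacent), contradicting $P_7$-freeness. Hence the length of the longest path to which $G$ contracts is at most $6$, and the only values of $k$ that need to be tested are $k\in\{1,2,3,4,5,6\}$.

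Next I would invoke Lemma~\ref{l-outer}, which tells us that $G$ contains $P_k$ as a contraction if and only if $G$ has a $P_k$-suitable pair. For each fixed $k\in\{4,5,6\}$ we therefore iterate over all $O(n^2)$ pairs of non-adjacent vertices $(u,v)$ and test whether $(u,v)$ is $P_k$-suitable; by Lemmas~\ref{l-top4}, \ref{l-top5} and~\ref{l-top6} respectively, each such test runs in polynomial time for $(P_2+P_4)$-free graphs. For the small cases $k\leq 3$ we simply apply Lemma~\ref{l-trivial}, which guarantees a polynomial-time decision procedure. Since there are only a constant number of relevant values of $k$, and for each we perform a polynomial number of polynomial-time suitability checks, the overall procedure runs in polynomial time.

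Finally I would assemble these checks into the answer for {\sc Longest Path Contractibility}: we return the largest $k\leq 6$ for which some $P_k$-suitable pair is found (equivalently, for which the $P_k$-{\sc Contractibility} test succeeds). There is no genuine obstacle in this proof — the real work has already been carried out in establishing the three suitability lemmas, of which Lemma~\ref{l-top4} ($P_4$-{\sc Suitability}) is by far the hardest, as the excerpt's lengthy case analysis makes clear. The only point requiring a line of justification is the $P_7$-freeness bound that caps $k$ at $6$; once that is in place, the theorem follows by a routine combination of the earlier results.
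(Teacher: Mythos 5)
Your proposal is correct and takes essentially the same route as the paper's own proof: cap the target length by noting that $G$ is $P_7$-free and hence does not contain $P_7$ as a contraction, then decide $P_k$-contractibility for $k=6,5,4$ by combining Lemma~\ref{l-outer} with Lemmas~\ref{l-top6}, \ref{l-top5} and~\ref{l-top4}, and settle $k\leq 3$ via Lemma~\ref{l-trivial}. One small caveat: your justification of the cap is imprecise, since picking one vertex from each of the first seven bags need not yield an induced path (adjacency of bags only guarantees \emph{some} edge between them, not between your chosen representatives); the standard argument---which is also what the paper implicitly relies on when asserting that $P_r$-free graphs do not contain $P_r$ as a contraction---is that a shortest path in $G$ from $W(p_1)$ to $W(p_k)$ must meet every bag in order and, being a shortest path, is induced.
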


\begin{proof}
Let $G$ be a connected $(P_2+P_4)$-free graph. We may assume without loss of generality that $G$ has at least one edge. Note that $G$ is $P_7$-free. Hence, $G$ does not contain $P_7$ as a a contraction. By combining Lemmas~\ref{l-top4}--\ref{l-top6} with Lemma~\ref{l-outer} we can check in polynomial time if $G$ contains $P_k$ as a contraction for $k=6,5,4$. If not, then we check if $G$ contains $P_3$ as a contraction by using Lemma~\ref{l-trivial} combined with Lemma~\ref{l-outer}.  If not then, as $G$ has an edge, $P_2$ is the longest path to which $G$ can be contracted to. \qed
\end{proof}

\subsection{The Case $\mathbf{H=P_1+P_2+P_3}$}\label{s-p1p2p3}

We will prove that {\sc Longest Path Contractibility} is polynomial-time solvable for $(P_1+P_2+P_3)$-free graphs.

We will start by showing that $P_4$-{\sc Suitability} is polynomial-time solvable for $(P_1+P_2+P_3)$-free graphs.
The proof of this result uses similar but more simple arguments than the proof of Lemma~\ref{l-top4}.

\begin{lemma}\label{l-top4c}
The $P_4$-{\sc Suitability} problem can be solved in polynomial time for $(P_1+P_2+P_3)$-free graphs.
\end{lemma}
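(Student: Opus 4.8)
The plan is to follow the same overall strategy as in the proof of Lemma~\ref{l-top4} for $(P_2+P_4)$-free graphs, but to exploit the stronger structural constraints imposed by forbidding $P_1+P_2+P_3$. Let $(G,u,v)$ be an instance of $P_4$-{\sc Suitability} with $G$ connected and $(P_1+P_2+P_3)$-free. As before, I would assume $u$ and $v$ are at distance at least $3$ (so $N(u)\cap N(v)=\emptyset$ and $u,v$ are non-adjacent), apply the {\tt Contraction Rule} on $N(u)$ and $N(v)$ to make both neighbourhoods independent (justified by Lemma~\ref{l-contract}), and work with $T=V(G)\setminus(N[u]\cup N[v])$, seeking a partition $(S_u,S_v)$ of $T$ such that $N(u)\cup S_u$ and $N(v)\cup S_v$ are both connected. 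The key observation is that $P_1+P_2+P_3$ is a much smaller pattern than $P_2+P_4$: forbidding it should force the connecting structure inside $S_u$ and $S_v$ to be extremely short. Concretely, because $G$ is $(P_1+P_2+P_3)$-free it is in particular $P_6$-free, so any shortest path realizing connectivity inside a witness bag has at most five vertices, hence at most three inner vertices; this immediately tightens the constant-solution bounds and should let me work with a small value of $\alpha$ in Lemma~\ref{l-constant}.

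The first step is to check for an $\alpha$-constant solution (for a small constant $\alpha$ coming from the $P_6$-freeness bound) using Lemma~\ref{l-constant}; if one exists we are done in polynomial time. Assuming none exists, I would replay the phase structure of Lemma~\ref{l-top4}: first analyze $G[T]$, showing that any induced obstruction in $G[T]$ together with a non-covered vertex of $N(u)$ or $N(v)$ produces a forbidden $P_1+P_2+P_3$, which forces strong covering properties. The crucial simplification is that, because $P_1+P_2+P_3$ has a free isolated vertex ($P_1$) plus a $P_2$ plus a $P_3$, whenever a vertex $t\in S_u$ has a neighbour $w\in N(u)$ not dominated by two adjacent vertices $x_1x_2\in S_u$, one can usually extract the isolated vertex from an un-covered neighbour and the $P_2+P_3$ from $\{x_1,x_2\}$ and $\{t,w,u\}$ (or similar), yielding the forbidden induced subgraph. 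This gives an analogue of Claim~\ref{c-either} with even better parameters, allowing me to rule out double-sided solutions exactly as in Claim~\ref{c-double}, and then to split into $u$-feasibility and $v$-feasibility checks searching for single-sided or independent solutions.

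Once double-sided solutions are excluded, the plan is to carry out a $u$-feasibility check: assuming $S_u$ is independent, $G[N(u)\cup S_u]$ is connected bipartite, so $S_u$ contains a minimal cover $S_u^*$ of $N(u)$ with private-neighbour sets $Q(s)$, and I would reprove the structural property~(P) — that $N(u)\setminus Q_u$ is nonempty and complete to $S_u^*$ — by exhibiting a $P_1+P_2+P_3$ whenever it fails. The branching steps (Branching~II reducing $N(u)\setminus Q_u$ to a single vertex $w_u$, the reductions making $G[T]$ $(K_3+P_1)$-free and $G[T_1]$ $K_3$-free, and the passage to private solutions) should transfer with only the constants changing, and I would again end by reducing the search for an independent/private solution to computing a maximum matching in an auxiliary bipartite graph $G'$, solvable in polynomial time via Hopcroft--Karp. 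The main obstacle I anticipate is reproving the covering and private-neighbour claims with the new forbidden subgraph: the arguments in Lemma~\ref{l-top4} are tuned to place an induced $P_2$ on one side and an induced $P_4$ on the other, whereas here I must instead locate an induced $P_1$, an induced $P_2$ and an induced $P_3$ simultaneously among $\{u,v\}\cup N(u)\cup N(v)\cup T$. Verifying that the three prescribed pieces are pairwise anticomplete (so that together they induce exactly $P_1+P_2+P_3$ rather than a denser graph) is where the care is needed; however, since $P_1+P_2+P_3$ is a strictly smaller and sparser pattern, I expect every place where the earlier proof produced a $P_2+P_4$ to yield a $P_1+P_2+P_3$ at least as easily, so the analysis should be, as the surrounding text promises, simpler overall.
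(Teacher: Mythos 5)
Your high-level plan (constant-solution check via Lemma~\ref{l-constant}, a minimum cover of $N(u)$ with private neighbours, and a final reduction to bipartite matching) does match where the paper's proof of Lemma~\ref{l-top4c} ends up, but two concrete steps in your sketch are wrong. First, your constant $\alpha$ rests on the claim that $(P_1+P_2+P_3)$-free graphs are $P_6$-free. This is false: $P_7$ contains no induced $P_1+P_2+P_3$, since any six of its seven vertices induce $P_6$, $P_1+P_5$, $P_2+P_4$ or $2P_3$ (at most two components, whereas $P_1+P_2+P_3$ has three). The correct consequence is only $P_8$-freeness, which is exactly what the paper uses: it checks for $8$-constant solutions, and its contradiction arguments ("a shortest connecting path has at most seven vertices, so adding one more vertex witnesses an $8$-constant solution") need $\alpha\geq 8$. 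So the constants here are \emph{larger}, not tighter, than in the $(P_2+P_4)$ case; with an $\alpha$ derived from supposed $P_6$-freeness, the implication "no $\alpha$-constant solution $\Rightarrow$ structural property" does not close, and the later search could miss solutions that are $8$-constant but not $\alpha$-constant. Second, your key forbidden-subgraph construction is not induced: you take the isolated vertex to be an uncovered neighbour $w'\in N(u)$ while placing $u$ inside the $P_3=\{t,w,u\}$, but $w'$ is adjacent to $u$, so the pieces are not anticomplete. The device that makes the whole proof work is to use $v$ itself as the $P_1$ --- it is anticomplete to $N[u]\cup T$ --- as in the paper's configuration $\{v\}\cup\{s,t\}\cup\{w,u,w'\}$; your sketch never identifies this.

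Beyond these errors, your plan to replay the entire cascade of Lemma~\ref{l-top4} (double-sided exclusion, separate $u$- and $v$-feasibility checks, single-sided solutions, the $(K_3+P_1)$- and $K_3$-freeness reductions, eight branching stages) defers precisely the parts that do \emph{not} transfer verbatim and are in fact unnecessary. The paper's proof is much shorter because $(P_1+P_2+P_3)$-freeness yields two strong facts at the outset: (i) once $8$-constant solutions are excluded, \emph{every} solution is independent --- an edge $st$ in $S_u$ forces $\{s,t\}$ to cover all but at most one vertex of $N(u)$ (else $\{v\}\cup\{s,t\}\cup\{w,u,w'\}$ is induced), and either case produces an $8$-constant solution; and (ii) in a minimum cover $S_u^*$ every vertex has \emph{exactly one} private neighbour, since two private neighbours $w_1,w_2$ of $s$ together with a private edge $s'w_3$ and $\{v\}$ would induce $P_1+P_2+P_3$. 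These two facts eliminate the single-sided/double-sided case analysis entirely and collapse all branching into a single stage of $O(n^8)$ branches (guessing $s,s',t,t'$ and their unique private neighbours on both sides at once), followed by the matching step. As written, your proposal is a template substitution with incorrect parameters and an invalid central construction, not a proof.
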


\begin{proof}
\setcounter{ctrclaim}{0}
Let $(G,u,v)$ be an instance of $P_4$-{\sc Suitability}, where $G$ is a connected $(P_1+P_2+P_3)$-free graph. We may assume without loss of generality that $u$ and $v$ are of distance at least 3, that is, $u$ and $v$ are non-adjacent and $N(u)\cap N(v)=\emptyset$; otherwise $(G,u,v)$ is a no-instance. 

Recall that $T=V(G)\setminus (N[u]\cup N[v])$. Recall also that we are looking for a partition $(S_u,S_v)$ of~$T$ that is a solution for $(G,u,v)$, that is,  $N(u)\cup S_u$ and $N(v)\cup S_v$ must both be connected.  In order to do so we will construct partial solutions $(S_u',S_v')$, which we try  to extend to a solution $(S_u,S_v)$ for $(G,u,v)$. We use the {\tt Contraction Rule} from Section~\ref{s-pre} on $S_u'$ and $S_v'$, so that these sets will become independent. By Lemma~\ref{l-contract}, the resulting graph will always be $(P_1+P_2+P_3)$-free.  For simplicity, we denote the resulting instance by $(G,u,v)$ again. After applying the {\tt Contraction Rule} the size of the set $T$ may be reduced by at least one. As before, if $t\in T$ was involved in an edge contraction with a vertex from $N(u)$ or $N(v)$ when applying the rule, then we say that we contracted $t$ away.
 
We start by applying the {\tt Contraction Rule} on $N(u)$ and $N(v)$. This leads to the following claim.

\clm{\label{c-ind2} $N(u)$ and $N(v)$ are independent sets.}\\[-8pt]
\noindent
We now check if $(G,u,v)$ has an $8$-constant solution, which we can check in polynomial time due to Lemma~\ref{l-constant}. If so, then $(G,u,v)$ is a yes-answer and we stop. From now on suppose  that $(G,u,v)$ has no $8$-constant solution. Then we prove the following claim (recall that a solution $(S_u,S_v)$ is independent if $S_u$ and $S_v$ are independent sets).

\clm{\label{c-ind3} Every solution of $(G,u,v)$ is independent (if $(G,u,v)$ has solutions).}\\[-8pt]
{\it Proof of Claim~\ref{c-ind3}.} Let $(S_u,S_v)$ be a solution for $(G,u,v)$ that is not independent, say $s,t$ belong to $S_u$ with 
$st\in E(G)$. If $\{s,t\}$ is anticomplete to a set of two neighbours $w,w'$ of $u$, then $\{v\}\cup \{s,t\}\cup \{w,u,w'\}$ is an induced $P_1+P_2+P_3$ of $G$, a contradiction. Hence, $\{s,t\}$ covers all but at most one vertex of $N(u)$. Suppose that $\{s,t\}$ covers $N(u)$, Then, as $s$ and $t$ are adjacent in $G$, we find that $(S_u,S_v)$ is a $2$-constant solution and thus a $8$-constant solution, which is not possible. Hence, $N(u)$ contains a unique vertex $w$ that is not adjacent to $s$ and $t$, but that is adjacent to some $z\in T\setminus \{s,t\}$.  As $G$ is $(P_1+P_2+P_3)$-free, $G$ is $P_8$-free. Then $G[N(u)\cup S_u]$ contains a path~$P$ on at most seven vertices from $s$  to $z$. The path $P$, together with vertex~$t$ that may not be on $P$, shows that $(S_u,S_v)$ is  a $8$-constant solution, a contradiction. \dia

\medskip
\noindent
We will now analyze the structure of an independent solution $(S_u,S_v)$. As $S_u$ and $N(u)$ are both independent sets, $G[N(u)\cup S_u]$ is a connected bipartite graph. Hence, $S_u$ contains a set $S_u^*$, such that $S_u^*$ covers $N(u)$. We assume that $S_u^*$ has minimum size. Then each $s\in S_u^*$ has a nonempty set $Q(s)$ of vertices in $N(u)$ that are not adjacent to any vertex in $S_u^*\setminus \{s\}$; otherwise we can remove $s$ from $S_u^*$, contradicting our assumption that $S_u^*$ has minimum size. We call the vertices of $Q(s)$ the {\it private} neighbours of $s\in S_u^*$ with respect to $S_u^*$.

As $(G,u,v)$ has no $8$-constant solution, and thus no $1$-constant solution, we find that $S_u^*$ has size at least~2.
Suppose $Q(s)$ contains at least two private neighbours $w_1,w_2$ of some vertex $s\in S_u^*$.
As $|S_u^*|\geq 2$, there exists a vertex $s'\in S_u^*$ with $s'\neq s$. Let $w_3\in Q(s')$. Then 
$\{v\}\cup \{w_3,s'\}\cup \{w_1,s,w_2\}$ is an induced $P_1+P_2+P_3$ of $G$, a contradiction.
Hence, each set $Q(s)$ has size~1. We denote the unique vertex of $Q(s)$ by $w_u^s$. So, $w_u^s$ is adjacent to $s$ but not to any other vertex from $S_u^*$.
Let $Q_u$ be the set of all vertices $w_u^s$. Then $G[Q_u\cup S_u^*]$ is the disjoint union of 
$|S_u^*|$ edges. 

We claim that the set~$N(u)\setminus Q_u$ is complete to $S_u^*$.
In order to see this, let $w\in N(u)\setminus Q_u$.  By definition, $w$ is adjacent to at least two vertices $s_1,s_2$ of $S_u^*$. 
For contradiction, assume that $w$ is not adjacent to some vertex $s_3\in S_u^*$. Then $\{v\} \cup \{s_3,w_u^{s_3}\}\cup \{s_1,w,s_2\}$ induces a $P_1+P_2+P_3$ in $G$, which is not possible. 

As $G[N(u)\cup S_u]$ is connected as well and $S_u$ is an independent set, every vertex $t\in S_u\setminus S_u^*$ must be adjacent to at least one vertex of $N(u)$.
However, we claim that every vertex of $S_u\setminus S_u^*$ is adjacent to at most one vertex of~$Q_u$.
For contradiction, assume that $S_u\setminus S_u^*$ contains a vertex~$t$ that is adjacent to two vertices of $Q_u$, say to $w_u^s$ and $w_u^{s'}$ for some $s,s'\in S_u^*$ with $s\neq s'$. Recall that $S_u$ is independent. Consequently,
if $t$ is non-adjacent to $w_u^{s''}$ for some $s''\in S_u^*\setminus \{s,s'\}$, then $G$ contains an induced $P_1+P_2+P_3$ with
vertex set $\{v\} \cup \{w_u^{s''},s''\} \cup \{w_u^s,t,w_u^{s'}\}$, a contradiction. Hence, $t$ is adjacent to every vertex of $Q_u$.
If $t$ is adjacent to every vertex of $N(u)$, then $(G,u,v)$ has a $1$-constant solution, and this an $8$-constant solution, which we ruled out already. Hence, the set $N(u)\setminus N(t)$ is nonempty. 
As $t$ is adjacent to every vertex of $Q_u$, the set $N(u)\setminus N(t)$ is a subset of $N(u)\setminus Q_u$.
Recall that $N(u)\setminus Q_u$ is complete to $S_u^*$. Hence, $N(u)\setminus N(t)$ is complete to $S_u^*$.
Let $s\in S_u^*$. Then $\{s,t\}$ covers $N(u)$, and moreover $G[N(u)\cup \{s,t\}]$ is connected. This means that $(G,u,v)$ has a $2$-constant solution and thus an $8$-constant solution, which is not possible. 
We conclude that every vertex of $S_u\setminus S_u^*$ is adjacent to at most one vertex of~$Q_u$.

Finally, we prove that $N(u)\setminus Q_u$ is nonempty. For contradiction, assume that $N(u)\setminus Q_u$ is empty. Then 
$N(u)=Q_u$. As $G[Q_u\cup S_u^*]$ is the disjoint union of a number of edges, and $G[N(u)\cup S_u]$ is connected, there must exist a vertex $t\in S_u\setminus S_u^*$ that is adjacent to at least two vertices of $Q_u$. However, we proved above that this is not possible. We conclude that $N(u)\setminus Q_u$ is nonempty.

We can deduce all the claims above with respect to $v$ as well. To summarize, any independent solution $(S_u,S_v)$ for $(G,u,v)$ satisfies the following two properties: 

\begin{itemize}
\item [(P1)] The independent set $S_u$ contains a subset $S_u^*$ of size at least~2 that covers $N(u)$, such that each vertex  $s\in S_u^*$ has exactly one private neighbour $w_u^s$ in $N(u)$ with respect to $S_u^*$, and moreover, the set $N(u)\setminus Q_u$, where $Q_u=\{w_u^s\; |\; s\in S_u^*\}$, is nonempty and complete to $S_u^*$, and every vertex of $S_u\setminus S_u^*$ is adjacent to at most one vertex of $Q_u$ and to at least one vertex of $N(u)\setminus Q_u$.
\item [(P2)] The independent set $S_v$ contains a subset $S_v^*$ of size at least~2 that covers $N(v)$, such that each vertex in $s\in S_v^*$ has exactly one private neighbour $w_v^s$ in $N(v)$ with respect to $S_v^*$, and moreover, the set $N(v)\setminus Q_v$, where $Q_v=\{w_v^s\; |\; s\in S_v^*\}$, is nonempty and complete to $S_v^*$, and  every vertex of $S_v\setminus S_v^*$ is adjacent to at most one vertex of $Q_v$ and to at least one vertex of $N(u)\setminus Q_v$.
\end{itemize}

\noindent
\emph{Remark.} We emphasize that $S_u^*$ and $S_v^*$ are unknown to the algorithm, as we constructed it from the unknown sets $S_u$ and $S_v$, and consequently our algorithm does not know (yet) the sets $Q_u$ and $Q_v$.  

\medskip
\noindent
We will now branch into $O(n^8)$ smaller instances
in which $N(u)\setminus Q_u$ and $N(u)\setminus Q_v$ consist of just one single vertex~$w_u$ and $w_v$, respectively, such that $(G,u,v)$ has an independent solution if and only if 
at least one of the new instances has an independent solution. Moreover, we will be able to identify $w_u$ and $w_v$, and consequently, the sets $Q_u$ and $Q_v$, in polynomial time.

\medskip
\noindent
{\bf Branching} ($O(n^8)$ branches)\\
We will determine exactly those vertices of $N(u)$ that belong to $Q_u$ via some branching, under the assumption that 
$(G,u,v)$ has an independent solution $(S_u,S_v)$ that satisfies (P1) and (P2).
By (P1), $S_u^*$ consists of at least two (non-adjacent) vertices $s$ and $s'$. 
By (P2), $S_v^*$ consists of at least two (non-adjacent) vertices $t$ and $t'$.
We branch by considering all possible choices of choosing these four vertices together with their private neighbours
$w_u^s$, $w_u^{s'}$, $w_v^{t}$, $w_v^{t'}$ (which are unique by (P1) and (P2)). This leads to $O(n^8)$ branches.

For each branch we do as follows. 
We discard the branch in which $G[\{s,s',w_u^s,w_u^{s'}\}]$ and $G[\{t,t',w_v^t,w_v^{t'}\}]$ are not both isomorphic to $2P_2$.
We put a vertex $y\in N(u)$ in $N(u)\setminus Q_u$ if and only if $y$ is a common neighbour of $s$ and $s'$. This gives us the set~$Q_u$. We obtain the set~$Q_v$ in the same way. If there exists a vertex in $Q_u\setminus \{w_u^s,w_u^{s'}\}$ that is adjacent to one of $s,s'$, then we discard the branch. We also discard the branch if there exists a vertex in $Q_v\setminus \{w_v^t,w_v^{t'}\}$ that is adjacent to one of $t,t'$.
Moreover, by applying the 
{\tt Contraction Rule} on $N(u)\cup \{s,s'\}$ we can contract $s$ and $s'$ away.
This contracts all vertices of $N(u)\setminus Q_u$ into a single vertex which we denote by~$w_u$ due to (P1). 
Similarly, we branch $t$ and $t'$ away and this leads to the contraction of $N(v)\setminus Q_v$ into a single vertex~$w_v$ due to (P2). Note that we have identified $w_u$ and $w_v$ in polynomial time. We denote the resulting instance by $(G,u,v)$ again. 

\medskip
\noindent
Consider a vertex $z\in T$. Firs suppose that $z$ is not adjacent to $w_u$. Then $z$ does not belong to~$S_u$ in any independent solution $(S_u,S_v)$ for $(G,u,v)$ by (P1).Hence $z$ must belong to $S_v$ for any independent solution $(S_u,S_v)$ for $(G,u,v)$. 
However, (P2) tells us that If $z$ is not adjacent to $w_v$, then $z$ cannot belong to the set~$S_v$ of any independent solution $(S_u,S_v)$ for $(G,u,v)$. Hence, in that case we must discard the branch. 
Otherwise, that is, if $z$ is adjacent to $w_v$, then we check the following. If $z$ has two neighbours in $N(v)\setminus \{w_v\}$, then $z$ does not belong to $S_v$ in any independent solution $(S_u,S_v)$ for $(G,u,v)$ due to (P2). Hence, we will discard the branch. 
If $z$ is adjacent to at most one vertex of $N_v\setminus \{w_v\}$, then we apply  the {\tt Contraction Rule} on $N(v)\cup \{z\}$ to contract $z$ away. As a side effect, the possible neighbour of $z$ in $N_v\setminus \{w_v\}$ will be contracted away as well.
Now suppose that $z$ is not adjacent to $w_v$. Then we perform the same operation with respect to $u$. 
We apply this operation exhaustively on both $u$ and $v$. This takes polynomial time. In the end we either discarded the branch or have found a new instance, which we also denote by $(G,u,v)$ again, in which every vertex of $T$ is adjacent to $w_u$ and to $w_v$.

Consider again a vertex $z\in T$.
If $z$ is adjacent to only $w_u$ and $w_v$ and 
to at most one other vertex~$w$ in $N(u)\cup N(v)$, then we apply the {\tt Contraction Rule} on $G[N(u)\cup \{z\}]$ (if $w\in N(u)$) or $G[N(v)\cup \{z\}]$ (in the other two cases) in order to contract $z$ away.
As a side effect, the possible other neighbour of $z$ in $(N(u)\cup N(v)) \setminus \{w_u,w_v\}$ will be contracted away as well.
If $z$ is adjacent to more than one vertex of $N(u)\setminus \{w_u\}$, then $z$ does not belong to $S_u$ in any independent solution $(S_u,S_v)$ for $(G,u,v)$. We check if $z$ is adjacent to more than one vertex of $N(v)\setminus \{w_v\}$. If so,
 then $z$ does not belong to $S_v$ in any independent solution $(S_u,S_v)$ for $(G,u,v)$. In that case we will discard the branch.
 Otherwise we will apply the {\tt Contraction Rule} on $N(v)\cup \{z\}$ to contract $z$ away. Again, as a side effect, the possible neighbour of $z$ in $N(v)\setminus \{w_v\}$ will be contracted away as well.
 If $z$ is adjacent to more than one vertex of $N(u)\setminus \{w_v\}$, we perform a similar operation with respect to $u$.
 We apply this rule exhaustively. This takes polynomial time. 
 In the end we find that every vertex of $T$ is adjacent to $w_u$ and $w_v$ and to exactly one vertex of $Q_u$ and to exactly one vertex of $Q_v$.

We now remove all edges of $G[T]$. We also remove $w_u$ and $w_v$ from the graph. This yields a bipartite graph $G'$ with partition classes $N(u)\cup N(v)\setminus \{w_u,w_v\}$ and~$T$. 
It remains to compute a maximum matching~$M$ in~$G'$. We can do this by using the
Hopcroft-Karp algorithm~\cite{HK73}, which runs in $O(m\sqrt{n})$-time on bipartite graphs with $n$ vertices and $m$ edges.
If $|M|=|N(u)|+|N(v)|-2$ then we found a solution for $(G,u,v)$; otherwise we discard the branch. Note that we did not explicitly forbid that two adjacent vertices of $T$ ended up in $S_u$ or two adjacent vertices of $T$ ended up in $S_v$: we have ruled out the existence of such solutions already (but they would still be perfectly acceptable if they did exist).

\medskip
\noindent
As mentioned, we translate a solution found for some branch into a solution for the original instance. We can do so in polynomial time.
If we find no yes-answer for the instance of any branch, then we conclude that the original instance has no solution. 

\medskip
\noindent
The correctness of our algorithm follows from the above description. We now analyze its run-time. 
There is only one branching procedure, which yields a total number of $O(n^{8})$ branches. As explained above, processing
each branch takes polynomial time. In particular, checking for $8$-constant solutions takes polynomial time due to Lemma~\ref{l-constant}.
We conclude that the total running time of our algorithm is polynomial. \qed
\end{proof}

We proceed in the same way as in the case where $H=P_2+P_4$. That is, we will use Lemma~\ref{l-top4c} to prove 
Lemma~\ref{l-top5c}. Then we use Lemma~\ref{l-top5c} to prove Lemma~\ref{l-top6c}, and we use Lemma~\ref{l-top6c} to prove
Lemma~\ref{l-top7c}.

\begin{lemma}\label{l-top5c}
The $P_5$-{\sc Suitability} problem can be solved in polynomial time for $(P_1+P_2+P_3)$-free graphs.
\end{lemma}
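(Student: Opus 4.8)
The plan is to mirror the structure of the analogous Lemma~\ref{l-top5} for $(P_2+P_4)$-free graphs, reducing $P_5$-{\sc Suitability} on $(P_1+P_2+P_3)$-free graphs to a polynomial number of instances of $P_4$-{\sc Suitability}, which are solvable by Lemma~\ref{l-top4c}. First I would assume without loss of generality that $\dist(u,v)\geq 4$ (else $(G,u,v)$ is a no-instance) and, via the {\tt Contraction Rule} together with Lemma~\ref{l-contract}, that $N(u)$ and $N(v)$ are both independent sets; this preserves $(P_1+P_2+P_3)$-freeness. The trivial case $|N(u)|=1$, say $N(u)=\{u'\}$, is handled directly by solving $P_4$-{\sc Suitability} on $(G-u,u',v)$ using Lemma~\ref{l-top4c}.

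The substantive case is $|N(u)|\geq 2$. Since $G$ is $(P_1+P_2+P_3)$-free it is $P_8$-free, so I would first note $\dist(u,v)$ is bounded by a constant, and by Lemma~\ref{l-reduce} I may reduce to $\dist(u,v)=4$ in polynomial time. Writing $P_5=p_1\cdots p_5$ with $W(p_1)=\{u\}$ and $W(p_5)=\{v\}$, every vertex $z$ with $\dist(u,z)=\dist(z,v)=2$ lies in $W(p_3)$; collect these into a set $Z$. The key structural step is to show that $W(p_2)$ contains, besides $N(u)$, a vertex $s$ adjacent to \emph{every} vertex of $N(u)$: since $N(u)$ is disconnected, some $s\in W(p_2)$ is adjacent to a vertex $t\in N(u)$, and if $s$ were non-adjacent to some $t'\in N(u)$, then picking a neighbour $w$ of $v$ (which lies in $W(p_4)$, hence non-adjacent to $s$) yields that $\{v\}\cup\{s,t\}\cup\{w\text{?}\}$ forms a forbidden induced subgraph. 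Here I expect the forbidden-subgraph bookkeeping to differ slightly from the $(P_2+P_4)$ proof, since the obstruction must be a $P_1+P_2+P_3$ rather than a $P_2+P_4$; the main obstacle will be choosing the three components correctly so that $\{t',u,t\}$ (a $P_3$), the edge $\{s,\cdot\}$ (a $P_2$), and an isolated vertex (say $v$) assemble into an induced $P_1+P_2+P_3$, using the independence of $N(u)$ and the distance conditions to guarantee the needed non-adjacencies.

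Having established that such an $s$ is complete to $N(u)$, I would branch over all $O(n)$ choices of $s$ from $V(G)\setminus(N[u]\cup N[v]\cup Z)$. In each branch I contract $N(u)\cup\{s\}$ to a single vertex $u'$, obtaining a graph $G'$ that is $(P_1+P_2+P_3)$-free by Lemma~\ref{l-contract}, and then solve $P_4$-{\sc Suitability} on $(G',u',v)$ via Lemma~\ref{l-top4c}. Correctness follows because a $P_5$-witness structure with $W(p_2)\supseteq N(u)\cup\{s\}$ descends to a $P_4$-witness structure of $G'$ with $W(p_1)=\{u'\}$ and $W(p_4)=\{v\}$, and conversely any such $P_4$-witness structure of $G'$ lifts back by re-expanding $u'$ into $N(u)\cup\{s\}$ (which remains connected and adjacent to $u$). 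Since there are $O(n)$ branches and each $P_4$-{\sc Suitability} call runs in polynomial time, the whole procedure is polynomial, establishing the lemma.
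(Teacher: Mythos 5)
Your overall strategy is exactly the paper's: reduce to $\dist(u,v)=4$ via Lemma~\ref{l-reduce}, identify the set $Z$ (the paper's $M$) of vertices at distance~$2$ from both $u$ and $v$, which must lie in $W(p_3)$, prove that $W(p_2)$ must contain a vertex $s$ complete to $N(u)$, branch over the $O(n)$ choices of $s$, contract $N(u)\cup\{s\}$, and recurse to $P_4$-{\sc Suitability} via Lemma~\ref{l-top4c}. However, two steps have genuine gaps. First, the forbidden-subgraph configuration you propose for the key claim cannot work: you take $\{t',u,t\}$ as the $P_3$ and want an edge containing $s$ as the $P_2$, but $s$ is adjacent to $t$, so no component containing $s$ can be anticomplete to a $P_3$ containing $t$ (and replacing $t$ by another neighbour of $u$ does not help, since then the edge $\{s,t\}$ is adjacent to $u$ itself). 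The configuration that works, and is the one the paper uses, puts the $P_3$ on the $v$-side: $P_1=\{t'\}$, $P_2=\{t,s\}$, $P_3=\{z,v,z'\}$ where $z,z'$ are two neighbours of $v$; the needed non-adjacencies follow from the independence of $N(u)$ and $N(v)$, the distance-$4$ condition, and the fact that $s$ is non-adjacent to $N(v)\cup\{v\}$ (either because $s\in W(p_2)$, as you note, or, as the paper argues, because $s\notin M$). Note that this requires $|N(v)|\geq 2$, so the case $|N(v)|=1$ must be disposed of symmetrically to $|N(u)|=1$ --- a reduction your proposal never sets up.

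Second, and more seriously, your final reduction is incorrect as stated. You call $P_4$-{\sc Suitability} on $(G',u',v)$ with $u$ still present; but then $u$ is a pendant vertex whose only neighbour is $u'$, the singleton bag $W(p_1)$, so $u$ can be placed in no bag and every such instance is trivially a no-instance --- you must delete $u$, as you yourself do in the $|N(u)|=1$ case and as the paper does. Even after deleting $u$, your ``descends/lifts'' equivalence has a hole: in a $P_5$-witness structure, $W(p_2)$ may properly contain $N(u)\cup\{s\}$, and any leftover vertices of $W(p_2)$ whose every path to $v$ passes through $N(u)\cup\{s\}$ become, after the contraction, vertices attached only to $u'$; such vertices cannot be placed in any bag of a $P_4$-witness structure whose first bag is exactly $\{u'\}$, so a yes-instance of $P_5$-{\sc Suitability} can map to all-no branches. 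The paper closes this hole with an explicit cleanup step: after contracting $N(u)\cup\{s\}$ to $u'$, every connected component of $G'-u'$ other than $\{u\}$ and the component containing $v$ is contracted into $u'$ by the {\tt Contraction Rule}, and only then is $P_4$-{\sc Suitability} invoked on $(G'-u,u',v)$. Without this step (or a proof that such components cannot exist in a $(P_1+P_2+P_3)$-free graph), the correctness argument does not go through.
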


\begin{proof}
Let $(G,u,v)$ be an instance of $P_5$-{\sc Suitability}, where $G$ is a connected $(P_1+P_2+P_3)$-free graph. We may assume without loss of generality that $u$ and $v$ are of distance at least~4 from each other, as otherwise $(G,u,v)$ is a no-instance.  We may also assume without loss of generality that $N(u)$ and $N(v)$ are independent sets; otherwise, say $N(u)$ contains an edge, we apply the {\tt Contraction Rule} on $N(u)$ to obtain an equivalent but smaller instance $(G',u,v)$, where $G'$ is also $(P_1+P_2+P_3)$-free due to Lemma~\ref{l-contract}.

If $N(u)$ consists of exactly one vertex $u'$, then we can instead solve $P_5$-{\sc Suitability} on instance $(G-u,u',v)$. By Lemma~\ref{l-top5c} this takes polynomial time.  Hence, we may assume that $N(u)$, and for the same reason, $N(v)$ have size at least~2.

By Lemma \ref{l-reduce} we may assume that $\dist(u,v)=4$. Let $M$ consist of all vertices of $G$ that are of distance~2 from $u$ and of distance~2 from $v$. Note that $M\neq \emptyset$, as $\dist(u,v)=4$. Moreover, if $G$ has a $P_5$-witness structure ${\cal W}$ with $W(p_1)=\{u\}$ and $W(p_5)=\{v\}$, then $M\subseteq W(p_3)$ must hold. 

Let $z,z'$ be two vertices in $N(v)$. Suppose $x\notin M\cup \{u\}$ is adjacent to $w\in N(u)$ but not to $w'\in N(u)$. As $x$ is not in $M$ and adjacent to $w\in N(u)$, we find that $x$ is not adjacent to $z$ and $z'$. However, then $\{w'\}\cup \{w,x\}\cup \{z,v,z'\}$ induces a $P_1+P_2+P_3$ in $G$, a contradiction. Hence, every vertex not in $M\cup \{u\}$  is either complete to $N(u)$ or anticomplete to $N(u)$.  This means that if $G$ has a $P_5$-witness structure ${\cal W}$ with $W(p_1)=\{u\}$ and $W(p_5)=\{v\}$, then the following holds: $W(p_2)\setminus N(u)$ contains a vertex $s$, such that $N(u)\cup \{s\}$ is connected.

We now branch by considering all possibilities of choosing this vertex~$s$; note that we only have to consider vertices of $G$ that are of distance~2 from $u$ and that are not in $M$. This leads to $O(n)$ branches. We consider each branch separately, as follows. First we contract all edges in $G[N(u)\cup \{s\}]$. If this does
not
yield a single vertex $u'$, then we discard the branch. Otherwise we let $G'$ be the resulting graph. The graph $G'-u'$ consists of at least two connected components, one of which consists of vertex~$u$, and the other one contains $v$ and $N(v)$. We contract away the vertices of any other connected component~$D$ of $G'-u'$ by applying the {\tt Contraction Rule} on $\{u'\}\cup V(D)$. It remains to check if $(G'-u,u',v)$ is a yes-instance of $P_4$-{\sc Suitability}. We can do this in polynomial time via Lemma~\ref{l-top4c}. As there are $O(n)$ branches, the total running time of our algorithm is polynomial. \qed
\end{proof}

\begin{lemma}\label{l-top6c}
The $P_6$-{\sc Suitability} problem can be solved in polynomial time for $(P_1+P_2+P_3)$-free graphs.
\end{lemma}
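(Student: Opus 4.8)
The plan is to mimic exactly the reduction strategy used in the $(P_2+P_4)$-free case (Lemmas~\ref{l-top5}--\ref{l-top6}) and in the preceding $P_5$-case for $(P_1+P_2+P_3)$-free graphs (Lemma~\ref{l-top5c}): reduce an instance of $P_6$-{\sc Suitability} to a polynomial number of instances of $P_5$-{\sc Suitability}, which we can already solve by Lemma~\ref{l-top5c}. Let $(G,u,v)$ be an instance with $G$ a connected $(P_1+P_2+P_3)$-free graph. As a preprocessing step I would assume $\dist(u,v)\geq 5$ (otherwise $(G,u,v)$ is a no-instance), and I would apply the {\tt Contraction Rule} on $N(u)$ and on $N(v)$ so that, by Lemma~\ref{l-contract}, both remain independent while $G$ stays $(P_1+P_2+P_3)$-free.

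Next I would handle the easy base case exactly as before: if $|N(u)|=1$, say $N(u)=\{u'\}$, then $(G-u,u',v)$ is an equivalent instance of $P_5$-{\sc Suitability}, which we solve in polynomial time by Lemma~\ref{l-top5c}. So assume $|N(u)|\geq 2$ (and symmetrically $|N(v)|\geq 2$). The core step is to find a single vertex $s$ such that $N(u)\cup\{s\}$ can be contracted to a single vertex $u'$, playing the role of the second witness bag $W(p_2)$. Since $N(u)$ is disconnected, any $P_6$-witness bag $W(p_2)$ containing $u$'s neighbourhood needs an extra vertex $s\notin N[u]$ joining two components of $N(u)$. The key structural observation, proved via the forbidden induced subgraph, is that such an $s$ must in fact be complete to $N(u)$: if $s$ were adjacent to $t\in N(u)$ but non-adjacent to $t'\in N(u)$, then taking a neighbour $w$ of $v$ (which lies far from $u$ by the distance assumption, hence non-adjacent to $s,t,t'$) the set $\{w\}\cup\{s,t\}\cup\{t',u,?\}$ should yield an induced $P_1+P_2+P_3$. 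I would arrange the three neighbours of $u$ and the vertex $w$ so that $\{w\}\cup\{u,t'\}\cup\{t,s,?\}$ (or an appropriate relabelling) is an induced $P_1+P_2+P_3$, giving the contradiction; this forces $s$ to be complete to $N(u)$.

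Having established that the connecting vertex $s$ is complete to $N(u)$, I would branch over all $O(n)$ choices of $s\in V(G)\setminus(N[u]\cup N[v])$. In each branch I contract $N(u)\cup\{s\}$ to a single vertex $u'$, obtaining a graph $G'$ that is $(P_1+P_2+P_3)$-free by Lemma~\ref{l-contract}, and then solve $P_5$-{\sc Suitability} on $(G',u',v)$ via Lemma~\ref{l-top5c}. The instance $(G,u,v)$ is a yes-instance of $P_6$-{\sc Suitability} if and only if some branch yields a yes-instance of $P_5$-{\sc Suitability}, since any $P_6$-witness structure with $W(p_1)=\{u\}$ restricts to a $P_5$-witness structure after contracting $W(p_2)$ to $u'$, and conversely. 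As there are $O(n)$ branches and each is resolved in polynomial time, the total running time is polynomial.

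The step I expect to be the main obstacle is the structural claim that the connecting vertex $s$ is complete to $N(u)$: I need to verify that the distance assumption $\dist(u,v)\geq 5$ genuinely guarantees the vertices $w\in N(v)$, $t,t'\in N(u)$ and $s$ are mutually positioned so as to induce a $P_1+P_2+P_3$ rather than some smaller or differently-shaped subgraph, and in particular that $w$ is non-adjacent to $s$, $t$, and $t'$. One subtlety specific to $P_6$ (absent in the $P_5$ analysis of Lemma~\ref{l-top5c}) is that I do not need an auxiliary set $M$ of ``middle'' vertices here, because the recursion bottoms out at $P_5$-{\sc Suitability} which Lemma~\ref{l-top5c} already handles; thus the argument is shorter and closely parallels Lemma~\ref{l-top6} rather than Lemma~\ref{l-top5c}.
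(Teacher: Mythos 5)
Your algorithm has exactly the same skeleton as the paper's proof: assume $\dist(u,v)\geq 5$, make $N(u)$ and $N(v)$ independent via the {\tt Contraction Rule}, reduce the case $|N(u)|=1$ to $P_5$-{\sc Suitability} on $(G-u,u',v)$, argue that a vertex $s$ connecting the components of $N(u)$ inside $W(p_2)$ must be complete to $N(u)$, and branch over the $O(n)$ choices of $s$. However, the step you yourself flag as the main obstacle is a genuine gap: the induced $P_1+P_2+P_3$ you propose, $\{w\}\cup\{s,t\}\cup\{t',u,\cdot\}$ with a single neighbour $w$ of $v$, cannot be realized under any relabelling. Any vertex completing a $P_3$ through $u$ is a neighbour of $u$, hence adjacent to $u$, but $t$ (which sits in your $P_2$) is also adjacent to $u$; and $v$ cannot complete the $P_3$ since it is adjacent to $w$. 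In fact, when $\dist(u,v)\geq 5$ the six vertices $u,t,t',s,v,w$ induce precisely $P_4+P_2$ (the path $s\,t\,u\,t'$ plus the edge $vw$), and $P_4+P_2$ contains no induced $P_1+P_2+P_3$ at all, so with only one neighbour of $v$ in hand the contradiction you seek is unobtainable. The paper builds the $P_3$ on the $v$-side instead: after the symmetric base case one may assume $N(v)$ is an independent set of size at least $2$, so one can pick $z,z'\in N(v)$ and check that $\{t'\}\cup\{t,s\}\cup\{z,v,z'\}$ is an induced $P_1+P_2+P_3$; all cross-adjacencies are excluded because $t,t'$ are at distance at least $4$ from $v$, and $s$, being within distance $2$ of $u$, is at distance at least $3$ from $v$. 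This yields the stronger statement that every vertex outside $N[u]$ is complete or anticomplete to $N(u)$, after which your branching goes through.

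Two further slips in the reduction itself. First, you recurse on $(G',u',v)$ with $u$ still present; but then $u$ is a pendant vertex whose only neighbour is $u'$, and since $P_5$-{\sc Suitability} requires the first bag to be exactly $\{u'\}$, vertex $u$ can lie in no bag of any witness structure, so every branch would be a no-instance. You must recurse on $(G'-u,u',v)$, as you correctly did in the base case. Second, for the forward direction of your equivalence you must also contract into $u'$ every connected component of $G'-u'$ other than the two containing $u$ and $v$: a $P_6$-witness structure may have $W(p_2)\supsetneq N(u)\cup\{s\}$, and after contracting only $N(u)\cup\{s\}$ the leftover vertices of $W(p_2)$ may be reachable solely through $u'$, in which case they cannot be placed in any bag of a $P_5$-witness structure whose first bag is exactly $\{u'\}$. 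The paper performs both of these steps explicitly.
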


\begin{proof}
Let $(G,u,v)$ be an instance of $P_6$-{\sc Suitability}, where $G$ is a connected $(P_1+P_2+P_3)$-free graph.
We may assume without loss of generality that $u$ and $v$ are of distance at least~5 from each other, as otherwise
$(G,u,v)$ is a no-instance. 
We may also assume without loss of generality that $N(u)$ and $N(v)$ are independent sets; otherwise, say $N(u)$ contains an edge, we apply the {\tt Contraction Rule} on $N(u)$ to obtain an equivalent but smaller instance $(G',u,v)$, where $G'$ is also $(P_1+P_2+P_3)$-free due to Lemma~\ref{l-contract}.

If $N(u)$ consist of exactly one vertex $u'$, then we can instead solve $P_5$-{\sc Suitability} on instance $(G-u,u',v)$. By Lemma~\ref{l-top5c} this takes polynomial time. 
Hence, we may assume that $N(u)$, and for the same reason, $N(v)$ are independent sets of size at least~2. Let $z,z'$ be two vertices in $N(v)$.
Suppose $x\notin N(u)\cup \{u\}$ is adjacent to $w\in N(u)$ but not to $w'\in N(u)$. Then
$\{w'\}\cup \{w,x\}\cup \{z,v,z'\}$ induces a $P_1+P_2+P_3$ in $G$, a contradiction.
Hence, every vertex not in $N(u)\cup \{u\}$ is either complete to $N(u)$ or anticomplete to $N(u)$. 
This means that if $G$ has a $P_5$-witness structure ${\cal W}$ with
$W(p_1)=\{u\}$ and $W(p_5)=\{v\}$, then the following holds:
$W(p_2)\setminus N(u)$ contains a vertex $s$, such that $N(u)\cup \{s\}$ is connected.

We now branch by considering all possibilities of choosing this vertex~$s$. This leads to $O(n)$ branches. We consider each branch separately, as follows.
First we contract all edges in $G[N(u)\cup \{s\}]$. If this does 
not
yield a single vertex $u'$, then we discard the branch. Otherwise we let $G'$ be the resulting graph. The graph $G'-u'$ consists of at least two connected components, one of which consists of vertex~$u$, and the other one contains $v$ and $N(v)$. We contract away the vertices of any other connected component~$D$ of $G'-u'$ by applying 
the {\tt Contraction Rule} on $\{u'\}\cup V(D)$. It remains to check if $(G'-u,u',v)$ is a yes-instance of $P_4$-{\sc Suitability}. We can do this in polynomial time via Lemma~\ref{l-top5c}. As there are $O(n)$ branches, the total running time of our algorithm is 
polynomial. \qed
\end{proof}

\begin{lemma}\label{l-top7c}
The $P_7$-{\sc Suitability} problem can be solved in polynomial time for $(P_1+P_2+P_3)$-free graphs.
\end{lemma}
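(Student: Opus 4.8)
The plan is to reduce $P_7$-{\sc Suitability} to $P_6$-{\sc Suitability}, exactly mirroring how Lemma~\ref{l-top5c} reduces to $P_4$ and Lemma~\ref{l-top6c} reduces to $P_5$. I would start from an instance $(G,u,v)$ with $G$ connected and $(P_1+P_2+P_3)$-free, and assume without loss of generality that $\dist(u,v)\geq 6$, as otherwise $W(p_1)=\{u\}$ and $W(p_7)=\{v\}$ cannot be separated by the required six steps and $(G,u,v)$ is a no-instance. Since $P_1+P_2+P_3$ is an induced subgraph of $P_8$, the graph $G$ is $P_8$-free, and as shortest paths are induced this forces $\dist(u,v)\leq 6$. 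Hence $\dist(u,v)=6$ exactly, and in contrast to the $P_5$ case no application of Lemma~\ref{l-reduce} is needed. As usual I would then apply the {\tt Contraction Rule} to $N(u)$ and $N(v)$, which is safe by Lemma~\ref{l-contract}, so that both neighbourhoods are independent sets.

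If $|N(u)|=1$, say $N(u)=\{u'\}$, then $(G,u,v)$ is $P_7$-suitable if and only if $(G-u,u',v)$ is $P_6$-suitable, so I would invoke Lemma~\ref{l-top6c}; the case $|N(v)|=1$ is handled symmetrically. From now on I may therefore assume $|N(u)|\geq 2$ and $|N(v)|\geq 2$.

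The key structural step, identical in spirit to the corresponding step in Lemma~\ref{l-top6c}, is to show that every vertex $x\notin N(u)\cup\{u\}$ is either complete or anticomplete to $N(u)$. To prove this I would suppose $x$ is adjacent to some $w\in N(u)$ but not to some $w'\in N(u)$, and pick two vertices $z,z'\in N(v)$. Since $x\notin N(u)\cup\{u\}$ but $x$ is adjacent to $w\in N(u)$, vertex $x$ is at distance exactly $2$ from $u$, so the triangle inequality gives $\dist(x,v)\geq 4$ and hence $x$ is non-adjacent to both $z$ and $z'$. As $N(u)$ and $N(v)$ are independent and $N(u)$ is anticomplete to $\{v,z,z'\}$ (because $\dist(u,v)=6$), the set $\{w'\}\cup\{w,x\}\cup\{z,v,z'\}$ induces a $P_1+P_2+P_3$ in $G$, a contradiction. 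Consequently, in any $P_7$-witness structure with $W(p_1)=\{u\}$, the connected bag $W(p_2)\supseteq N(u)$ must, because $N(u)$ is an independent set on at least two vertices, contain some vertex $s$ adjacent to $N(u)$ with $s\in T$; by the claim such an $s$ is complete to $N(u)$, so already $N(u)\cup\{s\}$ is connected.

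Finally I would branch over the $O(n)$ candidate vertices $s$ at distance $2$ from $u$. In each branch I contract $G[N(u)\cup\{s\}]$ to a single vertex $u'$ (discarding the branch if this set fails to contract to one vertex), then contract away every component of $G'-u'$ other than those containing $u$ and $v$ by the {\tt Contraction Rule}, and test whether $(G'-u,u',v)$ is a yes-instance of $P_6$-{\sc Suitability} via Lemma~\ref{l-top6c}. With $O(n)$ branches, each processed in polynomial time, the whole procedure runs in polynomial time. The one point demanding care is the correctness of the reduction in both directions, namely that the guessed-and-contracted bag $W(p_2)$ corresponds faithfully to the single endpoint $u'$ of a $P_6$ in $G'-u$; but this is precisely the $P_6$-to-$P_5$ bookkeeping already verified in Lemma~\ref{l-top6c}, so I expect no genuinely new obstacle beyond tracking the distance constraint $\dist(u,v)=6$.
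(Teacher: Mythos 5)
Your proof is correct (at the same level of rigour as the paper's own reductions), but it is longer than the paper's, because you miss the structural fact that makes this lemma almost trivial. After establishing $\dist(u,v)=6$ and reducing the case $|N(u)|=1$ (or $|N(v)|=1$) to $P_6$-{\sc Suitability} exactly as the paper does, you handle the case $|N(u)|\geq 2$, $|N(v)|\geq 2$ by proving a complete/anticomplete dichotomy and then running the $O(n)$-branching reduction of Lemma~\ref{l-top6c}. The paper instead observes that this case cannot occur at all: since $\dist(u,v)=6$, there is a vertex $y$ with $\dist(u,y)=\dist(v,y)=3$, and if $N(u)$ contained two (non-adjacent) vertices $w,w'$, then $\{y\}\cup\{v,z\}\cup\{w,u,w'\}$, for any single $z\in N(v)$, would induce a $P_1+P_2+P_3$. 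Hence $N(u)$ is a single vertex $u'$, and the entire lemma is one recursive call to Lemma~\ref{l-top6c} on $(G-u,u',v)$, with no branching and no dichotomy claim. So your Case-2 machinery is sound but vacuous --- the graphs it is designed for do not exist --- and your dichotomy argument (which, note, needs two vertices $z,z'\in N(v)$, whereas the paper's contradiction needs only one) is in effect a weaker version of the paper's observation. What your route buys is uniformity with Lemmas~\ref{l-top5c} and~\ref{l-top6c}; what the paper's buys is brevity, a better running time, and the sharper insight that at distance~$6$ the neighbourhood of $u$ collapses to a point. One further remark, not counted against you since the paper's proof shares it: the claimed equivalence ``$(G,u,v)$ is $P_7$-suitable iff $(G-u,u',v)$ is $P_6$-suitable'' in the $|N(u)|=1$ case implicitly requires first contracting into $u'$ every component of $(G-u)-u'$ not containing $v$ (otherwise a pendant vertex hanging off $u'$ makes the right-hand side fail while the left-hand side holds); you do perform exactly this clean-up in your branching case, and the same step should be understood here.
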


\begin{proof}
Let $(G,u,v)$ be an instance of $P_7$-{\sc Suitability}, where $G$ is a connected $(P_1+P_2+P_3)$-free graph.
We may assume without loss of generality that $u$ and $v$ are of distance at least~6 from each other, as otherwise
$(G,u,v)$ is a no-instance. Note that in fact $u$ and $v$ are of distance exactly~6 from each other, as otherwise 
$G$ contains an induced $P_1+P_2+P_3$.
We may also assume without loss of generality that $N(u)$ is an independent set; otherwise
we apply the {\tt Contraction Rule} on $N(u)$ to obtain an equivalent but smaller instance $(G',u,v)$, where $G'$ is also $(P_1+P_2+P_3)$-free due to Lemma~\ref{l-contract}.

Suppose $N(u)$ contains two vertices $w$ and $w'$. As $u$ and $v$ are of distance~6 from each other, there exists a vertex~$y$
with $\dist(u,y)=\dist(v,y)=3$. Let $z\in N(v)$. Then the set $\{y\}\cup \{v,z\}\cup \{w,u,w'\}$ induces a $P_1+P_2+P_3$ in~$G$, a contradiction. Hence, $N(u)$ consist of exactly one vertex $u'$. 
We can therefore solve $P_6$-{\sc Suitability} on instance $(G-u,u',v)$. By Lemma~\ref{l-top6c} this takes polynomial time. \qed
\end{proof}

We are now ready to prove the main result of Section~\ref{s-p1p2p3}.

\begin{theorem}\label{t-p1p2p3}
The {\sc Longest Path Contractibility} problem is polynomial-time solvable for $(P_1+P_2+P_3)$-free graphs.
\end{theorem}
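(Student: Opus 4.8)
The plan is to mirror the proof of Theorem~\ref{t-p2p4}, substituting the suitability lemmas for $(P_2+P_4)$-free graphs with the ones just proved for $(P_1+P_2+P_3)$-free graphs. Let $G$ be a connected $(P_1+P_2+P_3)$-free graph; I may assume $G$ has at least one edge. The first step is to bound the length of the longest path to which $G$ can be contracted. I would observe that $P_8$ contains $P_1+P_2+P_3$ as an induced subgraph: taking the first vertex, the third and fourth vertices, and the last three vertices of $P_8$ yields an induced $P_1+P_2+P_3$. Hence $G$ is $P_8$-free. If $G$ contracted to $P_k$ for some $k\geq 8$, then by Lemma~\ref{l-outer} the two singleton bags of a $P_k$-suitable pair would lie at distance $k-1$ in $G$ (edge contraction never increases pairwise distances), so a shortest path between them would be an induced $P_k$, contradicting $P_8$-freeness. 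Thus the longest path to which $G$ can be contracted is $P_k$ for some $k$ with $2\leq k\leq 7$.

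The second step is simply to test, for $k=7,6,5,4$ in turn, whether $G$ contains $P_k$ as a contraction, and to report the largest such $k$. By Lemma~\ref{l-outer}, $G$ contains $P_k$ as a contraction if and only if some pair of non-adjacent vertices $(u,v)$ is $P_k$-suitable; there are $O(n^2)$ such pairs, each tested in polynomial time using Lemmas~\ref{l-top7c}, \ref{l-top6c}, \ref{l-top5c} and~\ref{l-top4c}, respectively. If none of these four tests succeeds, I would check $P_3$-contractibility via Lemma~\ref{l-trivial} (again ranging over all non-adjacent pairs through Lemma~\ref{l-outer}); and if that also fails, then since $G$ has an edge, $P_2$ is the longest path to which $G$ contracts.

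Correctness follows because every relevant value $k\leq 7$ is covered and the maximal feasible $k$ is returned, while the bound $k\leq 7$ guarantees no longer target is missed. The procedure is a constant number of polynomial-time sweeps, so its total running time is polynomial. There is essentially no remaining obstacle here: all the genuine difficulty is already encapsulated in Lemmas~\ref{l-top4c}--\ref{l-top7c}, whose correctness and running-time guarantees I would invoke directly. The only point meriting a line of care is the distance argument establishing $k\leq 7$, together with the remark that $P_7$ itself does \emph{not} contain $P_1+P_2+P_3$ (a path realising three components with two internal gaps needs at least $3+1+2+1+1=8$ vertices), so $k=7$ can genuinely occur and Lemma~\ref{l-top7c} is exactly the top case required.
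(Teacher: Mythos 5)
Your proposal is correct and follows essentially the same route as the paper's proof of Theorem~\ref{t-p1p2p3}: both use $(P_1+P_2+P_3)$-freeness to conclude $P_8$-freeness (hence no $P_8$-contraction), then test $k=7,6,5,4$ via Lemma~\ref{l-outer} combined with Lemmas~\ref{l-top7c}--\ref{l-top4c}, falling back to Lemma~\ref{l-trivial} for $P_3$ and to the trivial $P_2$ case. The only difference is that you spell out details the paper treats as known, namely the explicit embedding of $P_1+P_2+P_3$ in $P_8$ and the distance argument showing a $P_8$-free graph cannot contain $P_8$ as a contraction (where your phrase ``at distance $k-1$'' should read ``at distance at least $k-1$'', which is what your own justification actually yields and what the argument needs).
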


\begin{proof}
Let $G$ be a connected $(P_1+P_2+P_3)$-free graph.
We may assume without loss of generality that $G$ has at least one edge.
Then $G$ is $P_8$-free. Hence, $G$ does not contain $P_8$ as a a contraction.
By combining Lemmas~\ref{l-top4c}--\ref{l-top7c} with Lemma~\ref{l-outer} we can check in polynomial time if $G$ contains $P_k$ as a contraction for $k=7,6,5,4$. If not, then we check if $G$ contains $P_3$ as a contraction by using Lemma~\ref{l-trivial} combined with Lemma~\ref{l-outer}.  If not then, as $G$ has an edge, $P_2$ is the longest path to which $G$ can be contracted to. \qed
\end{proof}

\subsection{The Case $\mathbf{H=P_1+P_5}$}\label{s-p1p5}

We will prove that {\sc Longest Path Contractibility} is polynomial-time solvable for $(P_1+P_5)$-free graphs.
This result extends a corresponding result of~\cite{HPW09} for $P_5$-free graphs. Its proof is based on the same but slightly generalized arguments as the result for $P_5$-free graphs and comes down to the following lemma.

\begin{lemma}\label{l-all}
Let $k\geq 4$ and let $G$ be a $(P_1+P_5)$-free graph with a $P_k$-suitable pair $(u,v)$ such that $N(u)$ is an independent set.
Then $G$ has a $P_k$-witness structure ${\cal W}$ with $W(p_1)=\{u\}$ and $W(p_k)=\{v\}$, for which the following holds:
$W(p_2)\setminus N(u)$ contains a set~$S$ of size at most~$2$ such that $N(u)\cup S$ is connected.
\end{lemma}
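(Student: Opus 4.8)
The claim is that for a $(P_1+P_5)$-free graph $G$ with a $P_k$-suitable pair $(u,v)$ where $N(u)$ is independent, we can find a witness structure in which the bag $W(p_2)$ needs at most $2$ vertices outside $N(u)$ to become connected. The plan is to start from \emph{any} witness structure ${\cal W}$ realizing the suitable pair, so $W(p_2)\supseteq N(u)$ and $G[W(p_2)]$ is connected. Write $S' = W(p_2)\setminus N(u)$ for the ``internal'' part that glues $N(u)$ together. My goal is to show a small subset $S\subseteq S'$ of size at most $2$ already suffices, i.e. $G[N(u)\cup S]$ is connected, and then argue the remaining vertices of $S'$ can be pushed into $W(p_3)$ without breaking the other bags.

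\textbf{Key structural step.}
First I would exploit $(P_1+P_5)$-freeness to bound how ``spread out'' $N(u)$ can be. Since $G[W(p_2)]$ is connected, there is a connected subgraph of $G[S'\cup N(u)]$ spanning $N(u)$; consider a spanning tree and look at the paths through $S'$ linking the independent vertices of $N(u)$. The hard part will be showing that a single vertex, or a pair, of $S'$ can cover $N(u)$ in the sense that $N(u)\cup S$ is connected. Here the isolated vertex ``$P_1$'' in $P_1+P_5$ is crucial: because $k\geq 4$, the bag $W(p_k)=\{v\}$ is nonadjacent to $W(p_2)$, so any neighbour of $v$ (a vertex of $W(p_{k-1})$, or $v$ itself when $k=4$) serves as a fixed $P_1$ that is anticomplete to $W(p_2)$. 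Thus if $G[N(u)\cup S']$ contained an \emph{induced} $P_5$, we would get an induced $P_1+P_5$ together with this external vertex, a contradiction. Hence $G[N(u)\cup S']$ is $P_5$-free, and since it is connected I can invoke the cograph/structure machinery: a connected $P_5$-free graph has a dominating structure of bounded ``radius,'' and in particular there is a vertex or short path that dominates and connects $N(u)$.

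\textbf{Extracting the size-two set.}
Concretely, I expect to show that $S'$ contains either a single vertex $s$ adjacent to all of $N(u)$ (if $G[N(u)\cup S']$ has a dominating vertex in $S'$), giving $|S|=1$, or, failing that, two vertices $s_1,s_2\in S'$ whose neighbourhoods cover $N(u)$ with $s_1s_2\in E(G)$, so that $N(u)\cup\{s_1,s_2\}$ is connected, giving $|S|=2$. The $P_5$-freeness forces short connection paths: any two vertices of $N(u)$ are joined through $S'$ by a path whose internal length is at most one internal vertex, otherwise an induced $P_5$ (hence $P_1+P_5$) appears. The main obstacle will be the careful case analysis verifying that no configuration of $N(u)$ requires three or more internal vertices; this is exactly where I must use that $N(u)$ is independent (so all its vertices are pairwise nonadjacent and any connection must route through $S'$) together with $P_5$-freeness to cap the diameter of the gluing.

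\textbf{Rebuilding the witness structure.}
Finally I would set $S$ to be the size-$\le 2$ set found above and reassign: put $W(p_2)=N(u)\cup S$ and move every vertex of $S'\setminus S$ into the neighbouring bag $W(p_3)$. I must check this reassignment still yields a valid $P_k$-witness structure: $W(p_2)$ remains connected by construction; the vertices moved into $W(p_3)$ keep $W(p_3)$ connected because they were adjacent (within the original $W(p_2)$) to $N(u)$-neighbours that in turn connect to $W(p_3)$, and more directly because in the original structure $W(p_2)$ and $W(p_3)$ were adjacent, so the merged set stays connected; and no new adjacencies are created between nonconsecutive bags since we only shift vertices between the two consecutive bags $W(p_2),W(p_3)$, whose union is unchanged. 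This establishes the desired witness structure with $W(p_1)=\{u\}$, $W(p_k)=\{v\}$, and $W(p_2)\setminus N(u)=S$ of size at most $2$.
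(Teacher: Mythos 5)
Your overall plan---use $v$ (anticomplete to $W(p_2)$ since $k\geq 4$) as the forbidden $P_1$ and convert $(P_1+P_5)$-freeness into a $P_5$-type restriction on the second bag---matches the paper's starting point, but your execution has a genuine gap at its core. You only deduce that $G[N(u)\cup S']$ is $P_5$-free, i.e.\ you exclude $u$ from the induced subgraph under consideration, and that statement is too weak to yield a gluing set of size at most $2$. Concretely, let $N(u)=\{w_1,w_2,w_3\}$, let $S'=\{x_1,x_2,x_3\}$ induce a triangle, and let $x_i$ be adjacent in $N(u)$ precisely to $w_i$ (a ``net''). This graph is connected and $P_5$-free, and $N(u)$ is independent, yet no vertex of $S'$ dominates $N(u)$ and no pair of vertices of $S'$ even covers $N(u)$, so your intermediate goal fails; the Bacs\'o--Tuza-style ``dominating structure'' machinery you allude to cannot rescue this, since the net's minimum connected dominating set is the whole triangle. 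This configuration is excluded only by induced $P_5$'s that pass \emph{through} $u$: the paper first shows, via the path $x_iw_iuw_jx_j$ together with $v$, that vertices of $S'$ with private neighbours in $N(u)$ must be pairwise adjacent, and then the path $w_3uw_2x_2x_1$ together with $v$ gives the final induced $P_1+P_5$. Your proposal never uses $u$ as an interior vertex of a forbidden path, and relatedly your claim that two vertices of $N(u)$ must be joined through $S'$ with at most one internal vertex is false: a connection with two internal vertices gives only an induced $P_4$ (adding $u$ closes a $C_5$, not a $P_5$), and it occurs in the net.

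A secondary problem is your final ``rebuilding'' step. It is unnecessary---the lemma only asks that $W(p_2)\setminus N(u)$ \emph{contain} a set $S$ of size at most $2$ with $N(u)\cup S$ connected, so once such an $S$ is found inside the original witness structure, that structure itself already satisfies the conclusion---and the argument you give for it is also incorrect: a vertex of $S'\setminus S$ may have all of its neighbours inside $N(u)\cup S$, in which case moving it into $W(p_3)$ disconnects that bag; adjacency of the sets $W(p_2)$ and $W(p_3)$ does not provide each individual moved vertex with a neighbour in the new bag. When the paper does need reassignments of this kind (in Section~\ref{s-sp1p4}) it moves only the complement of a \emph{closure}, cf.\ Lemma~\ref{l-clos}, precisely to avoid this failure.
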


\begin{proof}
As $(u,v)$ is a $P_k$-suitable pair, $G$ has a $P_k$-witness structure ${\cal W}$ with $W(p_1)=\{u\}$ and $W(p_k)=\{v\}$.
For contradiction, assume that $W(p_2)\setminus N(u)$ contains no set~$S$ of size at most~2 such that $N(u)\cup S$ is connected.
Then $W(p_2)\setminus N(u)$ contains at least three vertices $x_1$, $x_2$, $x_3$ such that one of the following holds:

\begin{itemize}
\item [(i)] for $i=1,2,3$, vertex $x_i$ is adjacent to some vertex $w_i\in N(u)$ with $w_i\notin N(x_h)\cup N(x_j)$, where $\{h,i,j\}=\{1,2,3\}$; or
\item [(ii)] $N(u)\subseteq N(x_1)\cup N(x_2)$, but $G[N(u)\cup \{x_1\}\cup \{x_2\}]$ is not connected.
\end{itemize}

First assume that~(i) holds. Recall that $N(u)$ is an independent set.
Then $x_1x_2 \in E(G)$, as otherwise the set $\{v\}\cup \{x_1,w_1,u,w_2,x_2\}$ induces a $P_1+P_5$ in $G$, which is not possible.
However, now the set $\{v\}\cup \{w_3,u,w_2,x_2,x_1\}$ induces a $P_1+P_5$ in $G$, a contradiction.
Hence, (i) cannot hold.
Now assume that (ii) holds.
As $(G,u,v)$ has no $1$-constant solution,  $x_1$ has a neighbour $w_1\in N(u)$ not adjacent to $x_2$ and
$x_2$ has a neighbour $w_2\in N(u)$ not adjacent to $x_1$. As $G[N(u)\cup \{x_1\}\cup \{x_2\}]$ is not connected but
$N(u)\subseteq N(x_1)\cup N(x_2)$,
we have that $x_1x_2\notin E(G)$ However, then
the set $\{v\}\cup \{x_1,w_1,u,w_2,x_2\}$ induces a $P_1+P_5$ in $G$, a contradiction. Hence~(ii) does not hold either, a contradiction. \qed
\end{proof}

As a consequence of Lemma~\ref{l-all}, we get that $P_4$-{\sc Suitability} is easy and that $P_k$-{\sc Suitability} reduces to $P_4$-{\sc Suitability}, as we will see.

\begin{lemma}\label{l-top4b}
The $P_4$-{\sc Suitability} problem can be solved in polynomial time for $(P_1+P_5)$-free graphs.
\end{lemma}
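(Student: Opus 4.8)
The plan is to combine Lemma~\ref{l-all} (in the case $k=4$) with the $\alpha$-constant solution check provided by Lemma~\ref{l-constant}. First I would dispose of the trivial case: for $(u,v)$ to be $P_4$-suitable we need a witness structure with $W(p_1)=\{u\}$ and $W(p_4)=\{v\}$, which forces $uv\notin E(G)$ and $N(u)\cap N(v)=\emptyset$, i.e. $\dist(u,v)\ge 3$. Hence if $u$ and $v$ are at distance at most~$2$ we immediately return that $(G,u,v)$ is a no-instance, and from now on assume $\dist(u,v)\ge 3$.

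Next I would make $N(u)$ independent. If $G[N(u)]$ contains an edge~$e$, then any $P_4$-witness structure with $W(p_1)=\{u\}$ must place both endpoints of $e$ into $W(p_2)$, so contracting $e$ yields an equivalent instance; moreover, since $P_1+P_5$ is a linear forest, Lemma~\ref{l-contract} guarantees that this contraction preserves $(P_1+P_5)$-freeness. Applying the {\tt Contraction Rule} exhaustively on $N(u)$ thus produces, in polynomial time, an equivalent instance---which I again denote by $(G,u,v)$---in which $N(u)$ is an independent set and $G$ is still $(P_1+P_5)$-free.

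The key observation is that, once $N(u)$ is independent, Lemma~\ref{l-all} with $k=4$ says precisely that $P_4$-suitability is equivalent to the existence of a $2$-constant solution. Indeed, in any $P_4$-witness structure we have $W(p_2)=N(u)\cup S_u$ and $W(p_3)=N(v)\cup S_v$, where $(S_u,S_v)$ is a partition of $T=V(G)\setminus(N[u]\cup N[v])$; Lemma~\ref{l-all} tells us that if such a structure exists, then $S_u=W(p_2)\setminus N(u)$ contains a set~$S$ with $|S|\le 2$ such that $N(u)\cup S$ is connected, which is exactly the condition that $(S_u,S_v)$ be $2$-constant. Conversely, any $2$-constant solution is by definition a solution and therefore witnesses that $(u,v)$ is a $P_4$-suitable pair. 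Hence it suffices to test whether $(G,u,v)$ admits a $2$-constant solution, and by Lemma~\ref{l-constant} (taking $\alpha=2$) this can be carried out in $O(n^{4})$ time.

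Given Lemma~\ref{l-all}, there is no substantial obstacle remaining: the argument is essentially a direct translation of that lemma into the language of $\alpha$-constant solutions. The only points requiring a little care are the correctness of the preprocessing step---that contracting edges inside $N(u)$ is safe and preserves $(P_1+P_5)$-freeness, which is handled by Lemma~\ref{l-contract}---and the verification that the set~$S$ of size at most~$2$ promised by Lemma~\ref{l-all} indeed lies inside $S_u=W(p_2)\setminus N(u)$, so that it genuinely certifies a $2$-constant solution rather than some weaker structure.
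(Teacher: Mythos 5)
Your proposal is correct and follows essentially the same route as the paper's proof: reduce to distance at least~$3$, make $N(u)$ independent via the {\tt Contraction Rule} (with Lemma~\ref{l-contract} preserving $(P_1+P_5)$-freeness), invoke Lemma~\ref{l-all} with $k=4$ to conclude that any solution yields a $2$-constant one, and then test for a $2$-constant solution in $O(n^4)$ time via Lemma~\ref{l-constant}. The extra care you take in spelling out both directions of the equivalence and in checking that the set $S$ lies in $S_u=W(p_2)\setminus N(u)$ is consistent with, and slightly more explicit than, the paper's argument.
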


\begin{proof}
Let $(G,u,v)$ be an instance of $P_4$-{\sc Suitability}, where $G$ is a connected $(P_1+P_5)$-free graph.
We may assume without loss of generality that $u$ and $v$ are of distance at least~3 from each other, as otherwise
$(G,u,v)$ is a no-instance. 
We may also assume without loss of generality that $N(u)$ is an independent set; otherwise
we apply the {\tt Contraction Rule} on $N(u)$ to obtain an equivalent but smaller instance $(G',u,v)$, where $G'$ is also $(P_1+P_5)$-free due to Lemma~\ref{l-contract}. 
By Lemma~\ref{l-all} we find that if $(G,u,v)$ has a solution, then $G$ has a $2$-constant solution.
We can check the latter in $O(n^4)$ time by Lemma~\ref{l-constant}. \qed
\end{proof}

\begin{lemma}\label{l-top5b}
The $P_5$-{\sc Suitability} problem can be solved in $O(n^6)$ time for $(P_1+P_5)$-free graphs.
\end{lemma}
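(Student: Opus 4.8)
The plan is to turn $P_5$-{\sc Suitability} into a bounded number of instances of $P_4$-{\sc Suitability}, which Lemma~\ref{l-top4b} already solves in $O(n^4)$ time; Lemma~\ref{l-all} is tailored exactly for this reduction. First I would handle the cheap preprocessing: if $\dist(u,v)<4$ then $(u,v)$ cannot be $P_5$-suitable and we return ``no''. Then, using the {\tt Contraction Rule} and Lemma~\ref{l-contract}, I would contract the edges inside $N(u)$ so that $N(u)$ becomes independent while $G$ stays $(P_1+P_5)$-free; independence of $N(u)$ is the hypothesis needed to invoke Lemma~\ref{l-all}.

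The core is a single branching step. If $(u,v)$ is $P_5$-suitable then, by Lemma~\ref{l-all} with $k=5$, there is a $P_5$-witness structure with $W(p_1)=\{u\}$, $W(p_5)=\{v\}$ in which $W(p_2)\setminus N(u)$ contains a set $S$ with $|S|\le 2$ and $N(u)\cup S$ connected. Not knowing $S$, I would branch over all $O(n^2)$ sets $S$ of size at most~$2$ (including $S=\emptyset$); for each $S$ with $N(u)\cup S$ connected I contract $N(u)\cup S$ to one vertex $u'$ (legitimate since the set is connected, and $(P_1+P_5)$-freeness is preserved by Lemma~\ref{l-contract}). In the contracted graph $G'$ the vertex $u$ is now a pendant on $u'$, so I delete it and run the algorithm of Lemma~\ref{l-top4b} on $(G'-u,u',v)$, answering ``yes'' exactly when some branch does. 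With $O(n^2)$ branches, each costing the $O(n^4)$ of Lemma~\ref{l-top4b}, the total is $O(n^6)$, as claimed.

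Soundness is the routine direction: from a $P_4$-witness structure $\{u'\},C_2,C_3,\{v\}$ of $G'-u$ one recovers the $P_5$-witness structure $\{u\},\,N(u)\cup S,\,C_2,\,C_3,\,\{v\}$ of $G$ by re-inserting $u$ as $W(p_1)$ and un-contracting $u'$; this is valid because $u$ is adjacent only to $N(u)\subseteq W(p_2)$ and all adjacencies of $u'$ carry over to $N(u)\cup S$.

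The completeness direction is where I expect the real work, and it is the step I would flag as the main obstacle. I must show that if $(u,v)$ is $P_5$-suitable then some admissible guess $S$ makes $(G'-u,u',v)$ a yes-instance. The difficulty is that Lemma~\ref{l-all} only guarantees $S\subseteq W(p_2)\setminus N(u)$, not $W(p_2)=N(u)\cup S$; after contracting merely $N(u)\cup S$, a leftover vertex of $R=W(p_2)\setminus(N(u)\cup S)$ whose only link to the rest of the bag runs through $N(u)\cup S$ would survive as a pendant on $u'$ and stop $\{u'\}$ from being a single-vertex endpoint bag. This is exactly where $(P_1+P_5)$-freeness---the engine behind Lemma~\ref{l-all}---must be brought to bear: since $\dist(u,v)\ge 4$, any shortest path from such a ``buried'' vertex to $v$ has length at least~$4$ and hence contains an induced $P_5$ lying in $W(p_2)\cup W(p_3)\cup W(p_4)\cup\{v\}$, to which $u$ is anticomplete, producing a forbidden induced $P_1+P_5$. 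I would use this to conclude that every component of $R$ meets $W(p_3)$, so that moving $R$ into the $W(p_3)$-bag keeps it connected and collapses $W(p_2)$ to $\{u'\}$; the remaining bags $W(p_3),W(p_4)$ then exhibit $(u',v)$ as a $P_4$-suitable pair of $G'-u$. Making the ``$u$ is anticomplete'' claim fully rigorous---in particular ruling out that the buried vertex attaches to $N(u)$ rather than to $S$---is the one point that will require the most careful case analysis.
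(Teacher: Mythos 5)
Your reduction follows the paper's route at the top level---branch over the $O(n^2)$ candidate sets $S$ supplied by Lemma~\ref{l-all}, contract $N(u)\cup S$ to a single vertex $u'$, and invoke Lemma~\ref{l-top4b} on $(G'-u,u',v)$---but it omits a step of the paper's algorithm that is not optional, and the argument you propose in its place is false. After contracting $N(u)\cup S$ to $u'$, the paper additionally contracts into $u'$ every connected component of $G'-u'$ other than $\{u\}$ and the component containing $v$; these components consist exactly of your ``buried'' vertices, and since they can reach $v$ only through $N(u)\cup S$ they must lie inside $W(p_2)$ in any witness structure, so absorbing them into $u'$ is safe. You instead try to prove that buried vertices cannot exist, via an induced $P_1+P_5$ with $u$ as the isolated vertex. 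This fails in precisely the case you flag at the end: if the buried vertex attaches to $N(u)$ rather than to $S$, then every path from it to $v$ passes through a neighbour of $u$, so $u$ is not anticomplete to the resulting $P_5$ and no forbidden induced subgraph arises.

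This is not merely a missing case analysis: your algorithm as stated returns no on yes-instances. Take the tree $G$ with edges $uw$, $x_1w$, $x_2w$, $x_3w$, $wa$, $ab$, $bv$. It is $(P_1+P_5)$-free (the only $5$-vertex paths are $\ell\, w\, a\, b\, v$ with $\ell\in\{u,x_1,x_2,x_3\}$, and every remaining vertex is adjacent to $w$), we have $\dist(u,v)=4$ and $N(u)=\{w\}$ independent, and $(u,v)$ is $P_5$-suitable via the witness structure $(\{u\},\{w,x_1,x_2,x_3\},\{a\},\{b\},\{v\})$. Yet in every branch of your algorithm at most two of $x_1,x_2,x_3$ fit into $S$, so after contracting $N(u)\cup S$ some $x_i$ survives with $u'$ as its unique neighbour; since $N(u')$ in $G'-u$ also contains a vertex on the $v$-side and every path between the two uses $u'$, the bag $W(q_2)\supseteq N(u')$ of any $P_4$-witness with $W(q_1)=\{u'\}$ cannot be connected, and every call to Lemma~\ref{l-top4b} answers no. The paper's absorption step settles this instance at once (all three $x_i$ are swallowed by $u'$), and it is also what makes completeness work in general: after absorption, every leftover vertex of $W(p_2)\setminus(N(u)\cup S)$ has a path to $v$ avoiding $N(u)\cup S$, hence can be moved into the third bag while keeping that bag connected, which collapses the second bag to $\{u'\}$ as required.
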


\begin{proof}
Let $(G,u,v)$ be an instance of $P_5$-{\sc Suitability}, where $G$ is a connected $(P_1+P_5)$-free graph.
We may assume without loss of generality that $u$ and $v$ are of distance at least~4 from each other, as otherwise
$(G,u,v)$ is a no-instance. 
We may also assume without loss of generality that $N(u)$ is an independent set; otherwise
we apply the {\tt Contraction Rule} on $N(u)$ to obtain an equivalent but smaller instance $(G',u,v)$, where $G'$ is also $(P_1+P_5)$-free due to Lemma~\ref{l-contract}.

If $(u,v)$ is a $P_5$-suitable pair, then by Lemma~\ref{l-all}, $G$ has a $P_5$-witness structure ${\cal W}$ with
$W(p_1)=\{u\}$ and $W(p_5)=\{v\}$, for which the following holds:
$W(p_2)\setminus N(u)$ contains a set~$S$ of size at most~$2$, such that $N(u)\cup S$ is connected.

We now branch by considering all possibilities of choosing this set $S$. This leads to $O(n^2)$ branches. We consider each branch separately, as follows.
First we contract all edges in $G[N(u)\cup S]$. If this does 
not
yield a single vertex $u'$, then we discard the branch. Otherwise we let $G'$ be the resulting graph. The graph $G'-u'$ consists of at least two connected components, one of which consists of vertex~$u$, and the other one contains $v$ and $N(v)$. 
If there are more components in $G'-u'$ than these two, we contract each such component $D$ to $u'$  by applying 
the {\tt Contraction Rule} on $\{u'\}\cup V(D)$. It remains to check if $(G'-u,u',v)$ is a yes-instance of $P_4$-{\sc Suitability}. We can do this in $O(n^4)$ time via 
Lemma~\ref{l-top4b}. 
As there are $O(n^2)$ branches, the total running time of our algorithm is $O(n^6)$.
\qed
\end{proof}

\begin{lemma}\label{l-top6b}
The $P_6$-{\sc Suitability} problem can be solved in $O(n^8)$ time for $(P_1+P_5)$-free graphs.
\end{lemma}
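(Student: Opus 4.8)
The plan is to reduce $P_6$-{\sc Suitability} to $P_5$-{\sc Suitability}, mirroring exactly the reduction from $P_5$ to $P_4$ carried out in the proof of Lemma~\ref{l-top5b}, with every index shifted up by one. Let $(G,u,v)$ be an instance, where $G$ is a connected $(P_1+P_5)$-free graph. First I would reject the instance unless $\dist(u,v)\geq 5$, since in a $P_6$-witness structure the vertices $u$ and $v$ occupy the two end bags of a path of six connected, consecutively adjacent bags, which forces $\dist(u,v)\geq 5$. Next, by applying the {\tt Contraction Rule} to $N(u)$ whenever $G[N(u)]$ has an edge and invoking Lemma~\ref{l-contract} to preserve $(P_1+P_5)$-freeness, I may assume without loss of generality that $N(u)$ is an independent set.

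The key step is to invoke Lemma~\ref{l-all} with $k=6$: if $(u,v)$ is $P_6$-suitable, then $G$ has a $P_6$-witness structure ${\cal W}$ with $W(p_1)=\{u\}$ and $W(p_6)=\{v\}$ in which $W(p_2)\setminus N(u)$ contains a set $S$ with $|S|\leq 2$ such that $N(u)\cup S$ is connected. I would branch over all choices of such an $S$, including $S=\emptyset$ (which subsumes the case $|N(u)|=1$), giving $O(n^2)$ branches. In each branch I contract every edge of $G[N(u)\cup S]$; if the result is not a single vertex $u'$ I discard the branch, and otherwise I let $G'$ denote the resulting graph. In $G'-u'$ the vertex $u$ is isolated, and one further component contains $v$ together with $N(v)$; any remaining component $D$ of $G'-u'$ I fold into $u'$ via the {\tt Contraction Rule} applied to $\{u'\}\cup V(D)$. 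It then remains to test whether $(G'-u,u',v)$ is a yes-instance of $P_5$-{\sc Suitability}, which takes $O(n^6)$ time by Lemma~\ref{l-top5b}; note that $G'-u$ is $(P_1+P_5)$-free, being obtained from $G$ by edge contractions (Lemma~\ref{l-contract}) and a vertex deletion.

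The point requiring care is the correctness equivalence: $(u,v)$ is $P_6$-suitable in $G$ if and only if some branch produces a yes-instance of $P_5$-{\sc Suitability}. Contracting $N(u)\cup S$ to $u'$ realizes the connected core of the bag $W(p_2)$ and detaches $u$ as an isolated endpoint, so a $P_6$-witness structure of $G$ yields a $P_5$-witness structure of $G'-u$ with end bags $\{u'\}$ and $\{v\}$ (absorbing stray pieces into $u'$ exactly as in the cleanup step), and conversely any such $P_5$-witness structure reattaches $u$ to recover a $P_6$-witness structure of $G$. I do not anticipate a genuine obstacle here, since this argument is identical to the one validating Lemma~\ref{l-top5b} up to the index shift; the only items to confirm are that the components of $G'-u'$ behave as described, which holds because $u$ is nonadjacent to $v$ and to $N(v)$ as $\dist(u,v)\geq 5$. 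With $O(n^2)$ branches each costing $O(n^6)$, the total running time is $O(n^8)$.
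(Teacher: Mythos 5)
Your proposal is correct and follows exactly the paper's own route: the paper proves this lemma by reducing $P_6$-{\sc Suitability} to $P_5$-{\sc Suitability} ``in exactly the same way'' as the reduction from $P_5$ to $P_4$ in Lemma~\ref{l-top5b}, i.e.\ branching over the $O(n^2)$ choices of the set $S$ guaranteed by Lemma~\ref{l-all}, contracting $N(u)\cup S$ to $u'$, folding stray components into $u'$, and calling the $O(n^6)$-time algorithm of Lemma~\ref{l-top5b}, for $O(n^8)$ total. Your write-up merely spells out the details that the paper leaves implicit by reference.
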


\begin{proof}
We reduce $P_6$-{\sc Suitability} to $P_5$-{\sc Suitability} in exactly the same way we reduced $P_5$-{\sc Suitability} to 
$P_4$-{\sc Suitability} in the proof of Lemma~\ref{l-top5b}. This leads to $O(n^2)$ branches. For each branch we apply Lemma~\ref{l-top5b}, which takes $O(n^6)$ time.
Hence the total running time of $O(n^8)$.
\qed
\end{proof}

We are now ready to prove the main result of Section~\ref{s-p1p5}.

\begin{theorem}\label{t-p1p5}
The {\sc Longest Path Contractibility} problem is polynomial-time solvable for $(P_1+P_5)$-free graphs.
\end{theorem}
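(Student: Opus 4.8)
The plan is to follow the template already used for Theorems~\ref{t-p2p4} and~\ref{t-p1p2p3}, reducing {\sc Longest Path Contractibility} to a constant number of $P_k$-{\sc Suitability} tests. Let~$G$ be a connected $(P_1+P_5)$-free graph, which we may assume has at least one edge. The key preliminary observation is that every $(P_1+P_5)$-free graph is $P_7$-free: the graph~$P_1+P_5$ occurs as an induced subgraph of~$P_7=p_1p_2\cdots p_7$, since $p_1$ is non-adjacent to each of the five vertices $p_3,p_4,p_5,p_6,p_7$, which together induce a~$P_5$. Hence any graph containing an induced~$P_7$ also contains an induced~$P_1+P_5$, so~$G$ is indeed $P_7$-free. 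As noted in Section~\ref{s-intro}, $P_7$-free graphs do not contain~$P_7$ as a contraction, so the longest path to which~$G$ can be contracted has at most six vertices.

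Given this bound, I would determine the largest $k\in\{3,4,5,6\}$ for which~$G$ contains~$P_k$ as a contraction. By Lemma~\ref{l-outer}, for each fixed~$k$ this is equivalent to deciding whether~$G$ has a $P_k$-suitable pair, which I would test by running $P_k$-{\sc Suitability} on each of the $O(n^2)$ pairs of distinct non-adjacent vertices $(u,v)$. For $k=6,5,4$ these tests run in polynomial time by Lemmas~\ref{l-top6b}, \ref{l-top5b} and~\ref{l-top4b}, respectively, while for $k=3$ the test is polynomial by Lemma~\ref{l-trivial} (again combined with Lemma~\ref{l-outer}). Running these checks for $k=6,5,4,3$ in turn and returning the largest successful~$k$ solves the problem; if none succeeds then, since~$G$ has an edge, $P_2$ is the longest path to which~$G$ contracts.

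At this level there is essentially no remaining obstacle: all the combinatorial difficulty has already been absorbed into Lemmas~\ref{l-top4b}--\ref{l-top6b}, whose proofs rest on the structural Lemma~\ref{l-all}. The only point needing care is the $P_7$-freeness observation, which caps the relevant range of~$k$ and guarantees termination after a constant number of suitability tests. As each of the constantly many values of~$k$ requires only $O(n^2)$ polynomial-time suitability tests, the overall procedure runs in polynomial time.
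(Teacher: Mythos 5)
Your proposal is correct and follows essentially the same route as the paper: observe that $(P_1+P_5)$-free graphs are $P_7$-free (so $P_7$ cannot occur as a contraction), then for $k=6,5,4$ combine Lemma~\ref{l-outer} with Lemmas~\ref{l-top6b}, \ref{l-top5b} and~\ref{l-top4b} to test all non-adjacent pairs, falling back on Lemma~\ref{l-trivial} for $k=3$ and on $P_2$ otherwise. Your explicit justification of the $P_7$-freeness observation is a small addition the paper leaves implicit, but the argument is otherwise identical.
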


\begin{proof}
Let $G$ be a connected $(P_1+P_5)$-free graph.
We may assume without loss of generality that $G$ has at least one edge. Note that $G$ is $P_7$-free. Hence, $G$ does not contain $P_7$ as a a contraction.
By combining Lemmas~\ref{l-top4b}--\ref{l-top6b} with Lemma~\ref{l-outer} we can check in polynomial time if $G$ contains $P_k$ as a contraction for $k=6,5,4$.
If not, then we check if $G$ contains $P_3$ as a contraction by using Lemma~\ref{l-trivial} combined with Lemma~\ref{l-outer}.  If not then, as $G$ has an edge, $P_2$ is the longest path to which $G$ can be contracted to. \qed
\end{proof}

\subsection{The Case $\mathbf{H=sP_1+P_4}$}\label{s-sp1p4}

We adopt/extend the notation from Section~\ref{s-p4}.
Let $(G,u,v)$ be an instance of $P_k$-{\sc Suitability} with $k \ge 4$. A \emph{solution} is a witness structure $\mathcal{W}=\{W(p_1), \dots,
W(p_k)\}$ with $W(p_1)=\{u\}, W(p_2)=N(u) \cup S_u , W(p_{k-1}) = N(v) \cup S_v$ and  $W(p_k)=\{v\}$. We let $T:= V\setminus
(N[u] \cup N[v])$. Thus $S_u$ and $S_v$ are disjoint subsets of $T$ such that $N(u) \cup S_u$ and $N(v) \cup S_v$ are connected.
As in Section~\ref{s-p4}, we call a solution $\alpha$-constant if there exists a subset $S_u'\subseteq S_u$
 with $N(u) \cup S_u'$ 
connected and $|S_u'| \le \alpha$, or there exists $S_v'\subseteq S_v$ with $N(v)\cup S_v'$ connected and $|S_v' |\le \alpha$.
 
Let $(\{u\}, N(u) \cup S_u, \dots)$ be a solution and $S'_u \subseteq S_u$ such that $N(u) \cup S'_u$ is connected. We define the \emph{closure} $\overline{S'_u}$ of $S'_u$ as the set of all
vertices in $S_u$ that are connected to $v$ in $G$ only via $N(u) \cup S'_u$. 
\begin{lemma}\label{l-clos}
Let $(\{u\},N(u) \cup S_u, W(p_3), \dots, W(p_{k-1}),\{v\})$ be a solution for an instance $(G,u,v)$ of $P_k$-{\sc Suitability} for some $k\geq 4$. 
If $S_u' \subseteq S_u$ such that $N(u) \cup S'_u$ is connected,
then $(\{u\},N(u) \cup \overline{S'_u}, W(p_3)\cup S_u \setminus \overline{S'_u}, \dots,W(p_{k-1}),\{v\})$ is also a solution for $(G,u,v)$.
\end{lemma}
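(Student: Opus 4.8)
The plan is to reinterpret the closure combinatorially and then verify the three witness-structure axioms for the modified partition. Write $R=N(u)\cup S'_u$, which is connected by hypothesis, and let $C$ denote the connected component of $v$ in $G-R$. A vertex $t\in S_u$ is connected to $v$ in $G$ avoiding $R$ precisely when $t\in C$, so the closure is exactly $\overline{S'_u}=S_u\setminus C$, and hence $S_u\setminus\overline{S'_u}=S_u\cap C$. The proposed solution simply moves the vertices of $S_u\cap C$ out of $N(u)\cup S_u$ and into $W(p_3)$; it therefore remains a partition of $V(G)$ with all bags nonempty (the two altered bags still contain $N(u)\ne\emptyset$, since $u$ has a neighbour, and $W(p_3)$, respectively). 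The one structural fact I would use repeatedly is a locality observation: since in the original structure $W(p_2)=N(u)\cup S_u$ is adjacent only to $W(p_1)=\{u\}$ and to $W(p_3)$, and since $S_u\subseteq T$ is anticomplete to $N[u]$, every vertex of $S_u$ has all of its neighbours inside $N(u)\cup S_u\cup W(p_3)$; after deleting $R$, its surviving neighbours lie in $(S_u\setminus S'_u)\cup W(p_3)$.

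Next I would check connectivity of the two modified bags, which I denote $W'(p_2)=N(u)\cup\overline{S'_u}=R\cup(\overline{S'_u}\setminus S'_u)$ and $W'(p_3)=W(p_3)\cup(S_u\cap C)$. For $W'(p_2)$, take $t\in\overline{S'_u}\setminus S'_u$ and any path from $t$ to $N(u)$ inside the connected set $W(p_2)$; truncate it at its first vertex of $R$. Every earlier vertex avoids $R$ and lies on a path from $t$, hence in $t$'s component of $G-R$, which is not $C$, so those vertices lie in $S_u\setminus C=\overline{S'_u}$; thus $t$ reaches the connected set $R$ within $W'(p_2)$. For $W'(p_3)$, take $t\in S_u\cap C$ and pick a path from $t$ to $v$ in $G-R$; by the locality fact, while the path stays in $S_u$ it can only leave $S_u$ into $W(p_3)$, so its first vertex outside $S_u$ lies in $W(p_3)$ while all preceding vertices lie in $S_u\cap C$. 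Hence $t$ reaches $W(p_3)$ inside $W'(p_3)$, and $W'(p_3)$ is connected.

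Finally I would verify the adjacency pattern of the path $p_1\cdots p_k$. All required non-adjacencies are inherited: $W'(p_2)\subseteq W(p_2)$ keeps the non-adjacencies of $W(p_2)$, the bags $W(p_4),\dots,\{v\}$ are unchanged, and the vertices $S_u\cap C\subseteq S_u$ added to $W'(p_3)$ are anticomplete to $u$ and to every $W(p_j)$ with $j\ge 4$, because $S_u$ already was. The adjacencies $p_1p_2$ and $p_3p_4,\dots,p_{k-1}p_k$ survive because $u$ is still complete to $N(u)\subseteq W'(p_2)$ and because $W(p_3)\subseteq W'(p_3)$ still witnesses the old adjacencies of $W(p_3)$. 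The only delicate point is the adjacency $p_2p_3$: an old edge between $W(p_2)$ and $W(p_3)$ becomes internal to $W'(p_3)$ if its $W(p_2)$-endpoint was moved. I expect this to be the main obstacle, and I would resolve it as follows: if $S_u\cap C=\emptyset$ the partition is unchanged and there is nothing to prove; otherwise $W(p_2)$ is connected and is split by the construction into the two nonempty parts $N(u)\cup\overline{S'_u}$ and $S_u\cap C$, so some edge of $G$ joins them, and that edge is exactly an edge between $W'(p_2)$ and $W'(p_3)$. With all axioms verified, the modified partition is a $P_k$-witness structure, i.e.\ a solution for $(G,u,v)$.
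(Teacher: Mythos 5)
Your proof is correct, and its core is the same as the paper's: you reinterpret the closure via reachability of $v$ in $G-R$ where $R=N(u)\cup S'_u$, and you show that a path from a vertex of $S_u\setminus\overline{S'_u}$ to $v$ avoiding $R$ must stay inside $S_u\setminus\overline{S'_u}$ until it enters $W(p_3)$, using that vertices of $W(p_2)$ have neighbours only in $W(p_1)\cup W(p_2)\cup W(p_3)$ and that vertices of $S_u$ are non-adjacent to $u$. That is exactly the paper's argument for connectedness of $W(p_3)\cup S_u\setminus\overline{S'_u}$. Where you go beyond the paper: its proof checks only disjointness and that single connectedness claim, leaving implicit both the connectedness of the new second bag $N(u)\cup\overline{S'_u}$ and the adjacency between the two modified bags; you verify both, and your resolution of the $p_2p_3$ adjacency (if $S_u\cap C\neq\emptyset$, then the connected set $W(p_2)$ is split into the two nonempty parts $N(u)\cup\overline{S'_u}$ and $S_u\cap C$, so some edge of $G$ crosses the split) is a clean way to dispose of what you rightly identify as the only delicate point. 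One wording slip should be fixed: $S_u\subseteq T$ is \emph{not} ``anticomplete to $N[u]$'' --- vertices of $S_u$ can, and in general must, have neighbours in $N(u)$. What you actually need (and what your locality conclusion uses, since it allows neighbours in $N(u)$) is only that $S_u$ is \emph{disjoint} from $N[u]$, so that no vertex of $S_u$ is adjacent to $u$; with that phrase corrected, the argument is complete.
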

\begin{proof}
We check the three properties for witness structures. All bags in the new partition are mutually disjoint. Connectedness of $W(p_3)\cup S_u \setminus \overline{S'_u}$: Any $s \in S_u\setminus\overline{S'_u}$ is
joined to $v$ by a path $P$ that does not pass through $N(u) \cup S'_u$ (by definition). Moreover, $P$ does not hit $\overline{S'_u}$ since from there; by definition, we cannot reach $v$ without passing through
$N(u) \cup S'_u$. Since vertices in $W(p_2)=N(u)\cup S_u$ are only adjacent to vertices in $W(p_1) \cup W(p_2) \cup W(p_3)$, we find that $P$ must be contained in $S_u \setminus \overline{S'_u}$ until it
eventually reaches $W(p_3)$. Connectedness of $W(p_3) \cup S_u \setminus \overline{S'_u}$ follows. \qed
\end{proof}

As it turns out, it suffices to search for $\alpha$-constant solutions:

 \begin{lemma}\label{l-constps} An instance $(G,u,v)$  of $P_k$-{\sc Suitability}, where $G$ is $(sP_1+P_4)$-free,
has a solution if and only if it has an $\alpha$-constant solution, where $\alpha=(s+2)(2s+4)$.
 \end{lemma}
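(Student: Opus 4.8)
The plan is to prove only the forward direction (the converse is immediate, since an $\alpha$-constant solution is a solution), and in fact to establish the stronger statement that \emph{every} solution is already $\alpha$-constant. The one structural fact I would extract first is that $(sP_1+P_4)$-freeness forbids $P_{2s+4}$: a path $p_1\cdots p_{2s+4}$ contains the induced $P_4=p_1p_2p_3p_4$ together with the $s$ pairwise non-adjacent vertices $p_6,p_8,\dots,p_{2s+4}$, none of which is adjacent to any of $p_1,\dots,p_4$. Hence every connected induced subgraph of $G$ has diameter at most $2s+2$, since a diametral shortest path is induced. In particular $N(u)\cup S_u$ and $N(v)\cup S_v$, being connected induced subgraphs, have diameter at most $2s+2$.

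Next I would bound the size of a cheapest connector in terms of the number $c_u$ of connected components of $G[N(u)]$. I would connect these components one at a time: given the current connected blob (initialised to one component together with all of $N(u)$ that it already reaches), a shortest path in $G[N(u)\cup S_u]$ from the next component to the blob has at most $2s+1$ internal vertices, of which I keep only the ones lying in $S_u$. After $c_u-1$ merges this yields a set $S_u'\subseteq S_u$ with $N(u)\cup S_u'$ connected and $|S_u'|\le (c_u-1)(2s+1)$. Thus if $c_u\le s+2$ then $|S_u'|\le (s+1)(2s+1)<(s+2)(2s+4)=\alpha$, so the solution is $\alpha$-constant via $u$; symmetrically if $c_v\le s+2$. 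It therefore remains to rule out, for a $(sP_1+P_4)$-free $G$, the case $c_u\ge s+3$ \emph{and} $c_v\ge s+3$.

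In that case I would construct an induced $sP_1+P_4$. Picking one vertex from each of $s$ distinct components of $G[N(u)]$ gives pairwise non-adjacent $a_1,\dots,a_s\in N(u)$. For the $P_4$ I would work on the $v$-side: take two components $C,C'$ of $G[N(v)]$ and a shortest path from $C$ to $C'$ inside $G[N(v)\cup S_v]$. If this path has at least four vertices it contains an induced $P_4$; otherwise it is $b-x-b'$ with $x\in S_v$, and I would locate a vertex $b''$ in a third component with $x\not\sim b''$ (such a $b''$ must exist, for otherwise $x$ together with the connectivity of $C$ would already connect $N(v)$, a $1$-constant solution), so that $x-b-v-b''$ is an induced $P_4$. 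Either way the $P_4$ lies in $\{v\}\cup N(v)\cup S_v$. For $k\ge 5$ the bags $W(p_2)$ and $W(p_{k-1})$ are non-adjacent, so $N(u)$ is anticomplete to $\{v\}\cup N(v)\cup S_v$; hence $\{a_1,\dots,a_s\}$ is anticomplete to the $P_4$ and $\{a_1,\dots,a_s\}\cup V(P_4)$ induces $sP_1+P_4$, the desired contradiction.

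The hard part will be $k=4$, the only value for which $W(p_2)=N(u)\cup S_u$ and $W(p_3)=N(v)\cup S_v$ are \emph{adjacent}: then $S_v$ may be adjacent to the $a_i$, and the construction above breaks. I would first note that an induced $P_4$ lying inside $G[N(v)]$ still works, since such a $P_4$ is anticomplete to $N(u)$; this reduces the obstruction to the subcase where both $G[N(u)]$ and $G[N(v)]$ are $P_4$-free—i.e.\ cographs—yet still have many components. Here I would apply Lemma~\ref{l-p4} to the components and analyse the cross-adjacencies between $S_v$ and $N(u)$ (and between $S_u$ and $N(v)$), arguing that if no admissible $P_4$ can avoid the chosen $sP_1$, then some $S$-vertex dominates an entire neighbourhood and hence supplies a cheap connector, contradicting non-$\alpha$-constancy. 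Making this final dichotomy precise, and checking that the slack in $(s+2)(2s+4)$ absorbs all the ``$-1$''s in the component-counting bound above, is where most of the care is required.
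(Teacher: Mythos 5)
Your reduction to the case $c_u\ge s+3$ and $c_v\ge s+3$ is sound, and your argument there is correct \emph{for $k\ge 5$}: the iterative-merging bound $|S_u'|\le (c_u-1)(2s+1)$ is valid, and since for $k\ge 5$ the bags $W(p_2)\supseteq N(u)$ and $W(p_{k-1})=N(v)\cup S_v$ are non-adjacent (and $N(u)$ is non-adjacent to $v\in W(p_k)$), your $s$ vertices from distinct components of $G[N(u)]$ are indeed anticomplete to the $P_4$ you build inside $\{v\}\cup N(v)\cup S_v$. The genuine gap is the case $k=4$, which you flag as the hard part but then both mishandle and leave open. For $k=4$ the bags $W(p_2)=N(u)\cup S_u$ and $W(p_3)=N(v)\cup S_v$ are \emph{adjacent}, and nothing in the definition of a witness structure prevents an individual vertex of $N(u)$ from being adjacent to vertices of $N(v)$ or of $S_v$; such edges are exactly what realises the required adjacency of the two middle bags. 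Hence your claim that ``an induced $P_4$ lying inside $G[N(v)]$ is anticomplete to $N(u)$'' is false, so the proposed reduction to the subcase where $G[N(u)]$ and $G[N(v)]$ are cographs collapses, and the remaining sketch (``some $S$-vertex dominates an entire neighbourhood'') is not an argument. Since $k=4$ is precisely the base case that Theorem~\ref{t-sp1p4} bottoms out on, this cannot be waved away.

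The paper's proof avoids this trap by never leaving a single bag, which is the idea your proposal is missing. After making $N(u)$ independent with the {\tt Contraction Rule}, it takes a \emph{minimum-size} set $S_u^*\subseteq S_u$ dominating $N(u)$, joins each $z\in S_u^*$ to one fixed $t\in S_u$ by a path of at most $2s+4$ vertices, and notes that if the resulting connector exceeds $\alpha=(s+2)(2s+4)$ then $|S_u^*|>s+2$. It then derives a contradiction purely inside $W(p_2)=N(u)\cup S_u$: every $P_4$ it manufactures — $wzz'w'$ from two adjacent cover vertices and their private neighbours, $w'zwz'$ from two cover vertices sharing a neighbour, or $wztz'$ from a vertex of $S_u\setminus S_u^*$ linking two of the resulting stars — lies in $N(u)\cup S_u$, and the accompanying independent vertices are uncovered vertices of the independent set $N(u)$, also in that bag. $(sP_1+P_4)$-freeness then forces the two or three vertices involved to dominate all but at most $s-1$ vertices of $N(u)$, yielding a cover of size at most $s+1$ and contradicting $|S_u^*|>s+2$. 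Because the forbidden configuration never crosses into $W(p_3),\dots,W(p_k)$, the argument is one-sided and uniform in $k$; patching your $k=4$ case would essentially amount to reproducing it.
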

\begin{proof} 
We may, as usual, assume that $N(u)$ is independent (otherwise we apply the {\tt Contraction Rule} to $N(u)$
without any effect on $S_u$ in solutions $(\{u\}, N(u) \cup S_u, \dots)$).
First suppose that $(G,u,v)$ has an $\alpha$-constant solution. Then obviously $(G,u,v)$ has a solution.

Now suppose that $(G,u,v)$ has a solution $(\{u\}, N(u) \cup S_u, \dots)$. 
Let $t \in S_u$ and let $S_u^* \subseteq S_u$ be a minimum size subset that covers (e.g., dominates) $N(u)$. 
Each $z \in S_u^*$ is connected to $t$ by some path $P_z \subseteq S_u$. Since $G$ is 
$(sP_1+P_4)$-free, $P_z$ has at most $2s+4$ vertices. Hence, $S_v':= \bigcup_{z \in S_u^*} P_z$ has size at most $|S_u^*|(2s+4)$
and is connected and covers $N(u)$ (as $S_u^*$ does). Then $(G,u,v)$ is an $\alpha$-constant solution, unless $|S_u'| > \alpha$.
From now on suppose that $|S_u'| > \alpha$, so in particular $|S_u^*| > s+2$.

We  show that $S_u^*$ is independent.
For contradiction, assume that $z,z'\in S_u^*$ are adjacent. Let $w,w'\in N(u)$ be private neighbours of $z,z'$, resp. Then $wzz'w'$ induces
a $P_4$. Since $G$ is $(sP_1+P_4)$-free, $\{z,z'\}$ must cover almost all vertices in $N(u)$ (which may be assumed independent)
except at most $s-1$ vertices,  say, $w_1, \dots, w_{s-1}$. Thus a minimum size cover $S_u^*$ of $N(u)$ has at most $s+1$ vertices ($z,z'$ and at most
$s-1$ others covering $w_1, \dots, w_{s-1}$), contradicting the fact that $|S_u^*| > s+2$. 

Next we prove that any two vertices $z,z'\in S_u^*$ cover disjoint sets in $N(u)$.
For contradiction, assume that $z,z'\in S_u^*$ have a common neighbour $w \in N(u)$. Since $z\in S_u^*$ also has a private neighbour $w' \in N(u)$,
we find an induced $P_4=w'zwz'$ and conclude that $\{z, z'\}$ must cover all but at most $s-1$ vertices in $N(u)$, a contradiction again.

From the above we conclude that $S_u^* \cup N(u)$ is a disjoint union of stars. 
Recall that $|S_u^*| > s+2> 1$.  Therefore, to be connected, $S_u$ must contain a vertex $t \in S_u \setminus S_u^*$ connecting
two vertices $z,z'\in S_u^*$. Let again $w$ be a private neighbour of $z$ in $N(u)$. Then $wztz'$ is a $P_4$, implying that
$\{t,z,z'\}$ must cover all but $s-1$ vertices in $N(u)$, leading to a contradiction as before. Summarizing, we have shown that 
$(S_u, \dots)$ is an $\alpha$-constant solution.  \qed
\end{proof}

Combining Lemmas~\ref{l-clos} and~\ref{l-constps} gives the desired result:

\begin{theorem}\label{t-sp1p4}
For every constant $s\geq 0$, the {\sc Longest Path Contractibility} problem is polynomial-time solvable for $(sP_1+P_4)$-free graphs.
\end{theorem}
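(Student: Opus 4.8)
The plan is to reduce {\sc Longest Path Contractibility} on $(sP_1+P_4)$-free graphs to a constant number of $P_k$-{\sc Suitability} instances, and then to solve each such instance by a recursion on $k$ that is driven by Lemmas~\ref{l-clos} and~\ref{l-constps}. First I would observe that $sP_1+P_4$ is an induced subgraph of $P_{2s+4}$ (place the $P_4$ on the first four vertices and the $s$ isolated vertices on $q_6,q_8,\dots,q_{2s+4}$), so every $(sP_1+P_4)$-free graph is $P_{2s+4}$-free and therefore cannot be contracted to $P_{2s+4}$. Hence the longest path to which a connected $(sP_1+P_4)$-free graph $G$ contracts has at most $2s+3$ vertices, and it suffices to decide, for each of the constantly many values $4\le k\le 2s+3$, whether $G$ contracts to $P_k$; the cases $k\le 3$ are handled by Lemma~\ref{l-trivial} and $k=2$ amounts to checking whether $G$ has an edge. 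By Lemma~\ref{l-outer}, deciding contractibility to $P_k$ reduces to testing, over the $O(n^2)$ non-adjacent pairs $(u,v)$, whether $(u,v)$ is $P_k$-suitable.

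The core is a polynomial-time algorithm for $P_k$-{\sc Suitability} on $(sP_1+P_4)$-free graphs, which I would establish by induction on $k$, with base case $k=3$ given by Lemma~\ref{l-trivial}. For $k\ge 4$, Lemma~\ref{l-constps} guarantees that $(G,u,v)$ has a solution if and only if it has an $\alpha$-constant solution, where $\alpha=(s+2)(2s+4)$ is a constant. I would therefore branch over all subsets $S_u'\subseteq T$ with $|S_u'|\le\alpha$ (and symmetrically over candidate sets $S_v'$), of which there are only $O(n^\alpha)$, discarding those for which $N(u)\cup S_u'$ is disconnected. For each surviving guess, Lemma~\ref{l-clos} lets me assume that the entire bag $W(p_2)$ equals $N(u)\cup\overline{S_u'}$. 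I would then contract $N(u)\cup\overline{S_u'}$ to a single vertex $u'$ and delete $u$, obtaining a graph $G'$ that is again $(sP_1+P_4)$-free by Lemma~\ref{l-contract}. The key correctness claim is that $(u,v)$ is $P_k$-suitable in $G$ via this choice of $W(p_2)$ if and only if $(u',v)$ is $P_{k-1}$-suitable in $G'$, since a $P_k$-witness structure of $G$ with $W(p_1)=\{u\}$ and $W(p_2)=N(u)\cup\overline{S_u'}$ corresponds exactly to a $P_{k-1}$-witness structure of $G'$ with outer bags $\{u'\}$ and $\{v\}$ (the vertex $u'$ inherits as neighbourhood precisely the former $W(p_3)$). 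This reduces the $P_k$ instance to $O(n^\alpha)$ instances of $P_{k-1}$-{\sc Suitability} on $(sP_1+P_4)$-free graphs, which are solved by the induction hypothesis.

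Putting the pieces together, the recursion has depth at most $2s$ and branches into $O(n^\alpha)$ subinstances at each level, so the total number of leaves is $n^{O(\alpha s)}$; as $\alpha$ and $s$ are constants and each leaf is processed (connectivity tests, computation of the closure $\overline{S_u'}$, one contraction) in polynomial time, $P_k$-{\sc Suitability} is solved in polynomial time for every fixed $k\le 2s+3$. Combining this with Lemma~\ref{l-outer} over all $O(n^2)$ pairs and returning the largest feasible $k$ then yields the theorem. The main obstacle I anticipate is verifying the recursion precisely: I must argue that fixing $W(p_2)$ to the closure $N(u)\cup\overline{S_u'}$ loses no solutions (this is exactly what Lemma~\ref{l-clos} provides), that contracting this bag to $u'$ and deleting $u$ produces a faithful $P_{k-1}$-{\sc Suitability} instance in which $u'$ and $v$ remain non-adjacent and $G'$ stays connected, and that $(sP_1+P_4)$-freeness — and hence the repeated applicability of Lemma~\ref{l-constps} — is preserved at every level of the recursion.
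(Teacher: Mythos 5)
Your proposal is correct and follows essentially the same route as the paper's own proof: reduce {\sc Longest Path Contractibility} to $P_k$-{\sc Suitability} via Lemma~\ref{l-outer}, invoke Lemma~\ref{l-constps} to restrict attention to $\alpha$-constant solutions, guess the small connected covering set $S_u'$ in $O(n^{\alpha})$ ways, extend it to its closure using Lemma~\ref{l-clos}, and contract that bag to reduce $P_k$-{\sc Suitability} to $P_{k-1}$-{\sc Suitability}, the recursion having constant depth because $(sP_1+P_4)$-free graphs cannot be contracted to long paths. The only differences are cosmetic: you delete $u$ rather than contracting it into the new end-vertex, and you guess candidate sets on both the $u$- and $v$-sides (a safe reading of the definition of $\alpha$-constant), whereas the proof of Lemma~\ref{l-constps} in fact shows the $u$-side connector always exists.
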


\begin{proof}
By Lemma~\ref{l-constps} we may focus on $\alpha$-constant solutions,
If $(G,u,v)$ has an $\alpha$-constant solution $(\{u\}, N(u) \cup S_u, \dots)$ with $S_u'\subseteq S_u$ of size at most $\alpha$, we may guess
this set $S_u'$ and extend it to its closure $\overline{S'_u}$ (by adding all vertices that are connected to the rest of the graph only 
through $N(u)\cup S_u'$ using Lemma~\ref{l-clos}) in time $O(n^{\alpha+2})$. We may then contract $\{u\} \cup \overline{S'_u}$, thereby reducing
$P_k$-{\sc Suitability} to $P_{k-1}$-{\sc Suitability}.
Since $G$ is $(sP_1+P_4)$-free, we may assume that $k \le 2s+4$. (If $k\geq 2s+5$, then every instance $(G,u,v)$ where
$G$ is $(sP_1+P_4)$-free is a no-instance of $P_k$-{\sc Suitability}). Thus only $2s$ such reductions are required. \qed
\end{proof}

\section{The NP-Complete Cases of Theorem~\ref{t-main}}\label{s-hard}

In this section we prove the new \NP-complete cases of Theorem~\ref{t-main}.

A {\it hypergraph}~${\cal H}$ is a pair $(Q,{\mathcal S})$, where~$Q=\{q_1,\ldots,q_m\}$ is a set of $m$ {\it elements} and~${\mathcal S}=\{S_1,\ldots,S_n\}$ is a set of $n$ {\it hyperedges}, which are subsets of~$Q$.
A {\it $2$-colouring} of ${\cal H}$ is a partition of~$Q$ into two (nonempty) sets $Q_1$ and $Q_2$ with $Q_1\cap S_j \ne\emptyset$ and $Q_2 \cap S_j \ne\emptyset$ for each~$S_j$. This leads to the following decision problem.

\problemdef{Hypergraph 2-Colourability}{a hypergraph ${\cal H}$.}{does ${\cal H}$ have a 2-colouring?}

Note that {\sc Hypergraph 2-Colourability} is \NP-complete even for hypergraphs ${\cal H}$ with $S_i\ne \emptyset$ for $1\leq i\leq n$ and $S_n=Q$. By a reduction from {\sc Hypergraph 2-Colourability}, Brouwer and Veldman~\cite{BV87} proved that $P_4$-{\sc Contractibility} is \NP-complete. That is, from a hypergraph~${\cal H}$ they built a graph $G_{\cal H}$, 
such that ${\cal H}$ has a $2$-colouring if and only if $G_{\cal H}$ has $P_4$ as a contraction.
We first recall the graph $G_{\cal H}$ from~\cite{BV87}, which was obtained from a hypergraph~${\cal H}$ with $S_i\ne \emptyset$ for $1\leq i\leq n$ and $S_n=Q$ (see Figure~\ref{f-hypergraph} for an example).

\begin{itemize}
\item
Construct the {\it incidence graph} of $(Q,{\mathcal S})$, which is 
the bipartite graph with partition classes~$Q$ and~${\cal S}$ and an edge between two vertices~$q_i$ and~$S_j$ if and only if $q_i\in S_j$.
\item
Add a set ${\mathcal S}'=\{S_1',\ldots,S_n'\}$ of~$n$ new vertices, where we call $S_j'$ the {\it copy} of $S_j$.
\item
For $i=1,\ldots m$ and $j=1,\ldots, n$, add an edge between~$q_i$ and~$S_j'$ if and only if $q_i\in S_j$.
\item
For $j=1,\ldots, n$ and $\ell=1,\ldots,n$,
add an edge between $S_j$ and~$S_\ell'$, so the subgraph induced by ${\cal S}\cup {\cal S}'$ will be complete bipartite.
\item 
For $h=1,\ldots, m$ and $i=1,\ldots,m$, add an edge between $q_h$ and $q_i$, so $Q$ will be a clique.
\item Add two new vertices $t_1$ and $t_2$.
\item For $j=1,\ldots,n$, add an edge between $t_1$ and $S_j$, and between $t_2$ and $S_j'$.
\end{itemize}

\begin{figure}
  \centering
  \includegraphics[scale=1]{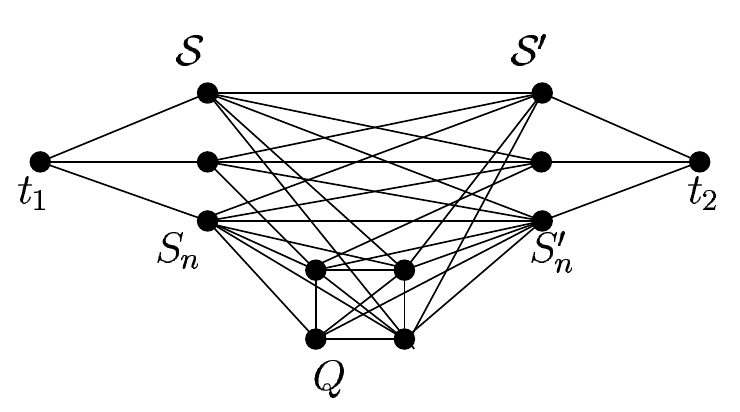}
  \caption{An example of a graph $G_{\cal H}$ for some hypergraph ${\cal H}$~\cite{LPW08}.}\label{f-hypergraph}
\end{figure}

As mentioned, Brouwer and Veldman~\cite{BV87} proved the following.

\begin{lemma}[\cite{BV87}]\label{l-bv87}
A hypergraph ${\cal H}$ has a $2$-colouring if and only if $G_{\cal H}$ has $P_4$ as a contraction.
\end{lemma}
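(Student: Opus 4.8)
The plan is to use the witness-structure description of contraction: $G_{\cal H}$ contracts to $P_4=p_1p_2p_3p_4$ if and only if $V(G_{\cal H})$ splits into four nonempty connected bags $W(p_1),\dots,W(p_4)$ whose quotient is exactly $P_4$, i.e. consecutive bags are adjacent while $W(p_1)$--$W(p_3)$, $W(p_1)$--$W(p_4)$ and $W(p_2)$--$W(p_4)$ are anticomplete. The forward implication is routine. Given a $2$-colouring $(Q_1,Q_2)$ of ${\cal H}$, I would exhibit the witness structure $W(p_1)=\{t_1\}$, $W(p_2)={\cal S}\cup Q_1$, $W(p_3)={\cal S}'\cup Q_2$, $W(p_4)=\{t_2\}$. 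Connectivity of $W(p_2)$ holds because $Q_1$ induces a clique and the property $Q_1\cap S_j\neq\emptyset$ gives every $S_j\in{\cal S}$ a neighbour in $Q_1$; the case of $W(p_3)$ is symmetric via $Q_2\cap S_j\neq\emptyset$. The adjacency pattern is immediate from the construction: $t_1$ is adjacent only to ${\cal S}$ and $t_2$ only to ${\cal S}'$, and ${\cal S}$ is complete to ${\cal S}'$, so the only adjacent bag-pairs are the three consecutive ones.

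The reverse implication is the hard part, and it is essentially a structural analysis forcing an arbitrary witness structure into the shape above. I would first observe that $K:=Q\cup\{S_n,S_n'\}$ is a clique: since $S_n=Q$, both $S_n$ and $S_n'$ are complete to $Q$, and $S_n$ is adjacent to $S_n'$. Hence $K$ lies in two consecutive bags, so in particular $Q\subseteq W(p_i)\cup W(p_{i+1})$ for some $i$. The crux is to show, up to reversing $P_4$, that $i=2$. To rule out $i\in\{1,3\}$ I would argue that an endpoint bag disjoint from $K$ can contain neither a $Q$-vertex nor any $S_j$ or $S_j'$ (such a vertex would need a $Q$-neighbour, or the vertex $S_n'$, respectively $S_n$, sitting in $K$, to lie in an adjacent bag, which it cannot), so that endpoint bag is contained in $\{t_1,t_2\}$; as $t_1\not\sim t_2$ it reduces to a single one of them, which in turn forces all of ${\cal S}$ or all of ${\cal S}'$ into the neighbouring bag, contradicting $S_n\in K$ or $S_n'\in K$.

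With $i=2$ established, the same ``an endpoint bag would otherwise be empty'' type of argument (applied to $W(p_1)$ and $W(p_4)$, using that every $S_j$ is adjacent to $S_n'$ and every $S_j'$ to $S_n$) forces $S_n$ and $S_n'$ into two \emph{distinct} middle bags, say $S_n\in W(p_2)$ and $S_n'\in W(p_3)$, and pushes all of $Q\cup{\cal S}\cup{\cal S}'$ into $W(p_2)\cup W(p_3)$. The nonempty endpoint bags are then forced to be $W(p_1)=\{t_1\}$ and $W(p_4)=\{t_2\}$, whence $N(t_1)={\cal S}\subseteq W(p_2)$ and $N(t_2)={\cal S}'\subseteq W(p_3)$. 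Finally I would read off the colouring by setting $Q_1=Q\cap W(p_2)$ and $Q_2=Q\cap W(p_3)$: since ${\cal S}$ is independent and $t_1,{\cal S}'$ lie outside $W(p_2)$, the only neighbours of an $S_j\in{\cal S}$ inside $W(p_2)$ are its elements in $Q_1$, so connectivity of $W(p_2)$ forces $S_j\cap Q_1\neq\emptyset$; symmetrically $W(p_3)$ yields $S_j\cap Q_2\neq\emptyset$. Both parts are nonempty because each is met by $S_n=Q$, so $(Q_1,Q_2)$ is a $2$-colouring. I expect the main obstacle to be exactly this bookkeeping in the reverse direction, namely exhaustively excluding every wrong placement of $K$, $S_n$, $S_n'$, $t_1$ and $t_2$ using nonemptiness and connectivity of the endpoint bags; each step is elementary, but care is needed to cover all cases.
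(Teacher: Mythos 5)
The paper never proves Lemma~\ref{l-bv87}: it is imported from Brouwer and Veldman~\cite{BV87} as a black box, so there is no in-paper proof to compare yours against. The closest analogue is the paper's own proof of Theorem~\ref{t-girth}, which establishes the same kind of equivalence for the subdivided graph $\bar{G}_{\mathcal{H}}$ and $P_{2p}$. Your argument has the same overall shape: your forward witness structure $\{t_1\},\,{\cal S}\cup Q_1,\,{\cal S}'\cup Q_2,\,\{t_2\}$ is the unsubdivided version of the one exhibited there, and your final step (set $Q_1=Q\cap W(p_2)$, $Q_2=Q\cap W(p_3)$, use connectivity of the middle bags and $S_n=Q$) mirrors how that proof reads off the $2$-colouring. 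Where you genuinely differ is in how the endpoint bags are pinned down: the paper can use a metric argument there ($\bar{t}_1,\bar{t}_2$ are the unique pair at distance $2p-1$, so the target path is forced), which is unavailable in $G_{\cal H}$ because of its small diameter; your replacement --- the clique $K=Q\cup\{S_n,S_n'\}$ must occupy two consecutive bags, and every misplacement of $K$, $S_n$, $S_n'$, $t_1$, $t_2$ forces an endpoint bag to be empty --- is exactly the right tool, and I checked that each exclusion you sketch can indeed be closed this way. One caution: several exclusions (e.g.\ ruling out $S_j\in W(p_4)$ once $S_n'\in W(p_3)$) are not single-edge contradictions, since $W(p_4)$ \emph{is} adjacent to $W(p_3)$; they require the full ``otherwise an endpoint bag is empty'' argument, as you correctly anticipate.

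The one genuine gap is in your last step: ``connectivity of $W(p_2)$ forces $S_j\cap Q_1\neq\emptyset$'' is only valid when $W(p_2)\neq\{S_j\}$, i.e.\ it needs a second vertex of ${\cal S}$ in $W(p_2)$ (then independence of ${\cal S}$ forces $S_j$ to attach to $W(p_2)$ through some element of $S_j\cap Q_1$). This is automatic when $n\geq 2$, but fails for $n=1$: there your read-off can return $Q_1=\emptyset$, and in fact for $n=1$, $|Q|=1$ the equivalence as stated is false --- $\{t_1\},\{S_1\},\{S_1',q\},\{t_2\}$ is a valid $P_4$-witness structure of $G_{\cal H}$, while the hypergraph $(\{q\},\{\{q\}\})$ has no $2$-colouring. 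So your proof is complete exactly for $n\geq 2$; you should add a line restricting to such hypergraphs (harmless, since {\sc Hypergraph 2-Colourability} remains \NP-complete on them), or handle $n=1$ separately.
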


A {\it split graph} is a graph whose vertex set can be partitioned into two (possibly empty) sets
$K$ and $I$, where $K$ is a clique and $I$ is an independent set. 
It is well known that a graph is split if and only if it is $(2P_2,C_4,C_5)$-free~\cite{FH77}.
Let ${\cal H}$ be a hypergraph.
Observe that the subgraphs of $G_{\cal H}$ induced by $Q\cup {\cal S}$ and $Q\cup {\cal S}'$, respectively, are split graphs.
Hence, we make the following observation.

\begin{lemma}\label{l-split}
Let ${\cal H}$ be a hypergraph. Then the subgraphs of $G_{\cal H}$ induced by $Q\cup {\cal S}$ and $Q\cup {\cal S}'$, respectively,
are $(2P_2,C_4,C_5)$-free.
\end{lemma}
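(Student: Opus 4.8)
The plan is to verify directly that each of the two induced subgraphs is a split graph, and then to invoke the characterization of~\cite{FH77}, already recalled above, that the split graphs are precisely the $(2P_2,C_4,C_5)$-free graphs. Since that equivalence is given, the entire content of the argument lies in exhibiting, for each subgraph, a partition of its vertex set into a clique $K$ and an independent set $I$.

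First I would treat $G_{\cal H}[Q\cup{\cal S}]$. By construction an edge was added between every pair $q_h,q_i$, so $Q$ induces a clique, and this is inherited by the induced subgraph. For the independent set, I would observe that the only edges that could join two vertices of ${\cal S}$ would have to come from the complete bipartite graph placed on ${\cal S}\cup{\cal S}'$; but all of those edges run between ${\cal S}$ and ${\cal S}'$, so no edge joins two vertices of ${\cal S}$. Hence ${\cal S}$ is independent in $G_{\cal H}$, and thus also in the induced subgraph. Taking $K=Q$ and $I={\cal S}$ shows that $G_{\cal H}[Q\cup{\cal S}]$ is split. The argument for $G_{\cal H}[Q\cup{\cal S}']$ is symmetric: $Q$ is again a clique and ${\cal S}'$ is independent for the same reason, so $K=Q$ and $I={\cal S}'$ witnesses that this subgraph is split as well. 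Note that the vertices $t_1,t_2$, together with their incident edges, and the vertices on the opposite side (${\cal S}'$, resp.\ ${\cal S}$) lie outside the vertex set of the subgraph in question, and so play no role.

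There is no genuine obstacle here; the only point requiring care is reading off the adjacencies from the construction of $G_{\cal H}$ correctly. In particular, one must confirm that there are no edges inside ${\cal S}$ and none inside ${\cal S}'$, and that the cross-edges to $Q$ (governed by the incidence relation $q_i\in S_j$) are irrelevant to splitness, since edges between the clique $Q$ and the independent set are always permitted in a split partition. Once both subgraphs have been identified as split, the conclusion that they are $(2P_2,C_4,C_5)$-free is immediate from the cited characterization, which completes the proof.
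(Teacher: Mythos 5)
Your proof is correct and follows essentially the same route as the paper: the paper simply observes that $G_{\cal H}[Q\cup{\cal S}]$ and $G_{\cal H}[Q\cup{\cal S}']$ are split graphs (with clique $Q$ and independent set ${\cal S}$, resp.\ ${\cal S}'$) and then invokes the characterization of split graphs as the $(2P_2,C_4,C_5)$-free graphs~\cite{FH77}. Your write-up just makes the verification of the split partition explicit, which the paper leaves implicit.
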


We will need the following known lemma from~\cite{HPW09}.

\begin{lemma}[\cite{HPW09}]\label{l-p6}
Let ${\cal H}$ be a hypergraph. Then the graph $G_{\cal H}$ is $P_6$-free.
\end{lemma}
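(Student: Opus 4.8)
The plan is to argue by contradiction: assume $G_{\mathcal{H}}$ contains an induced path $a_1a_2\cdots a_6$ and rule this out using only the coarse structure of $G_{\mathcal{H}}$. First I would record the adjacency pattern between the four natural parts $Q$, $\mathcal{S}$, $\mathcal{S}'$ and $\{t_1,t_2\}$: the set $Q$ is a clique; $\mathcal{S}$ and $\mathcal{S}'$ are independent sets that are complete to each other; $t_1$ is complete to $\mathcal{S}$ and anticomplete to $Q\cup\mathcal{S}'\cup\{t_2\}$; and symmetrically for $t_2$. From this I extract two elementary bounds valid for any induced path $P$ in $G_{\mathcal{H}}$: since $Q$ is a clique, $P$ meets $Q$ in at most two vertices; and since $\mathcal{S}$ is complete to $\mathcal{S}'$, if $P$ meets both $\mathcal{S}$ and $\mathcal{S}'$ then the vertices of $P$ lying in $\mathcal{S}\cup\mathcal{S}'$ induce a connected subgraph of the complete bipartite graph $G[\mathcal{S}\cup\mathcal{S}']$, hence a single subpath on at most three vertices (a longer induced path inside a complete bipartite graph is impossible).

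These bounds give a clean structural lemma that I would isolate and reuse: if $F$ is any graph whose vertices split into a clique $K$ and two independent sets $L,R$ with $L$ complete to $R$, and with $K\cup L$ and $K\cup R$ both $2P_2$-free, then $F$ is $P_6$-free. Indeed, an induced $P_6$ meets $K$ in at most two vertices; if it meets both $L$ and $R$ it then has at least three vertices in $K$, a contradiction; and otherwise it lives inside $K\cup L$ or $K\cup R$, which, being $2P_2$-free, are $P_5$-free (as $P_5$ contains an induced $2P_2$) and hence $P_6$-free. I would then apply this lemma three times. With $(K,L,R)=(Q,\mathcal{S},\mathcal{S}')$ it handles the case where the putative $P_6$ avoids both $t_1$ and $t_2$, invoking Lemma~\ref{l-split} directly. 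If the $P_6$ uses $t_1$ but not $t_2$, I absorb $t_1$ into the right-hand class, taking $(K,L,R)=(Q,\mathcal{S},\mathcal{S}'\cup\{t_1\})$: here $\mathcal{S}$ is still complete to $\mathcal{S}'\cup\{t_1\}$, and $Q\cup(\mathcal{S}'\cup\{t_1\})$ remains $2P_2$-free because $t_1$ is isolated in it. The case using $t_2$ but not $t_1$ is symmetric.

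The hard part is the remaining case, where the induced $P_6$ uses both $t_1$ and $t_2$, because the missing edge $t_1t_2$ destroys the complete-bipartite structure and the structural lemma no longer applies. Here I would argue directly, using that every vertex of an induced path has at most two neighbours among the path's vertices. Writing $n_Q,n_{\mathcal{S}},n_{\mathcal{S}'}$ for the numbers of path vertices in each class, we have $n_Q+n_{\mathcal{S}}+n_{\mathcal{S}'}=4$ with $n_Q\le 2$, and $n_{\mathcal{S}},n_{\mathcal{S}'}\ge 1$ since the path-neighbours of $t_1$ lie in $\mathcal{S}$ and those of $t_2$ lie in $\mathcal{S}'$. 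If $n_{\mathcal{S}'}=2$, then any $\mathcal{S}$-vertex on the path would be adjacent to $t_1$ and to both $\mathcal{S}'$-vertices, giving three neighbours on the path, which is impossible; so $n_{\mathcal{S}'}=1$, and symmetrically $n_{\mathcal{S}}=1$, forcing $n_Q=2$. Each of $t_1,t_2$ then has a single forced neighbour and must be an endpoint, so the path reads $t_1,s,q,q',s',t_2$ with $s\in\mathcal{S}$ and $s'\in\mathcal{S}'$; but $s$ and $s'$ are adjacent (complete bipartite) while lying at distance three on the path, a chord, a contradiction. This exhausts all cases and establishes that $G_{\mathcal{H}}$ is $P_6$-free.
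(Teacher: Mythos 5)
Your proof is correct. Note, however, that the paper contains no proof of Lemma~\ref{l-p6} to compare against: the lemma is imported verbatim from~\cite{HPW09}. What you have produced is a self-contained argument, and it is very much in the style of the proofs the paper does give for the companion statements, Lemmas~\ref{l-freefree} and~\ref{l-p2p4again}: those likewise rest on Lemma~\ref{l-split} (the $2P_2$-freeness of $G[Q\cup{\cal S}]$ and $G[Q\cup{\cal S}']$), on $Q$ being a clique, on the complete bipartite structure of ${\cal S}\cup{\cal S}'$, and on the very restricted attachments of $t_1$ and $t_2$. Your additions relative to that template are (i) the reusable intermediate lemma --- a clique $K$ plus two cross-complete independent sets $L,R$ with $K\cup L$ and $K\cup R$ both $2P_2$-free forces $P_6$-freeness --- which cleanly absorbs $t_1$ (or $t_2$) into $R$ when only one of them lies on the path, and (ii) the degree/chord endgame when both $t_1$ and $t_2$ occur, which is exactly where the complete bipartite structure breaks down and a direct count is needed; both are sound. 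One cosmetic slip: after establishing $n_{\cal S}\ge 1$ and $n_{{\cal S}'}\ge 1$ you rule out only $n_{{\cal S}'}=2$, which formally leaves $n_{{\cal S}'}=3$ open; but your argument applies unchanged to $n_{{\cal S}'}\ge 2$ (the ${\cal S}$-vertex on the path is adjacent to $t_1$ and to at least two ${\cal S}'$-vertices of the path, giving it three path-neighbours), so nothing is lost.
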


We complement Lemma~\ref{l-p6} with the following lemma.

\begin{lemma}\label{l-freefree}
Let ${\cal H}$ be a hypergraph. Then the graph $G_{\cal H}$ is $(2P_1+2P_2,3P_2, 2P_3)$-free.
\end{lemma}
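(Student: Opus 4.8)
The plan is to exploit the rigid structure of $G_{\cal H}$: the set $Q$ is a clique, the sets ${\cal S}$ and ${\cal S}'$ are independent and induce a complete bipartite graph between them, the vertex $t_1$ is complete to ${\cal S}$ and has no other neighbour, and $t_2$ is complete to ${\cal S}'$ and has no other neighbour. Moreover, by Lemma~\ref{l-split} both $G_{\cal H}[Q\cup {\cal S}]$ and $G_{\cal H}[Q\cup {\cal S}']$ are $2P_2$-free. Since each of $2P_1+2P_2$, $3P_2$ and $2P_3$ has exactly six vertices, I would assume for contradiction that $G_{\cal H}$ contains an induced copy $F$ of one of them, and run the same case distinction in all three cases, according to whether $V(F)$ meets ${\cal S}$ and ${\cal S}'$.

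The key preliminary observation I would record first is a \emph{concentration} property: since ${\cal S}\cup {\cal S}'$ induces a complete bipartite graph, if $F$ meets both ${\cal S}$ and ${\cal S}'$ then $F\cap({\cal S}\cup {\cal S}')$ is connected and hence lies inside a single connected component $C$ of $F$. In this ``both sides'' situation every other component of $F$ avoids ${\cal S}\cup {\cal S}'$ and therefore lies in $Q\cup\{t_1,t_2\}$; since $t_1$ has no neighbour in $Q$ while its only $F$-neighbours lie in ${\cal S}\subseteq C$ (and symmetrically for $t_2$), neither $t_1$ nor $t_2$ can belong to such a leftover component, so every leftover component is contained in the clique $Q$. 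This is already fatal: for $3P_2$ and $2P_1+2P_2$ the leftover components would be an edge together with a further edge or isolated vertex, all inside a clique, which is impossible; for $2P_3$ the leftover $P_3$ would lie inside a clique, again impossible.

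The remaining case is that $F$ misses one side, say ${\cal S}'\cap V(F)=\emptyset$ (the case ${\cal S}\cap V(F)=\emptyset$ being symmetric). Then $t_2$ is isolated in $F$ and $F\subseteq Q\cup {\cal S}\cup\{t_1,t_2\}$. Since each of the three patterns contains an induced $2P_2$ while $Q\cup {\cal S}$ is $2P_2$-free, $F$ cannot be contained in $Q\cup {\cal S}$; as $t_2$ is isolated, this forces $t_1\in V(F)$ to carry an edge. For $3P_2$ this finishes things: at least two of the three edges avoid $t_1$ and lie inside $Q\cup {\cal S}$, yielding a forbidden $2P_2$. For $2P_3$, I would use that $t_1$ is complete to ${\cal S}$, so the $P_3$-component of $F$ not containing $t_1$ has no ${\cal S}$-vertex and hence lies in the clique $Q$, which cannot induce a $P_3$. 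For $2P_1+2P_2$ the two edges cannot both lie in $Q\cup {\cal S}$, so $t_1$ lies in one edge and, by non-adjacency of the two edges, the other edge lies in $Q$; each required isolated vertex must then avoid $Q$ (it is a clique containing that edge) and ${\cal S}$ (adjacent to $t_1$), leaving only the single vertex $t_2$ for two isolated vertices, a contradiction.

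I expect the $2P_1+2P_2$ pattern to be the main obstacle, precisely because it is the only one with isolated vertices: in the other two patterns the minimum degree is at least one, which lets me discard $t_2$ (and, in the both-sides case, $t_1$) immediately, whereas here $t_2$ may legitimately play the role of an isolated vertex and must be tracked throughout. The crux is the final counting step showing that the clique $Q$ together with the single surviving vertex $t_2$ cannot supply the two mutually non-adjacent isolated vertices that $2P_1+2P_2$ demands.
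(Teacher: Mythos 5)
Your proof is correct in substance and uses the same structural ingredients as the paper (the clique $Q$, the $2P_2$-freeness of $G_{\cal H}[Q\cup{\cal S}]$ and $G_{\cal H}[Q\cup{\cal S}']$ from Lemma~\ref{l-split}, and the fact that $t_1,t_2$ attach only to ${\cal S}$, respectively ${\cal S}'$), but your case decomposition is genuinely different. The paper treats the three patterns $2P_1+2P_2$, $3P_2$, $2P_3$ one at a time, branching in each case on where the isolated vertices or the components sit (in $Q$, in ${\cal S}\cup{\cal S}'$, or in $\{t_1,t_2\}$), and in two of those branches it appeals to the auxiliary fact that $G_{\cal H}[{\cal S}\cup{\cal S}'\cup\{t_1,t_2\}]$ contains no induced $2P_2$ or $2P_3$ (the paper calls this subgraph complete bipartite, which is slightly inaccurate since $t_1t_2$ is a non-edge, though the stated freeness properties do hold). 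Your split into ``$F$ meets both ${\cal S}$ and ${\cal S}'$'' versus ``$F$ misses one side'' treats all three patterns uniformly, and your concentration observation replaces any appeal to the structure of ${\cal S}\cup{\cal S}'\cup\{t_1,t_2\}$; what this buys is a shorter, more symmetric argument in which the only forbidden-subgraph input is Lemma~\ref{l-split}.

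One step needs patching. In the both-sides case you claim that neither $t_1$ nor $t_2$ can belong to a leftover component, justified by saying that their only $F$-neighbours lie in $C$. For $3P_2$ and $2P_3$ this suffices, since a component of those patterns cannot be a singleton; but for $2P_1+2P_2$ the argument only shows that such a leftover component would have to be the singleton $\{t_1\}$ or $\{t_2\}$, which is a priori a legitimate $P_1$ component, so your conclusion that every leftover component lies in $Q$ does not yet follow. The fix is one line and uses a fact you already exploit in the one-sided case: in the both-sides case $F$ contains a vertex of ${\cal S}$, and $t_1$ is complete to ${\cal S}$, so by inducedness $t_1$ cannot be isolated in $F$ (symmetrically for $t_2$ via ${\cal S}'$). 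With that observation the two isolated vertices of a $2P_1+2P_2$ must indeed lie in $Q$, where they are adjacent to the leftover edge, and your contradiction goes through.
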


\begin{proof}
We will prove that $G=G_{\cal H}$ is $(2P_1+2P_2,3P_2, 2P_3)$-free by considering each graph in $\{2P_1+2P_2,3P_2, 2P_3\}$ separately.

\medskip
\noindent
{\bf $\mathbf{(2P_1+2P_2)}$-freeness.}
For contradiction, assume that $G$ contains a subgraph $H$ isomorphic to $2P_1+2P_2$. Let $D_1$ and $D_2$ be the two connected components of $H$ that contain an edge. Let $x$ and $y$ denote the two isolated vertices of $H$.
First suppose that one of $x,y$, say $x$, belongs to ${\cal S}\cup {\cal S}'$, say to ${\cal S}$. 
Then $D_1$ and $D_2$ do not contain~$t_1$ and also do not contain any vertex from ${\cal S}'$. The latter implies that $D_1$ and $D_2$ cannot contain vertex~$t_2$ either. Hence, $D_1$ and $D_2$ only contain vertices from ${\cal S}\cup Q$, contradicting
Lemma~\ref{l-split}. Hence, $x$ and $y$ must both belong to $Q\cup \{t_1,t_2\}$. Suppose one of them, say $x$, belongs to $Q$.
Then $D_1$ and $D_2$ do not contain any vertices from $Q$ and thus only contain vertices from ${\cal S}\cup {\cal S}'\cup \{t_1,t_2\}$.
However, $G[{\cal S}\cup {\cal S}'\cup \{t_1,t_2\}]$ is complete bipartite, and thus $2P_2$-free, a contradiction.
We thus found that $\{x,y\}=\{t_1,t_2\}$. Then $D_1$ and $D_2$ may not contain any vertices from
${\cal S}\cup {\cal S}'$. Consequently, $D_1$ and $D_2$ only contain vertices from $Q$. This is not possible, as $Q$ is a clique.
We conclude that $G$ is $(2P_1+2P_2)$-free.

\medskip
\noindent
{\bf $\mathbf{3P_2}$-freeness.}
For contradiction, assume that $G$ contains a subgraph $H$ isomorphic to $3P_2$. Let $D_1,D_2,D_3$ be the three connected components of $H$. Suppose one of $D_1,D_2,D_3$, say $D_1$, contains a vertex from ${\cal S}\cup {\cal S}'$, say $D_1$ contains a
vertex from ${\cal S}$.
Then $D_2$ and $D_3$ do not contain $t_1$ and also do not contain any vertex from ${\cal S}'$. The latter implies that $D_2$ and $D_3$ cannot contain vertex~$t_2$ either. Hence, $D_2$ and $D_3$ only contain vertices from ${\cal S}\cup Q$, contradicting
Lemma~\ref{l-split}. This means that $H$ contains no vertex from ${\cal S}\cup {\cal S}'$. Consequently, $H$ does not contain $t_1$ and $t_2$ either. However, then $H$ consists of vertices from $Q$ only. This is not possible, as $Q$ is a clique. We conclude that $G$ is $3P_2$-free.

\medskip
\noindent
{\bf $\mathbf{2P_3}$-freeness.}
For contradiction, assume that $G$ contains a subgraph $H$ isomorphic to $2P_3$. Let $D_1$ and $D_2$ be the two connected components of $H$. Suppose one of $D_1,D_2$, say $D_1$, contains a vertex from $Q$. As $Q$ is a clique, this means that
$D_1$ must contain at least one vertex of ${\cal S}\cup {\cal S}'$, say $D_1$ contains a vertex of ${\cal S}$. 
Then $D_2$ cannot contain any vertex from $Q\cup \{t_1\}$ or from ${\cal S}'$. The latter implies that $D_2$ does not contain $t_2$ either. Hence, $D_2$ only contains vertices from ${\cal S}$. This is not possible, as ${\cal S}$ is an independent set.
We conclude that neither $D_1$ nor $D_2$ contains a vertex from $Q$. Hence, $H$ only contains vertices from 
${\cal S}\cup {\cal S}'\cup \{t_1,t_2\}$.
However, $G[{\cal S}\cup {\cal S}'\cup \{t_1, t_2\}]$ is complete bipartite, and thus $2P_3$-free, a contradiction. We conclude that $G$ is $2P_3$-free. \qed
\end{proof}

It is readily seen that $P_4$-{\sc Contractibility} belongs to \NP. Hence, we obtain the following result from 
Lemmas~\ref{l-bv87},~\ref{l-p6}, and~\ref{l-freefree}.

\begin{theorem}\label{t-hard}
{\sc $P_4$-Contractibility} is \NP-complete for $(2P_1+2P_2,3P_2, 2P_3,P_6)$-free graphs.
\end{theorem}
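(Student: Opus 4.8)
The plan is to exhibit $P_4$-{\sc Contractibility} restricted to this class as both a member of \NP\ and \NP-hard, with all the structural work already packaged into the preceding lemmas. Membership in \NP\ is immediate: a $P_4$-witness structure of a graph $G$ is a partition of $V(G)$ into four connected bags satisfying the adjacency conditions of Section~\ref{l-et}, and such a partition can be checked in polynomial time, so it serves as a polynomial certificate.

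For hardness I would reduce from {\sc Hypergraph 2-Colourability}, which is \NP-complete even when restricted to hypergraphs ${\cal H}=(Q,{\cal S})$ with every $S_i\neq\emptyset$ and $S_n=Q$. Given such an ${\cal H}$, I construct in polynomial time the graph $G_{\cal H}$ exactly as defined just before Lemma~\ref{l-bv87}. By Lemma~\ref{l-bv87}, ${\cal H}$ admits a $2$-colouring if and only if $G_{\cal H}$ contracts to $P_4$, so the map ${\cal H}\mapsto G_{\cal H}$ is a valid many-one reduction.

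It then remains only to verify that every graph produced lies in the target class, and this is precisely the content of the two freeness lemmas: Lemma~\ref{l-p6} gives $P_6$-freeness of $G_{\cal H}$, while Lemma~\ref{l-freefree} gives $(2P_1+2P_2)$-, $3P_2$- and $2P_3$-freeness simultaneously. Hence each $G_{\cal H}$ is $(2P_1+2P_2,3P_2,2P_3,P_6)$-free, and the reduction stays inside the class. Combining membership in \NP\ with this reduction yields \NP-completeness.

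The conceptual difficulty is entirely front-loaded into Lemma~\ref{l-freefree}, whose proof rests on the split-graph structure of $G_{\cal H}[Q\cup{\cal S}]$ and $G_{\cal H}[Q\cup{\cal S}']$ (Lemma~\ref{l-split}) together with the complete bipartite structure of $G_{\cal H}[{\cal S}\cup{\cal S}'\cup\{t_1,t_2\}]$; granting those lemmas, assembling Theorem~\ref{t-hard} is routine. Were one to prove the freeness directly rather than cite Lemma~\ref{l-freefree}, the main obstacle would be the forbidden-subgraph case analysis: for each of $2P_1+2P_2$, $3P_2$, $2P_3$ one must rule out every placement of its connected pieces among $Q$, ${\cal S}$, ${\cal S}'$ and $\{t_1,t_2\}$, repeatedly invoking that $Q$ is a clique, that ${\cal S}$ and ${\cal S}'$ are independent, and that the ${\cal S}$--${\cal S}'$ bipartite graph augmented with $t_1,t_2$ is $2P_2$-free.
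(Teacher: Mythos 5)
Your proposal is correct and follows exactly the paper's own argument: the paper likewise observes \NP-membership and then obtains the theorem directly by combining Lemma~\ref{l-bv87} (the Brouwer--Veldman reduction from {\sc Hypergraph 2-Colourability}) with the freeness guarantees of Lemmas~\ref{l-p6} and~\ref{l-freefree}. Your closing remarks on where the real work lies (the case analysis behind Lemma~\ref{l-freefree}, via Lemma~\ref{l-split} and the complete bipartite structure of $G_{\cal H}[{\cal S}\cup{\cal S}'\cup\{t_1,t_2\}]$) also match the paper's treatment.
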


By modifying the graph~$G_{\cal H}$ we prove the next theorem.

\begin{theorem}\label{t-girth}
Let $p\geq 4$ be some constant. Then {\sc $P_{2p}$-Contractibility} is \NP-complete for bipartite graphs of girth at least~$p$.
\end{theorem}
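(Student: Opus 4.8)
The plan is to reduce from {\sc Hypergraph 2-Colourability} by modifying the Brouwer--Veldman graph $G_{\cal H}$ underlying Lemma~\ref{l-bv87}. First I would replace every edge of $G_{\cal H}$ by a path with $s$ internal (subdivision) vertices, where $s$ is the smallest odd integer with $3(s+1)\geq p$; write $G'$ for the result. Since every edge is subdivided the same \emph{odd} number of times, each cycle length of $G_{\cal H}$ gets multiplied by the even factor $s+1$, so $G'$ is bipartite; and as the shortest cycle of $G_{\cal H}$ is the triangle inside the clique $Q$, the girth of $G'$ equals $3(s+1)\geq p$. To fix the target length, I would then attach to $t_1$ and to $t_2$ two pendant paths on $\ell_1$ and $\ell_2$ new vertices with $\ell_1+\ell_2=2p-4$ (say $\ell_1=\ell_2=p-2\geq 2$), and call the resulting graph $G^{*}$; pendant paths create no cycles and preserve bipartiteness, so $G^{*}$ is bipartite of girth at least $p$. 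Membership in \NP{} is clear, as a $P_{2p}$-witness structure is a polynomial-size certificate.

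\textbf{Forward direction.} Given a $2$-colouring $Q=Q_1\cup Q_2$ (with $Q_1,Q_2\neq\emptyset$, which holds since $S_n=Q$ meets both classes), I would build a $P_{2p}$-witness structure as follows. Each of the $\ell_1+\ell_2$ pendant vertices and each of $t_1,t_2$ forms a singleton bag, producing the two end-segments together with the bags $\{t_1\}$ and $\{t_2\}$. The core is split into two bags $W_2\supseteq {\cal S}\cup Q_1$ and $W_3\supseteq {\cal S}'\cup Q_2$, and it remains to distribute the subdivision vertices. I would place every subdivision vertex lying on an edge $t_1S_j$, on an edge $S_jq_i$ or $S_jS_\ell'$, or on a clique edge with both ends in $Q_1$, into $W_2$, and symmetrically into $W_3$, with cross-class clique subdivision vertices placed arbitrarily. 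Then $Q_1$ with its clique-subdivision vertices is a subdivided clique, hence connected; each $S_j$ reaches $Q_1$ through the subdivision vertex on some edge $S_jq_i$ with $q_i\in S_j\cap Q_1\neq\emptyset$, and likewise $S_j'$ reaches $Q_2$ using $S_j\cap Q_2\neq\emptyset$. Thus $W_2$ and $W_3$ are each connected, they are adjacent (via a cross-class clique subdivision vertex), while $t_1$ and $t_2$ attach only to $W_2$ and to $W_3$, respectively. Counting bags gives $\ell_1+1+2+1+\ell_2=2p$, so $G^{*}$ contracts to $P_{2p}$.

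\textbf{Converse and main obstacle.} For the reverse implication I would take an arbitrary $P_{2p}$-witness structure of $G^{*}$ and argue that it must have the canonical shape above, so that $Q_1:=Q\cap W_2$ and $Q_2:=Q\cap W_3$ is a valid $2$-colouring: connectivity of the bag containing $S_j$ (respectively its copy $S_j'$) forces $S_j$ to meet both $Q_1$ and $Q_2$, exactly as in the proof of Lemma~\ref{l-bv87}. The delicate point, and what I expect to be the main obstacle, is establishing this rigidity. I would use that the two far ends of the pendant paths have degree $1$, together with the large girth, to force the pendant paths to occupy the $2(p-2)$ extreme bags of $P_{2p}$ and to force $t_1$ and $t_2$ into the two bags separating them from the core, leaving exactly two bags for the entire (subdivided, long-cycled) core. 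The real difficulty is to exclude ``cheating'' contractions which, when ${\cal H}$ is \emph{not} $2$-colourable, divert part of the pendant-path bag budget into the subdivided clique to repair the connectivity that a genuine colouring would supply; here the combination of girth at least $p$ with the fact that $t_1,t_2$ touch the core only through the short subdivided $t_iS_j$ edges should pin the bag boundaries down. Once the canonical shape is forced, reading off the colouring and invoking the argument of Lemma~\ref{l-bv87} completes the reduction.
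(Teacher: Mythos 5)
Your reduction graph, the bipartiteness/girth bookkeeping, and the forward direction are essentially sound (one local bug: placing the cross-class clique subdivision vertices ``arbitrarily'' into $W_2$ or $W_3$ can disconnect those bags --- a vertex in the middle of a subdivided $q_iq_j$ path whose two path-neighbours land in the other bag becomes isolated; you must split each such path contiguously, and the same care is needed for the ${\cal S}$--${\cal S}'$ paths, which your text assigns to both $W_2$ ``and symmetrically'' $W_3$). The genuine gap is the one you flag yourself: the converse implication is never proved, only hoped for (``should pin the bag boundaries down''), and that implication is the entire mathematical content of the theorem. Moreover, your design choices make this missing step genuinely harder than in the paper's proof. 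The paper does \emph{not} subdivide uniformly: it leaves the edges $t_1S_j$, $t_2S'_j$ and $S_nS'_n$ unsubdivided and gives the remaining edges carefully chosen parities, precisely so that the two pendant ends $\bar t_1,\bar t_2$ become the \emph{unique} pair of vertices at maximum distance, and that distance is \emph{exactly} $2p-1$, realized by a \emph{unique} shortest path $(P_1,S_n,S'_n,P_2)$. Rigidity is then immediate: the end bags of any $P_{2p}$-witness structure must contain $\bar t_1$ and $\bar t_2$, the unique shortest path meets every bag in exactly one vertex, so every bag is labelled by a vertex of that path, and a short connectivity claim reads off the $2$-colouring.

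In your graph this mechanism is unavailable: with every edge subdivided $s$ times one has $\dist(\bar t_1,\bar t_2)=2(p-2)+3(s+1)\geq 3p-4>2p-1$, so nothing forces the pendant paths to stretch into $p-2$ singleton bags, nothing forces $t_1,t_2$ into the bags adjacent to the end segments, and the $2p$-bag budget can in principle be spent inside the long subdivided core instead (e.g.\ a pendant fully absorbed into $t_i$'s bag while the core spans many bags). Girth alone does not exclude this; indeed subdivision generally \emph{increases} path-contractibility (a triangle contracts only to $P_2$, its $1$-subdivision $C_6$ contracts to $P_4$), which is exactly the danger. Ruling out such cheating structures seems to need a different argument altogether --- for instance: the core is $2$-connected, so no intermediate bag can consist of a piece of a single subdivided edge; hence no subdivided edge may ``cross'' an intermediate bag (its crossing piece would be an isolated component of that bag); since the ${\cal S}$--${\cal S}'$ edges form a complete bipartite pattern, this collapses every admissible structure to one in which the core occupies exactly two bags with $t_1,t_2$ adjacent to the ends, after which the $2$-colouring argument applies. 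None of this is in your write-up, so as it stands the proposal establishes only the easy direction of the reduction.
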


\begin{proof}
We assume without loss of generality that $p$ is even. We reduce again from {\sc Hypergraph $2$-Colouring}, using a suitable subdivision of the graph $G_{\mathcal{H}}$ in order to satisfy the bipartiteness and girth constraints. Let $\mathcal{H}$ be a hypergraph. We first construct  $G_{\mathcal{H}}$. We then subdivide edges in $G_{\mathcal{H}}$ as follows. The edge $S_nS'_n$ is \emph{not}
subdivided. All other edges $S_iS'_j$ are subdivided by an \emph{even} number of vertices, namely by $p-2$, each. 
All edges joining $Q$ to  $\mathcal{S'}$ are also subdivided $p-2$ times. So each of these edges becomes a path of
odd length ($p-1$). Edges joining $Q$ to itself or to $\mathcal{S}$ are subdivided by an \emph{odd} number of vertices,
namely $p-1$, each. So each of these edges becomes a path of even length ($p$).
In addition, we attach paths of length $p-2$, one to each of $t_1$ and $t_2$. Denote these paths by $P_i$ with 
end-vertices
$t_i$ and, say, $\bar{t}_i, i=1,2$. Call the resulting graph $\bar{G}_{\mathcal{H}}$. In what follows we will denote the 
paths of length $p-1$ or $p$ obtained by subdividing an edge $xy$ in $G_{\mathcal{H}}$ by $\overline{xy}$.

The distance between $\bar{t}_1$ and $\bar{t}_2$ in $\bar{G}_{\mathcal{H}}$ equals $2(p-2)+3=2p-1$. The unique shortest path 
is given by $P=(P_1,S_n,S'_n,P_2)$. No other pair of vertices in $\bar{G}_{\mathcal{H}}$ is this far apart. (For example, 
the distance between $S_i \in \mathcal{S}$ and $q \in Q\backslash S_i$ equals $p$ or $2+p$, the length of the path via $t_1$ and
$S_n$.) 

It is straightforward to check that $\bar{G}_{\mathcal{H}}$ is bipartite: Any path joining $\mathcal{S}$ to $\mathcal{S'}$
has odd length. Hence, there are no odd cycles that hit both $\mathcal{S}$ and $\mathcal{S'}$. Similarly, all paths joining 
$Q$ and $\mathcal{S}'$ have odd length. So there cannot be any odd cycle in the subgraph induced by $Q\cup \mathcal{S}'$.
The same argument applies to cycles in the subgraph induced by $Q \cup \mathcal{S}$. Here, again, all paths between $Q$ and 
$\mathcal{S}$ have the same parity (this time even). This shows that $\bar{G}_{\mathcal{H}}$ is indeed bipartite.
The graph $\bar{G}_{\mathcal{H}}$ also has girth at least~$p$. (Recall that any subdivided edge became a path of length at least~$p-1$.) It remains to prove that ${\cal H}$ has a 2-colouring if and only if $\bar{G}_{\mathcal{H}}$ contains $P_{2p}$ as a contraction.

First suppose that ${\cal H}$ has a 2-colouring $(Q_1,Q_2)$.
Then define a contraction of  $\bar{G}_{\mathcal{H}}$
to $P$ with corresponding witness structure $\{W(x), x \in P\}$ as follows. For each $(i,j) \neq (n,n)$ pick any subdivision
vertex $v_{ij} \in \overline{S_iS'_j}$ and let $P_{ij}$ denote the (vertices of the) subpath of $\overline{S_iS_j'}$ from $S_i$ 
to $v_{ij}$.
Similarly,
let $P'_{ij}$ denote the (vertices of) $\overline{S_iS_j'}\backslash P_{ij}$, the "other half" of the path from $S_i$ to $S_j'$.
Now the witnesses can be defined as follows:
\begin{align}\nonumber
 W(t)&= \{t\} ~\text{for} ~t \in P_1 \\ \nonumber
 W(S_n)& = \bigcup_i \{\overline{S_iq}~|~q \in S_i\cap Q_1\} \cup \bigcup_{q,q'\in Q_1}\overline{qq'} \cup 
 \bigcup_{(i,j)\neq(n,n)} P_{ij}\\ \nonumber
 W(S'_n)& = \bigcup_i \{\overline{S'_iq}~|~q \in S'_i\cap Q_2\} \cup \bigcup_{q,q'\in Q_2}\overline{qq'} \cup 
 \bigcup_{(i,j)\neq(n,n)} P'_{ij}\\ \nonumber
 W(t)&= \{t\} ~\text{for} ~t \in P_2 \\ \nonumber
\end{align}
To check correctness, we verify the three conditions for witnesses (observing that disjointness of the bags $W(x)$ is obvious).
\begin{itemize}
 \item $W(S_n)$ is connected: Indeed, all of  $P_{ij}$ is connected to $S_i$ and this (as we assume $Q=Q_1\cup Q_2$ 
 is a $2$-colouring of $\mathcal{H}$) contains some $q \in Q_1$, so $\overline{S_iq}$ joins $S_i$ to $q$. The latter, in turn,
 is joined to $S_n$. The same arguments apply to $W(S'_n)$.
 \item Any two consecutive bags $W(x)$ and $W(y)$ (that is, when $x$ and $y$ are neighbours in $P$) are adjacent:
 Indeed, $t_1$ is adjacent to $S_n$, $S_n$ is adjacent to $S'_n$, and $S'_n$ is adjacent to $t_2$.
 \item If $x$ and $y$ are non-adjacent in $P$, then $W(x)$ and $W(y)$ are non-adjacent in $\bar{G}_{\mathcal{H}}$: Indeed, 
 $t_1$ is only adjacent to $W(S_n)$ and this in turn is only adjacent to $W(S'_n)$  and $t_1$.
\end{itemize}
Thus, indeed, $\bar{G}_{\mathcal{H}}$ contains $P_{2p}$ as a contraction.

Now suppose that $\bar{G}_{\mathcal{H}}$ contains $P_{2p}$ as a contraction. Since $\bar{t}_1$ and $\bar{t}_2$ are the only vertices
at distance $2p-1$ in $\bar{G}_{\mathcal{H}}$, the only possibility is that $\bar{G}_{\mathcal{H}}$ contracts to 
$P=(P_1,S_n, S'_n, P_2)$. Let $\{W(x), x\in P\}$ be a corresponding witness structure. 

\medskip
\noindent
\emph{Claim 1:}\\
(i) $S_i \in W(t_1)\cup W(S_n)$ and $S'_i \in W(t_2)\cup W(S'_n)$ for $i=1, \dots, n$.\\
(ii) $q \in W(t_1) \cup W(S_n)\cup W(S'_n) \cup W(t_2)$ for all $q \in Q$.\\
\emph{Proof of Claim 1.} In order to have all $W(x), x \in P$ connected, the subdivision vertices on $\overline{S_iS'_j}$ must belong
to the same bags $W(x)$ as either $S_i$ or $S'_j$. The vertices $S_i$ and $S'_j$, however, must be in different (adjacent)
bags: Indeed, $S_i, S'_j\in W(x)$ would imply that both $t_1$ and $t_2$ were adjacent to (or contained in) 
$W(x)$, contradicting the third
condition for witness structures. The same argument shows that $S_i$ must either be in $W(t_1)$ or an adjacent bag, 
that is, in $W(S_n)$ or in $W(t)$, where $t$ is the unique neighbour of $t_1$ in $P_1$. The latter, however, is 
impossible: If $S_i \in W(t)$, then $S_i$ must  be connected to $t$ within $W(t)$. But the only path joining $S_i$ to $t$
in $\bar{G}_{\mathcal{H}}$ runs through $t_1$, which does not belong to $W(t)$. Thus, indeed, (i) follows.

Part (ii) can be proved in the same way: If $q\in W(t)$ with $t \in P_1\backslash \{t_1\}$, then $q$ should be connected to 
$t$ within $W(t)$. But, again, the only path connecting $q$ and $t$ runs through $t_1$, a contradiction.\dia

\medskip
\noindent
We claim that the partition $Q=Q_1 \cup Q_2$ given by
$Q_1= Q \cap (W(t_1)\cup W(S_n))$ and $Q_2:= Q \cap (W(t_2) \cup W(S'_n))$ is a $2$-colouring of ${\cal H}$. That is, we will show that each $S_i \in \mathcal{S}$ contains some $q \in Q_1$ and, similarly, each $S'_i \in \mathcal{S'}$ contains some
$q \in Q_2$.
 
Let $S_i \in \mathcal{S}$. From Claim~1 it follows that $S_i \in W(t_1)\cup W(S_n)$. 
For each $q \in S_i$ we follow the path $\overline{S_iq}$ from $S_i$ to $q$ in $\bar{G}_{\mathcal{H}}$. Let $v$ be the last vertex
on this path that belongs to $W(t_1)\cup W(S_n)$. If $v=q$, then $q \in Q \cap (W(t_1)\cup W(S_n)) =Q_1$ and we are done. 
Hence, assume $v\neq q$. Then, in particular, $q \notin W(t_1)\cup W(S_n)$. From Claim 1 we know that 
$q \in W(t_2) \cup W(S'_n)$. The path  $\overline{S_iq}$ starts in $S_i \in W(t_1) \cup W(S_n)$ and ends in 
$q \in W(t_2)\cup W(S'_n)$. Since only $W(S_n)$ and $W(S'_n)$ are adjacent, this path must eventually pass from $W(S_n)$ to
$W(S'_n)$ for the last time. Hence, $v\in W(S_n)$ and, therefore, must be connected to $S_n$ within $W(S_n)$. As $v$ is
a subdivision vertex on $\overline{S_iq}$, this connection can only be via $S_i$ or $q$. But $q$ is not in $W(S_n)$, so the 
connection must be via $S_i$ and we conclude that $S_i \in W(S_n)$. Hence, $S_i$ must be connected to $S_n$ within $W(S_n)$. 
The only paths in $\bar{G}_{\mathcal{H}}$ connecting $S_i$ to $S_n$ run through either $t_1$ (which does \emph{not} belong
to $W(S_n)$) or some  $S'_j$ (which also does not belong to $W(S_n)$) or - the last possibility - some $\tilde{q} \in S_i$.
Hence, indeed, at least one such $\tilde{q} \in S_i$ must belong to $W(S_n)$. But then $\tilde{q} \in Q_1$ (by definition of $Q_1$), as required. \qed
\end{proof}

As a consequence of Theorem~\ref{t-girth}, {\sc Longest Path Contractibility} is \NP-complete for bipartite graphs of arbitrarily large girth. This strengthens the corresponding result for bipartite graphs, which following from a result of~\cite{HHLP14}.
For our dichotomy result we need the following consequence of Theorem~\ref{t-girth}.

\begin{corollary}\label{c-girth}
Let $H$ be a graph that has a cycle. Then {\sc Longest Path Contractibility} is \NP-complete for $H$-free graphs.
\end{corollary}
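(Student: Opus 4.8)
The plan is to exhibit, for a suitable constant $p$, the class of bipartite graphs of girth at least $p$ as a subclass of the class of $H$-free graphs, and then to invoke Theorem~\ref{t-girth}. First I would let $g$ denote the girth of $H$; this is well defined and satisfies $g\geq 3$, since $H$ has a cycle, and so by definition $H$ contains an induced cycle on $g$ vertices. The key observation is that every graph $G$ of girth at least $g+1$ is $H$-free: if $G$ contained $H$ as an induced subgraph, then $G$ would also contain the induced $g$-cycle of $H$ as an induced subgraph, forcing the girth of $G$ to be at most $g$, a contradiction. Hence, setting $p=\max\{4,g+1\}$, every bipartite graph of girth at least $p$ is $H$-free.

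Next I would apply Theorem~\ref{t-girth} with this value of $p$, which is admissible since $p\geq 4$: the problem {\sc $P_{2p}$-Contractibility} is \NP-complete for bipartite graphs of girth at least $p$. As every such graph is $H$-free, {\sc $P_{2p}$-Contractibility} is \NP-complete already when restricted to $H$-free graphs. Finally, the instances of the fixed-pattern problem {\sc $P_{2p}$-Contractibility} are precisely the instances $(G,2p)$ of {\sc Longest Path Contractibility} (with $k=2p$), so the latter inherits \NP-hardness on $H$-free graphs; combined with the fact that a $P_k$-witness structure can be guessed and verified in polynomial time, so that {\sc Longest Path Contractibility} lies in \NP, this yields \NP-completeness for $H$-free graphs.

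There is essentially no hard step here. The only point requiring care is the choice of $p$ relative to the girth of $H$ together with the direction of the girth inequality: we need the girth to be strictly larger than $g$, i.e.\ at least $g+1$, in order to destroy every induced $g$-cycle and thereby guarantee $H$-freeness. The passage from the fixed-pattern problem {\sc $P_{2p}$-Contractibility} to the general {\sc Longest Path Contractibility} problem is immediate.
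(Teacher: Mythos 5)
Your proposal is correct and follows essentially the same route as the paper: set $p$ just above the girth $g$ of $H$ (the paper takes $p=g+1$, which automatically satisfies $p\geq 4$ since $g\geq 3$, making your $\max\{4,g+1\}$ redundant but harmless), observe that graphs of girth at least $p$ are $H$-free, and invoke Theorem~\ref{t-girth}. The extra details you supply (the fixed-$k$ problem reducing to {\sc Longest Path Contractibility} and \NP\ membership) are left implicit in the paper but are correct.
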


\begin{proof}
Let $g$ be the girth of $H$. We set $p=g+1$ and note that the class of $H$-free graphs contains the class of graphs of girth at least~$p$. Hence, we can apply Theorem~\ref{t-girth}.\qed
\end{proof}

\section{The Proof of Theorem~\ref{t-main}}\label{s-classification}

We will use the following result from~\cite{FKP13} as a lemma (in fact this result holds even for line graphs which form a subclass of the class of $K_{1,3}$-free graphs).

\begin{lemma}[\cite{FKP13}]\label{l-claw}
The $P_7$-{\sc Contractibility} problem is \NP-complete for $K_{1,3}$-free graphs.
\end{lemma}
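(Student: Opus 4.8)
The plan is to establish \NP-completeness by a reduction from {\sc Hypergraph 2-Colourability}, the same problem underlying the hardness results in Section~\ref{s-hard}. Membership in \NP\ is immediate, since a proposed $P_7$-witness structure can be checked against its three defining conditions in polynomial time. For hardness the target graph must be $K_{1,3}$-free, and I would obtain this \emph{for free} by building the instance as a line graph $L(G')$ of an auxiliary graph $G'$: every line graph is $K_{1,3}$-free, as noted in Section~\ref{s-gt}, so claw-freeness will never need a separate argument.

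The reduction rests on translating contraction in a line graph into a purely combinatorial condition on $G'$. Identifying the vertices of $L(G')$ with the edges of $G'$, a partition into witness bags $W(p_1),\dots,W(p_7)$ is the same as a partition of $E(G')$ into seven classes $F_1,\dots,F_7$ such that each $F_i$ induces a connected subgraph of $G'$, consecutive classes $F_i,F_{i+1}$ share an endpoint, and non-consecutive classes $F_i,F_j$ with $|i-j|\ge 2$ span vertex-disjoint subgraphs. Thus $L(G')$ contracts to $P_7$ if and only if $E(G')$ admits such a linear, pairwise-far-disjoint decomposition into seven connected edge classes, and designing $G'$ amounts to designing a graph whose edge set supports this decomposition exactly when $\mathcal{H}$ is $2$-colourable.

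Following the spirit of the Brouwer--Veldman gadget $G_{\mathcal{H}}$ behind Lemma~\ref{l-bv87}, I would place a hypergraph-incidence gadget in the centre of $G'$ and arrange two \emph{adjacent} central classes to play the roles of the two colour sides. Since the $F_i$ partition $E(G')$, each element-edge is assigned to exactly one side, giving a well-defined $2$-colouring; the connectivity requirement on each central class, together with the incidence structure, then forces every hyperedge to meet both sides. To stretch the contracted path from the three edges needed for $P_4$ up to the six needed for $P_7$, I would attach two rigid path-like appendages in place of the distinguished vertices $t_1,t_2$, with lengths tuned so that in any valid decomposition their edges fill the outer classes and the incidence gadget is pushed into the middle, pinning the contracted path to exactly seven classes.

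The main obstacle is the rigidity argument: one must show that \emph{every} contraction of $L(G')$ to $P_7$, not merely the intended one, respects this coarse structure, so that the $2$-colouring constraints are captured with no spurious solutions. The crux is to exploit the vertex-disjointness of non-consecutive classes to prove that appendage edges cannot interleave with the central gadget and that no central edge can drift into an outer class, thereby forcing the decomposition onto exactly the seven classes arranged as intended. Once this rigidity is secured, the equivalence with $2$-colourability follows, and the $K_{1,3}$-freeness of the constructed instance requires no further work, being inherited automatically from the passage to a line graph.
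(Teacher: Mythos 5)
Before assessing your argument, note an important point of context: the paper does not prove this lemma at all. It is imported verbatim from~\cite{FKP13} (the paper even remarks that the result holds for line graphs, a subclass of $K_{1,3}$-free graphs), so there is no in-paper proof to compare against, and your proposal must be judged as a self-contained hardness proof. Your high-level strategy does coincide with the route taken in the cited work: establish hardness for line graphs so that $K_{1,3}$-freeness comes for free, use the correspondence between a $P_7$-witness structure of $L(G')$ and a partition of $E(G')$ into seven connected edge classes with the stated adjacency and disjointness pattern, and reduce from {\sc Hypergraph 2-Colourability} in the spirit of the Brouwer--Veldman construction behind Lemma~\ref{l-bv87}. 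That correspondence between contractions of a line graph and edge partitions of the underlying graph is stated correctly in your second paragraph.

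Nevertheless, as a proof the proposal has a genuine gap, and you essentially name it yourself: the graph $G'$ is never constructed, and the rigidity argument --- that \emph{every} $P_7$-witness structure of $L(G')$, not just the intended one, induces a $2$-colouring of ${\cal H}$ --- is not carried out. This cannot be deferred with ``once this rigidity is secured, the equivalence follows'': it \emph{is} the mathematical content of the reduction, exactly as the corresponding rigidity analysis is the bulk of the proof of Theorem~\ref{t-girth} in this paper. In particular, nothing in the proposal explains why the target must be $P_7$ rather than a shorter path, i.e., why two appendages plus two colour-side classes plus whatever absorbs the incidence gadget pin down exactly seven bags; nor how incidence edges are prevented from drifting out of the central classes. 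The latter is precisely where such constructions tend to break, because the vertex-disjointness constraint applies only to \emph{non-consecutive} bags, so an edge incident to a colour-class bag can always escape into that bag's neighbour on the path unless the gadget explicitly forbids it. Without an explicit $G'$ and a case analysis of its witness structures, what you have is a correct plan for a proof --- indeed, essentially the plan followed by~\cite{FKP13} --- but not a proof of the lemma.
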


By using the results from the previous sections and the above result  we can now prove our classification theorem.

\medskip
\noindent
{\bf Theorem~\ref{t-main}. (restated)}
{\it Let $H$ be a graph. If $H$ is an induced subgraph of $P_1+P_5$, $P_1+P_2+P_3$, $P_2+P_4$ or $sP_1+P_4$ for some $s\geq 0$, then {\sc Longest Path Contractibility} restricted to $H$-free graphs is polynomial-time solvable; otherwise it is \NP-complete.}

\begin{proof}
If $H$ is an induced subgraph of $P_1+P_5$, $P_1+P_2+P_3$, $P_2+P_4$ or $sP_1+P_4$ for some $s\geq 0$, then we use Theorems~\ref{t-p2p4}--\ref{t-sp1p4} to find that {\sc Longest Path Contractibility} is polynomial-time solvable for $H$-free graphs. From now on suppose $H$ is not of this form. Below we will prove that in that case {\sc Longest Contractibility} is \NP-complete for $H$-free graphs.

If $H$ contains a cycle, then we apply Corollary~\ref{c-girth} to prove that {\sc Longest Path Contractibility} is \NP-complete for $H$-free graphs. Assume that $H$ is a forest. If $H$ has a vertex of degree at least~3, then the class of $H$-free graphs contains the class of $K_{1,3}$-free graphs. Hence, we can apply Lemma~\ref{l-claw} to find that {\sc Longest Path Contractibility} is \NP-complete for $H$-free graphs.

From now on we assume that $H$ is a linear forest. As $H$ is not an induced subgraph of $sP_1+P_4$, we find that $H$ contains at least one edge. We distinguish three cases.

\medskip
\noindent
{\bf Case 1.} The number of connected components of $H$ is at least~3.\\
First suppose that at least three connected components of $H$ contain an edge. Then $H$ contains an induced $3P_2$. This means we may apply Theorem~\ref{t-hard} to find that {\sc Longest Path Contractibility} is \NP-complete for $H$-free graphs.

Now suppose that exactly two connected components of $H$ contain an edge. If $H$ contains at least four connected components, then $H$ contains an induced $2P_1+2P_2$. This means we may apply Theorem~\ref{t-hard} to find that {\sc Longest Path Contractibility} is \NP-complete for $H$-free graphs. Hence, $H=P_1+P_r+P_s$ for some $2\leq r\leq s$. If $s\geq 4$, then $H$ contains an induced $2P_1+2P_2$. This means we may apply Theorem~\ref{t-hard} to find that {\sc Longest Path Contractibility} is \NP-complete for $H$-free graphs. If $s=3$ and $r=2$, then $H=P_1+P_2+P_3$, a contradiction. If $s=3$ and $r=3$, then $H$ contains an induced $2P_3$. This means we may apply Theorem~\ref{t-hard} to find that {\sc Longest Path Contractibility} is \NP-complete for $H$-free graphs. Hence, $s=2$, and thus $r=2$. Then $H=P_1+2P_2$ is an induced subgraph of $P_1+P_5$, a contradiction.

Finally suppose that exactly one connected component of $H$ contains an edge. Then $H=sP_1+P_r$ for some $r\geq 2$. As $H$ is not an induced subgraph of $sP_1+P_4$, we find that $r\geq 5$. If $r\geq 6$, then $H$ contains an induced $P_6$. This means we may apply Theorem~\ref{t-hard} to find that {\sc Longest Path Contractibility} is \NP-complete for $H$-free graphs. Hence, $r=5$. As $H\neq P_1+P_5$, we find that $s\geq 2$. Then $H=sP_1+P_5$ contains an induced $2P_1+2P_2$. This means we may apply Theorem~\ref{t-hard} to find that {\sc Longest Path Contractibility} is \NP-complete for $H$-free graphs.

\medskip
\noindent
{\bf Case 2.} The number of connected components of $H$ is exactly~2.\\ 
Then $H=P_r+P_s$ for some $r$ and $s$ with $1\leq r\leq s$.  If $r\geq 3$ then $H$ contains an induced $2P_3$, and if $s\geq 6$ then $H$ contains an induced $P_6$. In both cases we apply Theorem~\ref{t-hard} to find that {\sc Longest Path Contractibility} is \NP-complete for $H$-free graphs. From now on assume that $r\leq 2$ and $s\leq 5$. If $s\leq 4$, then $H$ is an induced subgraph of $P_2+P_4$, a contradiction. Hence, $s=5$. If $r=1$, then $H=P_1+P_5$, a contradiction. Thus $r=2$. Then $H$ contains an induced $3P_2$. This means we may apply Theorem~\ref{t-hard} to find that {\sc Longest Path Contractibility} is \NP-complete for $H$-free graphs.

\medskip
\noindent
{\bf Case 3.} The number of connected components of $H$ is exactly~1.\\
If $H=P_r$ for some $r\leq 5$, then we use Theorem~\ref{t-p1p5}. Otherwise $P_6$ is an induced subgraph of~$H$, and we use Theorem~\ref{t-hard}. \qed
\end{proof}

\section{Longest Cycle Contractibility}\label{s-cycle}

The length of of a longest cycle a graph~$G$ can be contracted to is called the {\it co-circularity}~\cite{Bl82} or 
{\it cyclicity}~\cite{Ha99} of~$G$. This leads to the following decision problem.

\problemdef{Longest Cycle Contractibility}{a connected graph $G$ and an integer~$k$.}{does $G$ contain $C_k$ as a contraction?}
Hammack proved that  {\sc Longest Cycle Contractibility} is \NP-complete for general graphs~\cite{Ha02} but polynomial-time solvable for planar graphs~\cite{Ha99}. It is also known that $C_6$-{\sc Contractibility}, and thus {\sc Longest Cycle Contractibility}, is \NP-complete for $K_{1,3}$-free graphs~\cite{FKP13} and bipartite graphs~\cite{DP17}, and thus for $C_r$-free graphs if $r$ is odd.
The purpose of this section is to show that the complexities of {\sc Longest Cycle Contractibility} and {\sc Longest Path Contractibility} may not coincide on $H$-free graphs. 

For a given hypergraph~${\cal H}=(Q,{\cal S})$ we first construct the graph $G_{\cal H}$ as before. We then add an edge between vertices~$t_1$ and $t_2$. This yields the graph~$G_{\cal H}'$.
We need the following result from~\cite{BV87}.

\begin{lemma}[\cite{BV87}]\label{l-bv87b}
A hypergraph ${\cal H}$ has a $2$-colouring if and only if $G_{\cal H}'$ has $C_4$ as a contraction.
\end{lemma}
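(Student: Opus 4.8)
The plan is to prove the two directions separately, exploiting the structural parallel between the two modifications involved. Recall that $G_{\cal H}'$ is obtained from $G_{\cal H}$ by adding the single edge $t_1t_2$, and that $C_4$ (with vertices $c_1,c_2,c_3,c_4$ in cyclic order) is obtained from $P_4=p_1p_2p_3p_4$ by adding the single edge $p_1p_4$ between its two endpoints. Throughout I would keep in mind that a $C_4$-witness structure is a $P_4$-witness structure with one extra (wrap-around) adjacency, and that the extra edge $t_1t_2$ is exactly what can realise that adjacency.

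For the forward direction, suppose $(Q_1,Q_2)$ is a $2$-colouring of ${\cal H}$. I would write down the explicit witness structure $W(c_1)=\{t_1\}$, $W(c_2)={\cal S}\cup Q_1$, $W(c_3)={\cal S}'\cup Q_2$ and $W(c_4)=\{t_2\}$, and verify the three witness conditions. Connectivity of $W(c_2)$ and $W(c_3)$ uses precisely that every hyperedge $S_j$ meets both $Q_1$ and $Q_2$, so that each $S_j$ reaches the clique $Q_1$ and each $S_j'$ reaches the clique $Q_2$, the cliqueness of $Q_1$ and $Q_2$ keeping these bags internally connected. The four cyclic adjacencies arise from $t_1\sim{\cal S}$, from ${\cal S}$ being complete to ${\cal S}'$, from $t_2\sim{\cal S}'$, and from the new edge $t_1t_2$ supplying the wrap-around adjacency $W(c_4)\sim W(c_1)$. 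The two required non-adjacencies $W(c_1)\not\sim W(c_3)$ and $W(c_2)\not\sim W(c_4)$ hold because $t_1$ has no neighbour in ${\cal S}'\cup Q$ and $t_2$ has no neighbour in ${\cal S}\cup Q$. This is the same verification as in the forward construction for $P_4$, with $t_1t_2$ providing exactly the missing edge that closes the path into a cycle.

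The backward direction is the crux, and here I would use Lemma~\ref{l-bv87}. Suppose $G_{\cal H}'$ contracts to $C_4$ with cyclic witness structure $\{W_1,W_2,W_3,W_4\}$. The first observation is that $t_1$ and $t_2$ cannot lie in opposite bags $W_i,W_{i+2}$, since the edge $t_1t_2$ would then force $W_i\sim W_{i+2}$, contradicting non-adjacency in $C_4$. The clean case is when $t_1t_2$ is the \emph{only} edge between the two (adjacent) bags containing $t_1$ and $t_2$, say $t_1\in W_1$ and $t_2\in W_2$: deleting $t_1t_2$ then destroys the $W_1$--$W_2$ adjacency but leaves every bag connected and every other adjacency intact, so $\{W_1,W_2,W_3,W_4\}$ becomes a $P_4$-witness structure of $G_{\cal H}=G_{\cal H}'-t_1t_2$ along the path $W_2W_3W_4W_1$, and Lemma~\ref{l-bv87} delivers the $2$-colouring. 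The main obstacle is the degenerate placements where this reduction fails, namely $t_1,t_2$ in a common bag, or additional edges between their bags. Here I would argue directly from the fact that $Q$ is a clique and therefore occupies at most two cyclically adjacent bags $\{W_a,W_b\}$: I would read off the colouring $Q_1=Q\cap W_a$ and $Q_2=Q\cap W_b$, and show, using the connectivity of the bags carrying ${\cal S}$ and ${\cal S}'$ together with $t_1\sim{\cal S}$, $t_2\sim{\cal S}'$ and the distinguished vertex $S_n$ with $S_n=Q$, that no hyperedge can be monochromatic and that $Q$ cannot collapse into a single bag. I expect essentially all of the work to be in this case analysis and in checking that both colour classes are non-empty.
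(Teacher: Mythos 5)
First, a remark on the comparison itself: the paper does not prove this lemma --- it is quoted from~\cite{BV87} --- so your attempt has to stand on its own. Your forward direction does stand: the witness structure $W(c_1)=\{t_1\}$, $W(c_2)={\cal S}\cup Q_1$, $W(c_3)={\cal S}'\cup Q_2$, $W(c_4)=\{t_2\}$ is correct, and your verification (connectivity of the two middle bags from the $2$-colouring property and the cliques $Q_1,Q_2$; the wrap-around adjacency from the new edge $t_1t_2$; the two non-adjacencies from $N(t_1)={\cal S}\cup\{t_2\}$ and $N(t_2)={\cal S}'\cup\{t_1\}$) is exactly right. In the backward direction your first two steps are also sound: $t_1$ and $t_2$ cannot lie in opposite bags, and in the ``clean'' case deleting $t_1t_2$ converts the $C_4$-witness structure into a $P_4$-witness structure of $G_{\cal H}$, so Lemma~\ref{l-bv87} finishes.

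The genuine gap is the degenerate case, which you explicitly defer and never carry out, while acknowledging that essentially all of the work sits there; as written, the backward direction is therefore unproven. Worse, the plan you sketch for it is not executable as stated: if $t_1$ and $t_2$ share a bag, nothing guarantees that $Q$ meets two bags at all (a priori $Q$ could lie inside a single bag), so there is no pair $(Q\cap W_a,\,Q\cap W_b)$ to read a colouring off from --- what is needed in the degenerate configurations is a \emph{contradiction}, not a colouring. Indeed, the actual resolution is that degenerate placements are impossible, so every witness structure is clean. Concretely, suppose $t_1\in W_1$, $t_2\in W_2$ with $W_1\sim W_2$. Note that ${\cal S}\cap W_3=\emptyset$ and ${\cal S}'\cap W_4=\emptyset$ always (every vertex of ${\cal S}$ is adjacent to $t_1$ and $W_1\not\sim W_3$; symmetrically for ${\cal S}'$). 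If $|W_1|\geq 2$, then connectivity of $W_1$ and $N(t_1)={\cal S}\cup\{t_2\}$ force some $S_a\in{\cal S}$ into $W_1$, and completeness of ${\cal S}$ to ${\cal S}'$ with $W_1\not\sim W_3$ gives ${\cal S}'\cap W_3=\emptyset$. If moreover $|W_2|\geq 2$, one gets symmetrically ${\cal S}\cap W_4=\emptyset$, so $W_3$ and $W_4$ are nonempty subsets of $Q$; but then the vertex $S_n$, which is adjacent to \emph{all} of $Q$ because $S_n=Q$, lies in $W_1\cup W_2$ and is adjacent to both $W_3$ and $W_4$, violating one of the two non-adjacencies of $C_4$. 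If instead $|W_2|=1$, then $W_2=\{t_2\}$ must be adjacent to $W_3$, which is impossible since $N(t_2)={\cal S}'\cup\{t_1\}$ while ${\cal S}'\cap W_3=\emptyset$ and $t_1\in W_1$; the case $|W_1|=1$, $|W_2|\geq 2$ is symmetric. A similar argument using $S_n$ and $S_n'$ rules out $t_1,t_2$ sharing a bag. Only after this analysis does your clean-case reduction to Lemma~\ref{l-bv87} cover all cases; supplying it (or an equivalent argument) is precisely the missing content of your proof.
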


We now prove the following lemma.

\begin{lemma}\label{l-p2p4again}
Let ${\cal H}$ be a hypergraph. Then the graph $G_{\cal H}'$ is $(P_2+P_4)$-free.
\end{lemma}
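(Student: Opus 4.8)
The plan is to exploit the fact that adding the edge $t_1t_2$ makes a large part of $G_{\cal H}'$ complete bipartite. Concretely, I would first establish the key structural observation that the set $X={\cal S}\cup {\cal S}'\cup \{t_1,t_2\}$ induces a \emph{complete bipartite} graph in $G_{\cal H}'$, with partition classes $A={\cal S}\cup\{t_2\}$ and $B={\cal S}'\cup\{t_1\}$. Indeed, ${\cal S}$ and ${\cal S}'$ are independent, $t_1$ is anticomplete to ${\cal S}'$ and $t_2$ is anticomplete to ${\cal S}$, so $A$ and $B$ are both independent; moreover ${\cal S}$ is complete to ${\cal S}'$, $t_1$ is complete to ${\cal S}$, $t_2$ is complete to ${\cal S}'$, and $t_1t_2\in E(G_{\cal H}')$, so $A$ is complete to $B$. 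It is precisely the new edge $t_1t_2$ that turns $X$ into a complete bipartite graph. I would then record the three consequences I need: $G_{\cal H}'[X]$ contains no induced $P_4$, no induced $2P_2$, and no induced $P_1+P_2$ (in a complete bipartite graph every vertex is adjacent to at least one endpoint of every edge, so an isolated vertex plus a disjoint edge cannot occur).

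Next I would argue by contradiction. Suppose $G_{\cal H}'$ contains an induced $P_2+P_4$ on a vertex set $W$, where the $P_2$ is an edge $\{a,b\}=D_1$ and the $P_4$ is a path $D_2=c_1c_2c_3c_4$, with $D_1$ anticomplete to $D_2$. Since $P_2+P_4$ is a linear forest, it is triangle-free, whereas $Q$ is a clique; hence $|Q\cap W|\le 2$, and if $|Q\cap W|=2$, then the two vertices of $Q\cap W$ must be adjacent in $G_{\cal H}'[W]$, that is, they form one of the edges of the $P_2+P_4$. All remaining vertices of $W$ lie in $X$.

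The proof then splits according to $|Q\cap W|\in\{0,1,2\}$. If $|Q\cap W|=0$, then $W\subseteq X$, but $G_{\cal H}'[W]=P_2+P_4$ contains an induced $2P_2$ (the edge $ab$ together with $c_1c_2$), contradicting that $X$ is $2P_2$-free. If $|Q\cap W|=1$ with the single $Q$-vertex $q\in D_1$, then $D_2\subseteq X$ induces a forbidden $P_4$; if instead $q\in D_2$, then $W\setminus\{q\}\subseteq X$ still contains the edge $ab$ together with an edge of $D_2\setminus\{q\}$ (which is a $P_3$ or a $P_1+P_2$, hence always contains an edge) anticomplete to $ab$, giving a forbidden $2P_2$. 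Finally, if $|Q\cap W|=2$, the two $Q$-vertices form an edge: if they are $\{a,b\}$, then $D_2\subseteq X$ is a forbidden $P_4$, and if they form an \emph{end} edge of the $P_4$, say $c_1c_2$, then $\{a,b,c_3,c_4\}\subseteq X$ induces a forbidden $2P_2$.

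The one genuinely delicate case, and the step I expect to be the main obstacle, is $|Q\cap W|=2$ with the two $Q$-vertices forming the \emph{middle} edge $c_2c_3$ of the $P_4$: here $ab$ is an edge in $X$, but neither endpoint $c_1$ nor $c_4$ of the path is adjacent to $\{a,b\}$ nor to each other, so I cannot directly produce a $2P_2$. This is exactly where the third consequence above is used: $\{a,b,c_1\}\subseteq X$ induces a $P_1+P_2$, which is forbidden in a complete bipartite graph. With this the last case closes, every case yields a contradiction, and $(P_2+P_4)$-freeness follows. Throughout, the only input is the adjacency description of $G_{\cal H}'$; notably the split-graph structure from Lemma~\ref{l-split} is not needed here, since the complete-bipartiteness of $X$ already supplies all the forbidden configurations.
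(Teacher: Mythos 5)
Your proof is correct. It hinges on the same key observation as the paper's proof, namely that the added edge $t_1t_2$ makes $X={\cal S}\cup{\cal S}'\cup\{t_1,t_2\}$ induce a complete bipartite graph in $G_{\cal H}'$, but the two arguments are organized quite differently. The paper argues in one pass: since $G_{\cal H}'[X]$ is $P_4$-free, the $P_4$-component $D_2$ must contain a vertex of $Q$, and (being connected, while $Q$ is anticomplete to $\{t_1,t_2\}$) also a vertex of ${\cal S}\cup{\cal S}'$, without loss of generality of ${\cal S}$; then the $P_2$-component $D_1$ must avoid $Q$ (a clique meeting $D_2$) as well as ${\cal S}'$ and $t_1$ (both complete to the ${\cal S}$-vertex of $D_2$), so $D_1$ lies inside the independent set ${\cal S}\cup\{t_2\}$, which is absurd for an edge. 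You instead enumerate $|Q\cap W|\in\{0,1,2\}$ and reduce every case to one of three forbidden induced subgraphs of a complete bipartite graph: $P_4$, $2P_2$ and $P_1+P_2$. Your route is longer but buys a few things: it needs no symmetry/WLOG step between ${\cal S}$ and ${\cal S}'$; it isolates explicitly the one delicate configuration (both $Q$-vertices forming the middle edge of the $P_4$), which the paper's independent-set argument absorbs without comment; and it sidesteps a minor imprecision in the paper, whose intermediate claim that $D_2$ must contain at least two vertices of ${\cal S}\cup{\cal S}'$ is not literally true --- $D_2$ could be an induced path $q\,S_j\,t_1\,t_2$ containing only one such vertex --- although the paper's proof is unaffected, since it only ever uses one vertex of ${\cal S}$.
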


\begin{proof}
For contradiction, assume that $G_{\cal H}'$ contains a subgraph $H$ isomorphic to $P_2+P_4$. Let $D_1$ be the connected component of $H$ on two vertices, and let $D_2$ be the connected components of $H$ on four vertices.
As the subgraph of $G_{\cal H}'$ induced by ${\cal S}\cup {\cal S}'\cup \{t_1,t_2\}$ is complete bipartite and thus $P_4$-free, we find that $D_2$ must contains a vertex of $Q$. As $Q$ is a clique, $D_2$ must also contain at least two vertices from ${\cal S}\cup {\cal S}'$. We may without loss of generality assume that $D_2$ contains a vertex from ${\cal S}$. 
This means that $D_1$ contains no vertex from $Q\cup {\cal S}'\cup \{t_1\}$. Hence, $D_1$ only contains vertices of ${\cal S}\cup \{t_2\}$. As the latter set is independent, this is not possible.
We conclude that $G_{\cal H}'$ is $(P_2+P_4)$-free. \qed
\end{proof}

We note that, in line with our polynomial-time result of {\sc Longest Path Contractibility} for $(P_2+P_4)$-free graphs (Theorem~\ref{t-p2p4}),  the graph $G_{\cal H}$ may not be $(P_2+P_4)$-free: as $t_1$ and $t_2$ are not adjacent in $G_{\cal H}$, two vertices of $Q$ together with vertices $t_1,S_j,S_\ell',t_2$ may form an induced $P_2+P_4$ in $G_{\cal H}$.

It is readily seen that $C_4$-{\sc Contractibility} belongs to \NP. Hence, we obtain the following result from Lemmas~\ref{l-bv87b} and~\ref{l-p2p4again}.

\begin{theorem}\label{t-cycle}
{\sc $C_4$-Contractibility} is \NP-complete for $(P_2+P_4)$-free graphs.
\end{theorem}

Theorem~\ref{t-cycle} has the following consequence.

\begin{corollary}\label{c-cycle}
{\sc Longest Cycle Contractibility} is \NP-complete for $(P_2+P_4)$-free graphs.
\end{corollary}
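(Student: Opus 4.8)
The plan is to observe that Corollary~\ref{c-cycle} is an almost immediate consequence of Theorem~\ref{t-cycle}, since {\sc Longest Cycle Contractibility} contains $C_4$-{\sc Contractibility} as exactly the special case obtained by fixing the input integer to $k=4$. First I would verify membership in \NP: a yes-certificate for an instance $(G,k)$ is a $C_k$-witness structure of $G$, and checking the three defining conditions of a witness structure (connectedness of each bag, that the bags partition $V(G)$, and that adjacency between bags matches adjacency in $C_k$) can all be carried out in polynomial time.

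For \NP-hardness I would give a trivial, essentially identity, reduction from $C_4$-{\sc Contractibility} restricted to $(P_2+P_4)$-free graphs, which Theorem~\ref{t-cycle} guarantees to be \NP-complete. Given a connected $(P_2+P_4)$-free graph $G$ — concretely, a graph of the form $G_{\cal H}'$ used in the proof of Theorem~\ref{t-cycle} — I would map it to the instance $(G,4)$ of {\sc Longest Cycle Contractibility}. Then $G$ contains $C_4$ as a contraction if and only if $(G,4)$ is a yes-instance, so the reduction is correct, and it is clearly computable in polynomial time. As the input graph is unchanged by the reduction, the restriction to $(P_2+P_4)$-free graphs is automatically preserved.

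There is no genuine obstacle here: the substantive work has already been completed in Lemma~\ref{l-p2p4again} (establishing that $G_{\cal H}'$ is $(P_2+P_4)$-free) and Lemma~\ref{l-bv87b} (the correspondence between $2$-colourings of ${\cal H}$ and contractions of $G_{\cal H}'$ to $C_4$), which together furnish Theorem~\ref{t-cycle}. The only point worth flagging is the mild distinction in problem formulation: {\sc Longest Cycle Contractibility} takes $k$ as part of the input, whereas $C_4$-{\sc Contractibility} has $k$ fixed, and the reduction simply appends the value $k=4$. Hence the corollary follows directly.
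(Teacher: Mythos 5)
Your proposal is correct and matches the paper's approach: the paper derives Corollary~\ref{c-cycle} as an immediate consequence of Theorem~\ref{t-cycle}, exactly via the observation that fixing $k=4$ in {\sc Longest Cycle Contractibility} yields $C_4$-{\sc Contractibility}, with all substantive work already done in Lemmas~\ref{l-bv87b} and~\ref{l-p2p4again}. Your additional remarks on \NP{} membership and on the input-versus-fixed-$k$ distinction are accurate but only make explicit what the paper leaves implicit.
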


Recall that {\sc Longest Path Contractibility} is polynomial-time solvable for $(P_2+P_4)$-free graphs by Theorem~\ref{t-main}. Hence, combining this result with Corollary~\ref{c-cycle} shows that the two problems behave differently on $(P_2+P_4)$-free graphs.

\section{Conclusions}\label{s-con}

We completely classified the complexities of {\sc Longest Induced Path} and {\sc Longest Path Contractibility} problem for $H$-free graphs. Such a classification is still open for {\sc Longest Path} and below we briefly present the state of art.

A graph is {\it chordal bipartite} if it is bipartite and every induced cycle has length~4. In other words, a graph is chordal bipartite if and only if it is $(C_3,C_5,C_6,\ldots)$-free.
A direct consequence of the \NP-hardness of {\sc Hamiltonian Path} for chordal bipartite graphs and strongly chordal split graphs~\cite{Mu96}, or equivalently, strongly chordal $(2P_2,C_4,C_5)$-free graphs~\cite{FH77}
is that 
{\sc Hamiltonian Path}, and
therefore, {\sc Longest Path} is \NP-complete for $H$-free graphs if $H$ has a cycle or contains an induced $2P_2$. The \NP-hardness of {\sc Hamiltonian Path} for line graphs~\cite{Be81}, and thus for $K_{1,3}$-free graphs, implies the same result for $H$-free graphs if $H$ is a forest with a vertex of degree at least~3. On the positive side, {\sc Longest Path} is polynomial-time solvable for $P_4$-free graphs due to the corresponding result for its superclass of cocomparability graphs~\cite{IN13,MC12}.
This leaves open the following cases.
\begin{oproblem}
Determine the computational complexity of {\sc Longest Path} for $H$-free graphs when:
\begin{itemize}
\item $H=sP_1+P_r$ for $3\leq r\leq 4$ and $s\geq 1$
\item  $H=sP_1+P_2$  for $s\geq 2$
\item  $H=sP_1$  for
$s\geq 3$.
\end{itemize}
\end{oproblem}

We showed that the complexities of {\sc Longest Cycle Contractibility} and {\sc Longest Path Contractibility} do not coincide for $H$-free graphs. However, the complexity of {\sc Longest Cycle Contractibility} for $H$-free graphs has not been settled yet. 
For instance, if $H$ is a cycle, the cases $H=C_4$ and $H=C_6$ are still open.

\bibliographystyle{abbrv}

\end{document}